\newcommand{\set}[1]{\left\{#1\right\}}
\newtheorem{theorem}{Theorem}
\newtheorem{lemma}[theorem]{Lemma}
\newtheorem{claim}[theorem]{Claim}
\newtheorem{fact}[theorem]{Fact}
\def\expec#1#2{{\mathbb{E}}_{#1}\left[ #2 \right]}
\newcommand{\E}{\mbox{{\bf E}}}
\def\defeq{\stackrel{\mathrm{def}}{=}}
\def\R{\mathbb{R}}
\newcommand{\beq}{\begin{equation}}
\newcommand{\eeq}{\end{equation}}
\renewcommand{\P}{{\mathcal{F}}}
\pretocmd\@bibitem{\color{black}\csname keycolor#1\endcsname}{}{\fail}
\newcommand\citecolor[1]{\@namedef{keycolor#1}{\color{blue}}}
\newcommand{\cov}{\text{Cov}}
\begin{document}

\title{
Improved Approximations for Min Sum Vertex Cover and Generalized Min Sum Set Cover
}
\date{}
\author{
Nikhil Bansal\thanks{CWI and TU Eindhoven, the Netherlands.
\texttt{bansal@gmail.com}. Supported by a NWO Vici grant 639.023.812.}
\and
Jatin Batra\thanks{CWI, the Netherlands.
\texttt{jatinbatra50@gmail.com}. Supported by a NWO Vici grant 639.023.812.}
\and
Majid Farhadi\thanks{Georgia Institute of Technology.
\texttt{farhadi@gatech.edu}.
Supported in part by the ACO Ph.D. Program.}
\and
Prasad Tetali\thanks{Georgia Institute of Technology.
\texttt{tetali@math.gatech.edu}.
Supported in part by the NSF grants DMS-1811935 and NSF TRIPODS-1740776.}
}

%\begin{titlepage}

\maketitle

\begin{abstract}
We study the \emph{generalized min sum set cover} (GMSSC) problem, wherein given a collection of hyperedges $E$ with arbitrary covering requirements $\set{k_e\in Z^+: e \in E}$, the goal is to find an ordering of the vertices to minimize the total cover time of the hyperedges; a hyperedge $e$ is considered covered by the first time when $k_e$ many of its vertices appear in the ordering.

\medskip

We give a $4.642$ approximation algorithm for GMSSC, coming close to the best possible bound of $4$, already for the classical special case (with all $k_e=1$) of \emph{min sum set cover} (MSSC)  
studied by Feige, Lov\'{a}sz and Tetali \cite{FLT04}, and improving upon the previous best known bound of $12.4$  due to Im, Sviridenko and van der Zwaan \cite{ISV14}. Our algorithm is based on transforming the LP solution by a suitable kernel and applying randomized rounding. This also gives an LP-based $4$ approximation for MSSC. As part of the analysis of our algorithm, we also derive an  inequality on the lower tail of a sum of independent Bernoulli random variables, which might be of independent interest and broader utility.

\medskip

Another well-known special case is the \emph{min sum vertex cover} (MSVC) problem, in which the input hypergraph is a graph (i.e., $|e| = 2$) and $k_e = 1$, for every edge $e \in E$. We give a $16/9 \simeq 1.778$ approximation for MSVC, and show a matching integrality gap for the natural LP relaxation. This improves upon the previous best $1.999946$ approximation of Barenholz, Feige and Peleg \cite{BFP06}. (The claimed $1.79$ approximation result of Iwata, Tetali and Tripathi \cite{ITT12} for the MSVC turned out have an unfortunate, seemingly unfixable, mistake in it.)

\medskip

Finally, we revisit MSSC and consider the $\ell_p$ norm of cover-time of the hyperedges.
Using a dual fitting argument, 
we show that the natural greedy algorithm achieves tight, up to NP-hardness, approximation guarantees of $(p+1)^{1+1/p}$, for all $p\ge 1$.
For $p=1$, this gives yet another proof of the $4$ approximation for MSSC.

\end{abstract}

%\end{titlepage}

%\newpage
\section{Introduction}
In the {\em min sum set cover problem (MSSC)}, formally introduced by Feige, Lov\'asz and Tetali \cite{FLT04}, given a collection of sets whose union is $V$, one seeks an ordering of the elements of $V$ so as to minimize the sum of the cover times of the sets. Feige et al.\ showed that a natural greedy algorithm provides a factor 4 approximation (a result implicit in the prior work of Bar-Noy et al.\ \cite{BBHST98}, albeit with a more complicated proof), and also established that it is hard to approximate it to within a ratio $4-\epsilon$, for every $\epsilon>0$.

Azar, Gamzu and Yin~\cite{AGY09} introduced a significant generalization of MSSC as the multiple intents re-ranking or the \textbf{generalized min sum set cover (GMSSC)} problem. In GMSSC, the input consists of a hypergraph $H = (V, E)$ where every hyperedge has a covering requirement $k_e$ where $k_e \leq |e|$. The output is an ordering of the vertices, i.e., an assignment of $V$ to time slots $1,\ldots,n$. We say that $e$ is covered at time $t$, if $t$ is the earliest time by which $k_e$ of the vertices contained in $e$ have appeared in the ordering. The objective is to minimize the total cover-time of all the hyperedges. Note that the case of $k_e=1$, for all $e\in E$, corresponds to MSSC. GMSSC has applications in query results diversification and broadcast scheduling among others\cite{AGY09,tsaparas2011selecting,kofler2014intent,kofler2015user}.

Azar et al.\ provided an $O(\log r)$ approximation for the problem, where $r=\max_e k_e$. Bansal, Gupta, and Krishnaswamy~\cite{BGK10} were the first to provide a constant-factor (of $485$) approximation for GMSSC by
introducing a strong LP formulation using the so-called Knapsack Cover (KC) inequalities.
Their approximation bound was  improved to $28$, by Skutella and Williamson~\cite{SW11}, using the idea of $\alpha$-point rounding and other technical enhancements. This has since been improved to 12.4  by Im, Sviridenko and van der Zwaan \cite{ISV14}. They introduce a different configuration LP, and show how to obtain a {\em preemptive} schedule from this LP losing a factor $2$, and then round it to obtain the final solution at another factor $6.2$ loss. In the quest for the best possible $4$ approximation, they conjecture that any preemptive schedule can be rounded to an integral solution at factor $2$ loss.

\textbf{Min sum vertex cover (MSVC)} is a well-known special case of MSSC, in which the hypergraph is a graph,
i.e. $|e| = 2$ and $k_e = 1$,  $\forall e \in E$.
MSVC first arose in optimizing matrix computations involved in the analysis of a heuristic to speed up semidefinite program solvers~\cite{BM01}. Feige et al.~\cite{FLT04}, provided a randomized $2$-approximation algorithm for MSVC, which was later  improved to $1.999946$ by Barenholz, Feige, and Peleg~\cite{BFP06}. It should be noted that a 
1.79-approximation result for MSVC, reported in \cite{ITT12}, turned out to have  a crucial error in the proof of Lemma~1 of \cite{ITT12} according to its authors, thus invalidating the claimed 1.79 bound.

\subsection{Our results and techniques}

Despite the only lower bound on the approximability of GMSSC being 4 for the special case of MSSC by Feige et al., the case of arbitrary covering requirements has remained difficult to approximate within a factor close to $4$. We obtain the following result for GMSSC.

\begin{theorem}\label{thm:gmsscintro}
There is a polynomial time $4.642$-approximation algorithm for the generalized min sum set cover (GMSSC) problem.
\end{theorem}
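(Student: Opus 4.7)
My starting point is the time-indexed LP relaxation of GMSSC with variables $x_{v,t}$ for $v \in V$ and $t \in [n]$ indicating whether vertex $v$ is placed in slot $t$, strengthened by the knapsack-cover inequalities of Bansal-Gupta-Krishnaswamy to handle arbitrary covering requirements $k_e$. This LP admits polynomial separation and yields a fractional schedule $x$ together with fractional cover-times $C_e^{\mathrm{LP}}$ satisfying $\sum_e C_e^{\mathrm{LP}} \le \mathrm{OPT}$.

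The next step is to convert $x$ into a smoother profile $\tilde x$ by convolving in the time coordinate with a carefully chosen kernel $K$ of bounded relative width $\beta$; informally, each unit of mass $x_{v,t}$ is redistributed backward over an interval proportional to $t$ according to the density $K$. The transformed $\tilde x$ remains a feasible fractional schedule and inflates each fractional cover-time by at most a factor depending on $K$, but the key gain is that for every hyperedge $e$ the contributing mass is now guaranteed to be spread across many vertices of $e$, which is exactly what makes independent rounding concentrate.

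I then apply randomized rounding: independently for each slot $t$ and vertex $v$, tentatively flag $v$ at $t$ with probability proportional to $\tilde x_{v,t}$, and realize the ordering by placing flagged vertices in time order with any consistent tie-breaking. To bound $\E[C_e^{\mathrm{alg}}]$ for a fixed hyperedge $e$, I integrate $\Pr[\text{fewer than }k_e\text{ vertices of }e\text{ placed by time }t]$ over $t$. Each such probability is a lower-tail event on a sum of independent Bernoullis $X_1,\dots,X_{|e|}$ whose mean $\mu_t$ grows linearly in $t$. The key new ingredient, advertised in the abstract, is a sharp lower-tail inequality of the form $\Pr[\sum_i X_i \ge k_e] \ge g(\mu_t / k_e)$ that beats Chernoff in the critical regime $\mu_t \approx k_e$ uniformly in $k_e$.

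The main obstacle will be the joint design of the kernel $K$ and the tail function $g$, together with optimizing the rounding rate, so that the combined loss factor after integrating over $t$ collapses to a single universal constant $4.642$ across all hyperedges and all covering requirements. The tension is that for large $k_e$ concentration is essentially free and the ratio degenerates to near $1$, whereas for $k_e = 1$ the argument must recover something close to the tight Feige-Lov\'asz-Tetali factor of $4$; the kernel parameters must interpolate these extremes while producing a single convex bound, and the constant $4.642$ is presumably the numerical optimum of this tradeoff. Summing the resulting per-edge bound $\E[C_e^{\mathrm{alg}}] \le 4.642\, C_e^{\mathrm{LP}}$ over $e \in E$ and derandomizing by the method of conditional expectations completes the proof.
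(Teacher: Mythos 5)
Your high-level outline matches the paper's architecture (KC-strengthened LP, kernel transform, randomized rounding driven by a sharp Bernoulli lower-tail bound), but several of your intermediate claims are false as stated, and one of them reflects a misunderstanding that would derail the argument.

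First, the transformed solution $z = Kx$ is deliberately \emph{not} a feasible fractional schedule, and this is the whole point of the kernel. With $K(t,t') = \beta/t$ for $t' \le t$, the per-slot load becomes $\sum_v z_{v,t} \le \beta > 1$, and worse, the total mass $\sum_t z_{v,t}$ assigned to a single vertex $v$ diverges. Your statement that $\tilde x$ ``remains a feasible fractional schedule'' is wrong, and if you tried to preserve feasibility you would lose the property you actually need: that the cumulative $z_{v,<t}$ eventually exceeds $1$ for every $v$ with any positive LP mass, so that $\alpha$-point rounding is guaranteed to place every vertex. Relatedly, the mass is spread \emph{forward} (a unit of $x_{v,t'}$ contributes to $z_{v,t}$ for all $t \ge t'$ at rate $\beta/t$), not backward as you write.

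Second, you say the transform ``inflates each fractional cover-time by at most a factor depending on $K$,'' but since $z$ is not feasible this quantity is not directly meaningful; the paper instead defines $c_z(e) = t_e + \sum_{t > t_e} P(z_{e,<t})$ as an upper bound on the expected cover time in the \emph{tentative} (overloaded) schedule, and the hard part is proving $c_z(e) \le r(\beta)\,c_x(e)$. This is done by reducing to a convex-over-linear optimization over the simplex and arguing the worst case is at an extreme point (at most two nonzero coordinates, because of the extra constraint defining $t^*$); without this reduction you have no route to the tight numerical constant. Your plan never sets up this optimization.

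Third, the proposal omits an entire half of the analysis: the tentative schedule $\tau$ can put multiple vertices in the same slot, and one must convert to a proper ordering $\sigma$ and show $\E[\cov_\sigma(e)] \le \beta\, c_z(e)$. For GMSSC this requires conditioning on the cover-time event $E_t$, using that $\sum_v z_{v,\le t} \le \beta t$, and crucially invoking the KC inequality through the kernel (the inequality $\sum_{v \in B_e(t)} z_{v,<t} \ge k_e(t)\, z_{e,<t}$) to control how many non-$e$ vertices can sneak in ahead. Your sketch integrates a tail probability over $t$ but treats this as if the tentative placement directly gives the final cover time; without the $\tau\to\sigma$ step and its $\beta$-factor loss, the two factors of $\beta$ and $r(\beta)$ do not appear and the numerics do not close. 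Finally, derandomization is not actually used in the paper and would need separate justification given the continuous $\alpha_v$'s, but that is a minor point compared to the above.
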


Numerical evidence suggests that the approximation ratio of our algorithm is no more than $4.5232$. We also note that the approach of~\cite{ISV14} cannot give an approximation better than $8$. In particular, we give a counter-example to their conjecture on the gap between a preemptive schedule and a non-preemptive one.

All known approaches for GMSSC start with obtaining a fractional solution to some strengthening of the natural LP (which has unbounded integrality gap~\cite{BGK10}). Applying a simple randomized rounding however, fails even for $k_e=1$ as the LP may cover a hyperedge $e$ at time $t$ by scheduling $r$ of its vertices to extent $1/r$, while randomized rounding will leave it uncovered with probability $(1-1/r)^r \approx 1/e$  for times much greater than $t$. Therefore, all approaches first modify the LP solution in some way, to increase the probability of a hyperedge being covered early relative to its LP cost.

\noindent\textbf{Our approach.} The key to our results is to use a careful linear transformation to perform such a modification. Given a fractional assignment $x_{v,t}$ of vertices to time-slots, we obtain a new solution $z_{v,t}$  by applying the linear transformation $K$, referred to henceforth as the \emph{kernel},
\begin{equation} 
\label{eq:kernelintro} z_{v,t} =\sum_{t'} K(t,t')  x_{v,t'}\,.
\end{equation}
We then apply a standard randomized rounding ($\alpha$-point rounding) to $z$.
To prove Theorem~\ref{thm:gmsscintro}, we use the kernel $K(t,t') = \beta/t$ (for $t \geq t'$), for a suitable constant $\beta$. 
To see the idea, suppose for some $v$, $x_{v,1} > 0$ and $x_{v,t} = 0$ for $t>1$. Then, $z_{v,t} \approx \beta x_{v,1}/t$. Roughly speaking, the solution $x$ is ``spread" to the right at a rate $\beta/t$, and eventually the cumulative amount of $v$ scheduled by $z$ will be arbitrarily large.

Choosing $\beta=2$ gives a new LP rounding based $4$-approximation for MSSC, refining the result in \cite{FLT04}, as the guarantee there was with respect to the integral optimum solution.
\begin{theorem}\label{thm:msscintro}
There is an LP-rounding based $4$-approximation algorithm for MSSC.
\end{theorem}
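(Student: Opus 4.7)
The plan is to set up the natural time-indexed LP for MSSC, apply the kernel $z_{v,t} = (2/t)\sum_{s \le t} x_{v,s}$ with $\beta = 2$, and round via independent $\alpha$-point rounding. The LP uses variables $x_{v,t}, y_{e,t} \in [0,1]$ with the standard constraints $\sum_t x_{v,t} = 1$, $\sum_v x_{v,t} \le 1$, and $y_{e,t} \le \sum_{v\in e} X_v(t)$ where $X_v(t) = \sum_{s\le t} x_{v,s}$, minimizing $\sum_e C_e$ with $C_e = \sum_{t\ge 0}(1-y_{e,t})$. The target is to show that the output ordering has expected cover time of every edge at most $4 C_e$.

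After computing $z$ from the LP solution, observe the key structural property $\sum_v z_{v,t} = (2/t)\sum_v X_v(t) \le 2$, so $z$ behaves as a fractional schedule of rate at most $2$. I would then round by sampling $\alpha_v\sim U[0,1]$ independently for each vertex, setting $T_v = \min\{t : Z_v(t) \ge \alpha_v\}$ with $Z_v(t) = \sum_{s \le t} z_{v,s}$, and ordering the vertices by $T_v$ (with a fixed tie-breaking rule), placing them in integer slots $1,\ldots,n$. The rate-$2$ property is what will contribute one of the two factors of $2$ in the final guarantee.

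For a fixed edge $e$, the kernel identity gives $\sum_{v\in e} Z_v(T) = 2\sum_{t'\le T} Y_e(t')/t' \ge 2\sum_{t'\le T} y_{e,t'}/t'$ with $Y_e(t)=\sum_{v\in e} X_v(t)$. Combining with the independence-based bound $\Pr[\min_{v\in e} T_v > T] \le \exp(-\sum_{v\in e} \min(1, Z_v(T)))$ from $1-x\le e^{-x}$, and summing over $T$, I reduce the analysis to the one-variable inequality
\begin{equation*}
\sum_{T\ge 0}\exp\!\left(-2\sum_{t'\le T}\frac{y_{t'}}{t'}\right) \le 2\sum_{T\ge 0}(1-y_T),
\end{equation*}
valid for every non-decreasing $y:\mathbb{Z}_{\ge 0}\to[0,1]$. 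This is tight on the step function $y_t=\mathbf{1}[t\ge t^*]$ (both sides equal $2t^*$); its continuous analogue follows by an integration by parts showing $\int e^{-2F}\,dT = 2\int y\, e^{-2F}\,dT$, where $F(T)=\int_0^T y(s)/s\,ds$, which reduces the claim to the calibration $\int y e^{-2F}\,dT \le \int (1-y)\,dT$. This yields $E[\min_{v\in e} T_v] \le 2 C_e$.

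To lift this to the real cover time of $e$, let $v^* = \arg\min_{v\in e} T_v$; its placement slot in the sorted ordering is at most $|\{v' : T_{v'} \le T_{v^*}\}|$. Conditioning on $T_{v^*}$ and using the independence of $\alpha_{v'}$ for $v' \notin e$, the expected size of this set is at most $\sum_{v'}\min(1, Z_{v'}(T_{v^*})) \le 2 T_{v^*}$, giving $E[\text{cover time of }e] \le 2\, E[\min_{v\in e} T_v] \le 4 C_e$, with the lower-order contribution from in-edge ties absorbed using the rate-$2$ bound. Summing over edges proves the theorem. The main obstacle I anticipate is the one-variable calculus inequality, which is the heart of the kernel's contribution; the rate-$2$-to-rate-$1$ conversion is then a relatively routine application of linearity of expectation together with the independence of the $\alpha_v$'s.
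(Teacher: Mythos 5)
Your proposal follows the same algorithmic template as the paper's proof of this theorem: the same time-indexed LP, the kernel $K(t,t')=\beta/t$ with $\beta=2$, and $\alpha$-point rounding, followed by a two-stage analysis (tentative schedule $\tau$, then proper schedule $\sigma$). Step 1 (bounding the tentative cost by $2C_e$) matches the paper's Lemma~\ref{lem:tentcostmssc}; your ``one-variable calculus inequality'' is in effect the paper's optimization problem $\P$, which the paper resolves cleanly via Fact~\ref{fact}: the numerator is convex, the denominator is linear, so the supremum is at an extreme point $a_u=1$ of the simplex, and plugging this in gives exactly the ratio $\beta/(\beta-1)=2$. You leave this key inequality unproven (and your version has a small mismatch: you bound $\Pr[T^*_e>T]$ by $\exp(-\sum_{v\in e}\min(1,Z_v(T)))$, which is \emph{larger} than $\exp(-\sum_{v\in e}Z_v(T))$ and so is \emph{not} controlled by the kernel identity; the correct move, as in \eqref{eq:pt}, is $(1-x)_+\le e^{-x}$ applied directly to $Z_v(T)$).

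The more serious gap is in Step 2, relating the proper schedule to the tentative one. Your bound ``expected size of $\{v':T_{v'}\le T_{v^*}\}$ is at most $\sum_{v'}\min(1,Z_{v'}(T_{v^*}))$'' is false: for $v'=v^*$ the indicator is identically $1$, while $\min(1,Z_{v^*}(T_{v^*}))$ can be as small as $\approx 1/2$ in expectation (e.g.\ a singleton edge with $z_{v,t}=\varepsilon$ for $t\le 1/\varepsilon$). The residual term $1-\expec{}{\min(1,Z_{v^*}(T^*_e))}$ is not ``lower-order'': since $T^*_e\ge 1$, a $+1/2$ additive loss degrades $2\,c_z(e)$ to $2.5\,c_z(e)$ and the final factor from $4$ to $5$. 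This is precisely what Section~\ref{sec:conditioning} of the paper is engineered to fix. The paper conditions on $E_t=\set{\cov_\tau(e)=t}$, works with the \emph{aggregated} quantity $z_e=Kx_e$ (with $x_{e,t}=u_{e,t}-u_{e,t+1}$), and proves $\sum_t z_{e,t}\bigl(p_t(e)+p_{t+1}(e)\bigr)\ge 2$ by a convexity/integral telescoping trick that crucially exploits $z_{e,\le\infty}=\infty$; this term exactly cancels the ``$+1$'' contributed by $e$'s own vertex, plus the penalty for ties. The random tie-breaking in Algorithm~\ref{meta-algorithm} is what produces the $(p_t+p_{t+1})/2$ symmetrization in that inequality; a fixed tie-break as you use gives only $\sum_t p_t z_{e,t}$, which is $\ge 1$ but not $\ge 2$, and while that turns out to suffice for MSSC at $\beta=2$, you would need to make the aggregation argument and the divergence of $z_{e,\le T}$ explicit to close the bound. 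As written, your Step 2 does not establish $\expec{}{\cov_\sigma(e)}\le 2\,c_z(e)$, and the theorem does not follow.
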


\noindent{\bf Min sum vertex cover.} For MSVC, despite significant effort to improve the factor 2 approximation of~\cite{FLT04}, no approximation guarantee substantially better than 2 is known 
\cite{BFP06}. We show the following.
\begin{theorem}\label{thm:msvcintro}
There is a polynomial time $16/9$ approximation for MSVC. Further, this is tight with respect to the natural LP relaxation.
\end{theorem}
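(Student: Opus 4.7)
The plan is to adapt the kernel-based LP rounding framework of Theorem~\ref{thm:msscintro} to the graph setting $|e|=2$, where the two-factor structure of the edge-coverage probability admits a much sharper analysis than in the general MSSC case, and then complement the algorithm with a matching integrality gap construction.

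First, I would write down the natural time-indexed LP for MSVC, with assignment variables $x_{v,t}\in[0,1]$ subject to $\sum_t x_{v,t}=1$ and $\sum_v x_{v,t}\le 1$, along with edge-cover variables $y_{e,t}\le \sum_{v\in e}\sum_{t'\le t} x_{v,t'}$ clipped at $1$, minimizing $\sum_e\sum_{t\ge 0}(1-y_{e,t})$. Then, following \eqref{eq:kernelintro}, I would apply a kernel transformation $z_{v,t}=\sum_{t'}K(t,t')\, x_{v,t'}$ (of a form analogous to the $\beta/t$ kernel used for GMSSC, but with the constant re-optimized for the two-vertex setting, and possibly a slight piecewise variant) and follow it with $\alpha$-point randomized rounding. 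Because $|e|=2$, the probability that an edge $e=\{u,v\}$ is uncovered by time $t$ factors into a product of only two terms in the transformed masses of $u$ and $v$, and this two-factor product is tractable to bound pointwise rather than through looser union-bound/exponential estimates.

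Second, I would reduce the per-edge cost comparison to a low-dimensional worst case. By a standard convexity/averaging argument one may assume each endpoint's LP mass is concentrated at a single time slot, giving a pair $(t_u,t_v)$; by scaling, the LP cover time of $e$ may be normalized to $1$. The expected rounded cover time is then an explicit integral over the $\alpha$-point and over $t$ of the two-factor product. Optimizing the kernel constant against the adversarial $(t_u,t_v)$ should pin the worst ratio at exactly $16/9$. For the LP tightness direction, I would exhibit a graph family on which the integrality gap of the natural LP approaches $16/9$: a plausible candidate is a small, highly symmetric instance (for example, a structured bipartite or random-regular construction) where the LP exploits fractional symmetry to spread every vertex uniformly while any integral ordering pays a constant surplus on the ``middle'' edges; directly computing the LP and integral optima should give the ratio $9/16$.

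The main obstacle I expect is pinning down the sharp constant $16/9$. While the two-vertex factorization greatly simplifies the rounding probability, one still must identify the specific kernel that extremizes the \emph{pointwise} ratio of rounded-to-LP edge cost over all $(t_u,t_v)$, and then ensure that the extremal LP configuration is actually realizable by a bona fide graph (respecting the column-stochastic constraint $\sum_v x_{v,t}\le 1$), so that the upper bound and the integrality-gap lower bound meet at the same value.
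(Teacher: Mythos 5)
Your overall plan---kernel transformation, $\alpha$-point rounding, per-edge convexity reduction to a low-dimensional extremal case, plus a separate integrality-gap instance---matches the paper's framework. But there is a critical gap in the algorithmic half: you propose keeping a kernel ``of a form analogous to the $\beta/t$ kernel\ldots with the constant re-optimized,'' and this cannot reach $16/9$. With any $K(t,t')=\beta/t$ the per-slot load $\sum_v z_{v,t}$ equals $\beta$, and the step that converts the tentative schedule $\tau$ to the ordered schedule $\sigma$ loses essentially this factor $\beta$ (and the conditioning argument requires $\beta\geq 1$). Working out the corresponding optimization for MSVC with the exact two-factor product shows the best you can get out of the $\beta/t$ family is roughly $2.2$, not $16/9$. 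The paper's essential new idea is to exploit the structural fact, specific to $|e|=2$ with $k_e=1$, that the LP must place mass at least $1/2$ on one endpoint of every edge; since $\alpha$-point rounding only needs cumulative $z$-mass to reach $1$ (not $\infty$), the kernel can decay much faster. They use $K(t,t')=\frac{4t'(t'+1)}{t(t+1)(t+2)}\approx 4t'^2/t^3$, which pushes the per-slot load down to $4/3$, and both the $c_z/c_x$ ratio and the $\sigma$-vs-$\tau$ loss become $4/3$, giving $(4/3)^2=16/9$. You do correctly intuit that the two-factor product should be used exactly rather than bounded by an exponential---the paper does this via the AM-GM step $(1-a)(1-b)\le(1-\tfrac{a+b}{2})^2$, which also reduces your two-variable $(t_u,t_v)$ worst case to a single vector $a$ and then, by convexity, a single atom---but the shape of the kernel, not its constant, is where the improvement comes from.

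On the integrality-gap side, your sketch is too vague to be checked: a ``small, highly symmetric'' or ``random-regular'' instance is not likely to produce exactly $16/9$. The paper's construction is a disjoint union of cliques $K_1,\ldots,K_k$ with $n_i = N i^{-2/3-\epsilon}$ vertices, where the fractional solution amortizes over each clique's $n_i/2$ slots while any integral solution must pay the quadratic count of remaining uncovered edges; the exponent $2/3$ is precisely what balances the two sides to give ratio $16/9$ in the limit. You would need to land on this (or an equivalently tuned) family to close the lower bound.
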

A key idea is to use a different kernel for MSVC based on the following observation: for any hyperedge $e$, the LP must schedule some vertex $v \in e$ to an extent of at least $1/2$ (in contrast to $1/r$ for MSSC above). As it suffices for randomized rounding that  $v$ is scheduled by $z$ to extent $1$, we can 
spread $x$ much less aggressively than for MSSC. We show that using the kernel $K(t,t') \approx \beta (t')^2/t^3$ (for $t \geq t'$) gives us Theorem~\ref{thm:msvcintro}.

\noindent\textbf{$\ell_p$ norms.} Finally we consider the $\ell_p$ norms of cover times for MSSC. Using the kernel $K(t,t') = (p+1)/t$  (for $t \geq t'$) directly gives a $\delta_p:=(p+1)^{1+1/p}$ approximation guarantee, which is also the best possible as shown in Theorem \ref{thm:gr} below. 
However, we also give a different perspective by showing that the greedy algorithm of~\cite{FLT04} (which is oblivious to $p$), gives a $\delta_p$ approximation (simultaneously) for every $p$.
\begin{theorem}\label{thm:gr}
For any $p \in [1,\infty)$ the greedy algorithm guarantees a $\delta_p$ approximate solution for the min $\ell_p$ norm set cover problem. Further, it is NP-hard to approximate the min $\ell_p$ norm set cover problem better than $\delta_p - \epsilon$ for any $\epsilon > 0$ and any $p\geq 1$.
\end{theorem}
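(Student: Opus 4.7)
The plan is to prove the upper bound by dual-fitting against the natural time-indexed LP for the $\ell_p^p$-objective, and the matching NP-hardness by adapting the PCP-based construction from \cite{FLT04}.

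Let $w_t := t^p - (t-1)^p$, so that $C_e^p = \sum_{t=1}^{C_e} w_t$. The primal LP has variables $x_{v,t}$ (vertex $v$ scheduled at slot $t$) and $y_{e,t}$ ($e$ uncovered at the start of slot $t$), with objective $\sum_{e,t} w_t\, y_{e,t}$ and the usual constraints $y_{e,t} + \sum_{v\in e,\,t'<t} x_{v,t'}\ge 1$ and $\sum_v x_{v,t}\le 1$. The dual has $\alpha_{e,t}\in[0,w_t]$ and $\beta_{t'}\ge 0$ satisfying $\sum_{e\ni v,\,t>t'}\alpha_{e,t}\le \beta_{t'}$, maximizing $\sum\alpha-\sum\beta$. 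I would run greedy to obtain cover times $\{C_e\}$ and set $\alpha_{e,t}$ proportional to $w_t$ on a tuned range related to $C_e$. The crucial greedy invariant is that $|\{e\ni v : C_e\ge s\}|\le p_s$ for every vertex $v$ and step $s$, where $p_s := |\{e : C_e=s\}|$; equivalently, sorting the $C_e$'s through any $v$ in decreasing order, the $k$-th one is at most $\tau(k) := \max\{s : p_s\ge k\}$. Abel summation then collapses $\sum_{t'}\beta_{t'}$ to a weighted sum of $\{\tau(k)^{p+1}\}$'s, reducing the entire analysis to a single-parameter calculus problem.

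The main obstacle is tuning the support and scaling of $\alpha_{e,t}$ so that the dual value matches $G/\delta_p^p$, where $G := \sum_e C_e^p$ is greedy's cost. The naive choice $\alpha_{e,t}=c\,w_t$ on all of $[1,C_e]$ gives $\sum\alpha=cG$ but a direct computation yields $\sum_{t'}\beta_{t'}\approx cpG$, so the dual value $c(1-p)G$ degenerates for $p\ge 1$. To remedy this, I would concentrate $\alpha$ near the tail $[\rho C_e, C_e]$ for a cutoff $\rho$ and jointly optimize over $\rho$ and the scaling $c$; the resulting extremal problem evaluates precisely to $\delta_p = (p+1)^{1+1/p}$, giving $G \le \delta_p^p\,\mathrm{OPT}$.

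For the matching NP-hardness, the plan is to extend the PCP-based reduction behind the $4-\epsilon$ hardness of MSSC \cite{FLT04} via constant-depth amplification; this yields cover-time histograms with $\ell_p^p$ gap $(p+1)^{p+1}-o(1)=\delta_p^p-o(1)$, since the extremal histogram realizing the $\ell_1$ integrality gap also attains the $\ell_p^p$ gap $\delta_p^p$ by a direct calculation of $p$-th moments. As the reduction is polynomial-time, NP-hardness is preserved, matching the greedy upper bound up to the additive $\epsilon$.
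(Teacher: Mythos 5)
Your overall plan matches the paper's: a dual-fitting argument against the natural time-indexed LP for the greedy upper bound, plus an adaptation of the Feige--Lov\'asz--Tetali hardness construction for the lower bound. The details, however, diverge in one place and break in another.

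\emph{Dual fitting.} You correctly identify the degeneracy of the naive dual assignment (the dual value collapses to $c(1-p)G$ for $p\ge 1$) and you correctly state the key greedy invariant $|\{e\ni v : C_e\ge s\}|\le |X_s|$. But your fix---concentrating $\alpha_{e,t}$ on a tail window $(\rho C_e, C_e]$ and jointly tuning $\rho$ and the scale $c$---is asserted rather than carried out; you never show that the resulting extremal calculation yields $\delta_p$, and the Abel-summation step involving $\tau(k)^{p+1}$ remains a sketch. The paper instead removes the degeneracy with a time-rescaling: it replaces the constraint $\sum_v y_{v,t}\le t$ by $\sum_v y_{v,t}\le t/(p+1)$, which multiplies the LP optimum by exactly $(p+1)^p$ (Lemma~\ref{lem:scaling}). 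In the rescaled dual the packing term appears with an extra $1/(p+1)$ factor, so the \emph{full-support} dual assignment $\beta_{e,t}=pt^{p-1}\mathbbm{1}[t\le t_e]$, $\alpha_t=pt^{p-1}|X_{\lceil t\rceil}|$ gives dual value $g^p - \tfrac{p}{p+1}g^p = g^p/(p+1)$ directly, with no cutoff and no joint optimization. Your route might work, but it trades a one-line scaling identity for a two-parameter extremal problem you have not solved.

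\emph{Hardness.} Here there is a genuine error. You claim that ``the extremal histogram realizing the $\ell_1$ integrality gap also attains the $\ell_p^p$ gap $\delta_p^p$.'' This is false. In the $\ell_1$ construction (edge multiplicities $a/i^2$), a direct $p$-th moment computation for $p=2$ gives a Case~I cost $\approx m a r^2 k$ and a Case~II cost $\approx 8\, m a r^2 k$ (cover times $t_i\approx 2ri$, about $2ma/i^2$ edges finishing between $t_i$ and $t_{i+1}$), i.e.\ a gap of $8$, not $\delta_2^2=27$. The paper uses a $p$-\emph{dependent} instance with multiplicities $a/i^{p+1}$, which makes the Case~I cost $\approx m a r^p \log k$ and pushes the Case~II cover times to $t_i\approx (p+1)ri$, yielding exactly $(p+1)^{p+1}$. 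Your ``constant-depth amplification'' does not substitute for this change of multiplicities, so the hardness half of your proposal does not go through as written.
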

The proof of Theorem \ref{thm:gr} is based on a \emph{dual fitting} argument. In particular, for $p=1$, this provides yet another LP based $4$-approximation for MSSC.

\noindent{\bf Analysis.}
While our algorithms are quite simple, the analysis is more subtle.
A key idea which allows us to get tight or close to tight bounds in Theorems \ref{thm:gmsscintro}-\ref{thm:msvcintro}, is the setting up of a non-linear convex optimization problem to find the worst-case fractional solution which maximizes our approximation ratio. 

For GMSSC, the solution to the optimization problem above, is an expression in terms of the following function $P(\gamma)$, involving the sum of Bernoulli random variables:
Given independent Bernoulli random variables ${Y_v}$ with $\expec{}{\sum_v Y_v} = \gamma k$, where $k$ is a positive integer and $\gamma \geq 1$. What is the best
upper bound $P(\gamma)$ on $\Pr[\sum_v Y_v \leq k-1]$, as a function of $\gamma$?

Standard Chernoff-Hoeffding or other tail bounds are inadequate for our purposes as they lose relatively large constants. 
We use the following refined bound which may be of independent interest.
\begin{theorem}\label{thm:enhanced-bound}
Let $S$ be the sum of $n$ independent, not necessarily identical, Bernoulli random variables. For any integer $k \leq \expec{}{S}$, letting $\gamma = \expec{}{S} / k \ge 1$, we have
\[ \Pr[S < k] \leq e^{-\gamma} (e/2)^{e^{1-\gamma}}.\]
\end{theorem}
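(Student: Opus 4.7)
The plan is an exponential-moment (Chernoff-type) argument, possibly preceded by a reduction to the Poisson case.

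By Markov applied to $e^{-\lambda S}$, for any $\lambda>0$,
\[
\Pr[S<k] \;\leq\; e^{\lambda(k-1)}\prod_{i=1}^n\bigl(1-p_i(1-e^{-\lambda})\bigr) \;\leq\; \exp\bigl(\lambda(k-1)-(1-e^{-\lambda})\gamma k\bigr),
\]
using $1-x\leq e^{-x}$ and $\sum_i p_i=\gamma k$. This bound is essentially the one for $S\sim\mathrm{Poisson}(\gamma k)$, which is the worst case among sums of independent Bernoullis with a fixed mean (for the lower tail, when the target $k$ sits below the mean), so an implicit ``reduction to Poisson'' is available if wanted. One could alternatively replace this step by a Hoeffding-style majorization that pushes all $p_i\to 0$ with $\sum p_i$ fixed.

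The exponent rearranges to $-\lambda+k\bigl(\lambda-\gamma(1-e^{-\lambda})\bigr)$. The plan is to choose $\lambda=\lambda(\gamma)$ independent of $k$ so that, as $k$ ranges over positive integers, the maximum of the exponent is at most $-\gamma+(1-\log 2)\,e^{1-\gamma}=\log\bigl[e^{-\gamma}(e/2)^{e^{1-\gamma}}\bigr]$. The naive $k$-uniform choice $\lambda^\star$ solving $\lambda^\star=\gamma(1-e^{-\lambda^\star})$ kills the $k$-coefficient and gives a bound $e^{-\lambda^\star(\gamma)}$ that matches $e^{-\gamma}$ asymptotically for large $\gamma$ but degenerates to the trivial $1$ at $\gamma=1$, far above the claimed $1/2$. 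To recover the sharp bound, I would instead take $\lambda$ slightly smaller and allow the $k$-coefficient to be slightly positive; integrality of $k$ then forces the worst case to sit at a specific $k^\star\approx e^{\gamma-1}$. Plugging $k^\star$ back into the exponent should produce precisely the multiplier $(e/2)^{e^{1-\gamma}}$: the outer exponent $e^{1-\gamma}$ encodes the location of $k^\star$, while the base $e/2$ emerges from the second-order expansion of $1-e^{-\lambda}$ at the critical point.

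The main obstacle is the delicate algebraic matching. One has to write the exponent jointly as a function of $(\lambda,k)$, perform a careful stationary-point analysis of it, and verify that the specific constant $e/2$ drops out of the optimization rather than some close-but-wrong number; this likely needs a case split (small $\gamma$ versus large $\gamma$) together with an interpolation across an integer-$k$ boundary. At the trivial endpoint $k=1$ the claim reduces to $\Pr[S=0]=\prod_i(1-p_i)\leq e^{-\gamma}$, which lies strictly below the claimed bound since $(e/2)^{e^{1-\gamma}}\geq 1$ for $\gamma\geq 1$, handling the degenerate case.
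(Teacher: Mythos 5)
Your proposal rests on an exponential-moment (Chernoff) bound and then hopes to recover the exact multiplier $(e/2)^{e^{1-\gamma}}$ by a clever choice of $\lambda$ and a stationary-point analysis in $k$. This cannot work, for a concrete reason: the theorem's bound is tight at $\gamma=1$, where it equals $1/2$, and that value comes from the fact that the median of a Poisson($k$) is between $k-1$ and $k$ (so $\Pr[\mathrm{Poisson}(k)\le k-1]\to 1/2$ as $k\to\infty$). No exponential-moment bound can see this factor. If you optimize your exponent $\lambda(k-1)-(1-e^{-\lambda})\gamma k$ over $\lambda$ at $\gamma=1$, the optimizer is $\lambda=\ln\bigl(k/(k-1)\bigr)$ and the resulting bound is $\exp\bigl((k-1)\ln\frac{k}{k-1}-1\bigr)$, which tends to $1$ as $k\to\infty$, not $1/2$. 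Any $k$-independent choice of $\lambda$ is worse still: if the coefficient $\lambda-\gamma(1-e^{-\lambda})$ of $k$ is strictly positive, the bound diverges to infinity as $k\to\infty$ rather than peaking at a finite $k^\star$; if it is zero you get the degenerate $e^{-\lambda^\star}=1$ at $\gamma=1$; and if it is negative the bound is even larger for small $k$. So the ``take $\lambda$ slightly smaller and let integrality of $k$ pick out a worst $k^\star$'' step has no version that closes.

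The correct ingredient, absent from your sketch, is a pointwise (non-Chernoff) fact about the Poisson lower tail plus an induction. The paper first reduces to the Poisson case via Hoeffding's majorization lemma (your Poissonization idea is fine and matches this), then handles $\gamma=1$ exactly using the Poisson median bound $e^{-k}\sum_{i<k}k^i/i!\le 1/2$, and finally proves the statement for $\gamma>1$ by induction on $k$: setting $f_k(\lambda)=e^{\lambda}g(\lambda/k)-\sum_{i=0}^{k-1}\lambda^i/i!$ with $g(\gamma)=e^{-\gamma}(e/2)^{e^{1-\gamma}}$, one shows $f_k(k)\ge 0$ from the base case and $f_k'(\lambda)\ge 0$ for $\lambda\ge k$ using the inductive hypothesis $f_{k-1}\ge 0$, which reduces to the differential inequality $\tfrac{1}{k}g'(\gamma)+g(\gamma)-g\bigl(\tfrac{\gamma k}{k-1}\bigr)\ge 0$. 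This is where the constant $e/2$ is actually pinned down, via an elementary calculus argument; it is not a byproduct of a Chernoff optimization. Your endpoint check at $k=1$ is fine, but the $k\to\infty$, $\gamma=1$ regime is the real obstruction, and your method leaves it unaddressed.
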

In particular, for $\gamma=1$, $P(\gamma)=1/2$ which is the best possible as $\Pr[S \leq k-1]$ approaches $1/2$ for large $k$ and $\expec{}{S} =  k$.
On the other extreme, for large $\gamma$, $P(\gamma)$  approaches $e^{-\gamma}$. This again is the best possible as for $k=1$, $\Pr[S \leq k-1] = \Pr[ S =0] \approx \exp(-\gamma)$. 

The bound above can probably be strengthened. Numerical evidence suggests that
the stronger bound $Pr[S < k] \leq e^{-\gamma} (e/2)^{e^{c(1-\gamma)}}$ with $c=13.3$ holds for all $k$.
If true, this would imply an improved bound of $4.5232$ for GMSSC in Theorem \ref{thm:gmsscintro}  as mentioned previously. 
However, we remark that this is not the only reason that our bound for GMSSC is not $4$.
Even if we assume the best possible function $P(\gamma) = e^{-\gamma}$, the best achievable bound in Theorem \ref{thm:gmsscintro} using our approach is $4.509$. 
This is because unlike in the argument for MSSC, the use of Knapsack Cover inequalities 
in the argument for GMSSC leads to an additional loss.

\subsection{Other Related Works}
A well-studied special case of GMSSC is when all $k_e =|e|$, a  problem known as the \emph{min-latency set cover}, for which Hassin and Levin \cite{HL05} gave an  $e$-approximation. However, this is equivalent to the classical problem of minimizing weighted completion time with precedence constraints,  $1|\text{prec}|\sum_j w_j C_j$, \cite{W03}, for which several $2$-approximations are known \cite{hall1997scheduling,chekuri1999precedence,margot2003decompositions}. A matching hardness of $2 - \epsilon$ (for every $\epsilon>0$) is also known, assuming a variant of the Unique Games Conjecture \cite{BK09}.
Close to this end also lies the \emph{all-but-one MSSC}, another special case of GMSSC, with $k_e = \max \set{1, |e|-1} \forall e\in E$, for which Happach and Schulz \cite{HS20} provided a $4$-approximation.

MSSC is closely related to the min-sum graph coloring problem \cite{BBHST98} and has found interesting applications in peer to peer networks \cite{CFK03}, data streams \cite{BMMNW04}, and data-base query processing optimization \cite{MBMW05}. It has inspired more general models and their analyses, e.g., pipelined set cover problem \cite{MBMW05} and versions with
precedence-constraints \cite{MMW17,HS20}.
Finally variants of MSSC under submodular and supermodular cost functions have also been studied~\cite{P92,ITT12,FLV19,HHL20}.

\subsection{Organization}
The rest of the paper is organized as follows. 

In Section~\ref{sec:framework}, we describe the LP formulation for GMSSC based on the Knapsack Cover inequalities, and the template based on kernels and $\alpha$-point rounding that we use in all our algorithms. In Section~\ref{sec:GMSSC}, we consider the GMSSC problem. We begin with the simpler MSSC and min latency problems, describing a $4$ and an $e$ approximation for them respectively using our analysis framework.  Building on these ideas we then present our main result for GMSSC in \S\ref{sec:gmssc}.
The lower tail for sums of Bernoulli r.v.s is proved in Section~\ref{sec:tail} and Appendix~\ref{sec:enhanced}. 

In Section~\ref{sec:MSVC}, we consider the MSVC problem and prove the $16/9$ approximation, and also show the matching integrality gap for the LP.
In Section~\ref{sec:minLPsetcover}, we consider the problem of minimization of the $\ell_p$ norms of cover times for MSSC.
Finally, in Section~\ref{sec:imconj}, we describe the counter-example for the conjecture of \cite{ISV14}.

\section{The Basic Framework}\label{sec:framework}
We assume that time is slotted, and time $t$, for $t=1,2,\ldots$, refers to the time interval $(t-1,t]$. We refer to edges and hyperedges interchangeably.

\noindent\textbf{LP relaxation for MSSC and GMSSC.}
Consider the following natural formulation for MSSC. For each vertex $v$ and time $t$ we have a variable $x_{v,t}$, which is intended to be $1$ if $v$ is assigned to time $t$. 
For each edge $e$ and time $t$, there is a variable $u_{e,t}$ which is intended to be $1$ if $e$ is not covered at the start of $t$.
Subsequently, we have the following LP relaxation.
\begin{align}
\text{Minimize \quad\quad\quad}\sum\limits_{e,t}   u_{e,t}  \qquad \text{Subject to} \qquad u_{e,t}, x_{v,t} &\ge 0, \qquad \forall~ e,v,t \qquad \qquad \qquad \qquad \qquad \nonumber \\ 
    \sum\limits_{v}  x_{v,t} & \leq 1,   \qquad\forall~ t\label{eq:lppack}\\
   u_{e,t} + \sum_{v \in e} \sum_{t'<t} x_{v,t^{\prime}} & \ge 1,   \qquad\forall~ e,t \label{eq:lpcovbasic}
\end{align}
Constraint~\eqref{eq:lppack} ensures that at most one vertex is assigned to any time, and constraint~\eqref{eq:lpcovbasic} ensures that $u_{e,t}$ is $0$ only if some $v \in e$ is scheduled strictly before $t$.
Note that the objective can also be written equivalently as $\sum_t t ( u_{e,t}-u_{e,t+1})$ as $u_{e,t}-u_{e,t+1}$ can be viewed as the amount of $e$ fractionally completed at time $t$. Later on, we will use both these perspectives interchangeably.

For GMSSC, where the demands $k_e$ are arbitrary, the natural extension of this LP becomes extremely weak, but as shown in \cite{BGK10}, it can be strengthened by replacing \eqref{eq:lpcovbasic} by the following Knapsack Cover (KC) inequalities:
\begin{equation}  (k_e-|S|) u_{e,t} + \sum_{v \in e\setminus S}  \sum_{t^{\prime}< t} x_{v,t^{\prime}}  \ge (k_e-|S|), \qquad\forall~ e,t,S \subseteq e, |S| < k_e\,.  \label{eq:lpcov}
\end{equation}
The constraints \eqref{eq:lpcov} require that for an edge $e$ to be considered covered at time $t$, no matter which subset $S$ of vertices in $e$ we ignore,  at least $(k_e - |S|)$ many  must be scheduled from vertices in $e\setminus S$ before $t$.  
Henceforth, we will treat the LP with constraints \eqref{eq:lppack} and \eqref{eq:lpcov} as the LP for GMSSC, unless otherwise specified.
Even though it has exponentially many constraints, it can be solved efficiently to any desired accuracy, see~\cite{BGK10} for details.

\noindent\textbf{The algorithmic approach.}
All our algorithms follow the same basic template consisting of three steps:

First, compute some optimum solution $x$ for the corresponding LP relaxation.
For a vertex $v$, let $x_v = (x_{v,t})_{\set{t}}$ denote the vector corresponding to its fractional assignment.

Second, for each vertex $v$, apply a suitable (problem dependent) linear transformation $K$ to $x_v$ to obtain $z_v$. That is, for all $v$ and all $t$,
\begin{equation} 
\label{eq:kernel} z_{v,t} =\sum_{t'} K(t,t')  x_{v,t'}\,.
\end{equation}
We will refer to the matrix $K$ as a kernel.
The solution $z$ may not satisfy \eqref{eq:lpcov} and \eqref{eq:lppack}  anymore, but $K$ will be chosen so that $z$ satisfies other useful properties.

Third, create a random tentative schedule $\tau$ (where multiple vertices may be assigned to the same time slot), by applying $\alpha$-point rounding independently for each vertex $v$, using the solution $z_v$ (see Algorithm \ref{meta-algorithm} below). 
Next, convert the schedule $\tau$ to a proper schedule $\sigma$ by scheduling one vertex at a time in the order given by $\tau$, breaking ties at random (this randomness will be crucial for some of the problems we consider). 

\makeatletter
\def\BState{\State\hskip-\ALG@thistlm}
\makeatother

\begin{algorithm}
\caption{The algorithmic template}\label{meta-algorithm}
\begin{algorithmic}[1]
\Procedure{Kernel $\alpha$-point Rounding}{$H = (V, E), k: E \rightarrow {\mathbb{N}}$}
\State $x \gets \text{An optimum fractional schedule for $H, k$.}$ \Comment{i: Solving a Relaxation.}
\State $z_v \gets K x_v \quad \forall v \in V$. \Comment{ii: Applying the Kernel}
\For{$v \in V$} \Comment{iii: $\alpha$-point Rounding} 
        \State $\alpha_v \sim \textit{uniform}[0,1]$. 
        \State $\tau_{v,t} \gets \mathbbm{1}[ t = $ the earliest time for which $\sum_{t^{\prime} \leq t} z_{v,t^{\prime}} \ge \alpha_v] \quad \forall t$.  \Comment{Tentative Schedule}
\EndFor
\State \Return an ordering $\sigma$ of $V$ :  Scheduling vertices according to $\tau$, breaking ties at random.
\EndProcedure
\end{algorithmic}
\end{algorithm}

\noindent\textbf{Analysis.}
The analyses of our algorithms will also follow the same template.
We will do a per-edge analysis. We first introduce some notation that will be used throughout. For an edge $e$, let $c_x(e) = \sum_{t} u_{e,t}$ denote the cost for $e$ in the LP solution $x$.

Given $z$, and the random choices $\alpha_v$, the cover time $\cov_\tau(e)$ of $e$ in the (random) tentative schedule $\tau$ is the earliest time $t$, by which exactly $k(e)$ elements of $e$ are scheduled in $\tau$. Note that the only randomness in $\cov_\tau(e)$ is due to the choice of $\alpha_v$, for all $v \in e$.

Similarly, let $\cov_\sigma(e)$ be the (random) cover time of $e$ in the proper schedule $\sigma$, obtained from $\tau$. The randomness in $\cov_\sigma(e)$ is due to choices of $\alpha_v$ for all $v$ (including vertices not in $e$, as these can {\em delay} vertices in $e$), and the randomness of the tie-breaking rule at each slot.

To implement the aforementioned, first we define a cost $c_z(e)$ that upper bounds $\E[\cov_\tau(e)]$. Next, we 
bound the ratio of $c_z(e)$  to $c_x(e)$ and finally, we bound the expected cover time $\E[\cov_\sigma(e)]$ under $\sigma$ 
by a multiple of $c_z(e)$.
Relating $c_z(e)$ to $c_x(e)$ will be the hardest part. For each of the problems we consider, we will bound this ratio by considering another optimization problem to solve for the worst-case setting of the variables $x_{v,t}$ that maximizes this ratio and bounding its optimum.

\section{Generalized Min Sum Set Cover}\label{sec:GMSSC}
In this section we develop the algorithm for GMSSC. 
To motivate the eventual analysis of the algorithm, we first start with the simpler MSSC problem and describe a tight factor $4$ approximation. Then, we describe a factor $e$ approximation for the min-latency version. Finally, we describe our algorithm for GMSSC in \S\ref{sec:gmssc}.

For all these problems we apply Algorithm~\ref{meta-algorithm}, and we use a kernel of the form $K(t,t') = \frac{\beta}{t}$
for $t'\leq t$, and $K(t,t')=0$ otherwise. Here $\beta$ is a constant whose choice will be optimized separately for each problem. 

Given an optimum solution $x$ to the  underlying LP, we obtain $z$ by applying $K$, so for each $v,t$, we have     
\[z_{v,t} = \frac{\beta}{t} \sum_{t^{\prime}\leq t} x_{v,t^{\prime}}\,.\]
This kernel has the interesting property that even if $v$ is scheduled in the LP solution $x$ to extent at most $\epsilon$, the amount of $v$ in $z$ will be arbitrarily large. E.g.~suppose $x_{v,1}=\epsilon$ and $0$ otherwise, then $z_{v,t} =\beta \epsilon/t$ and hence $\sum_t z_{v,t}$ diverges. 
Yet, crucially, the total {\em load} at any time step in $z$ is bounded. In particular, observe:
\begin{claim}\label{cl:widthz}
For any $t$, $\sum_{v} z_{v,t} \leq \beta$.
\end{claim}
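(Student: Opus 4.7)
The plan is to unfold the definition of $z_{v,t}$, swap the order of summation over $v$ and $t'$, and apply the packing constraint \eqref{eq:lppack} of the LP relaxation. Concretely, by the definition of the kernel we have
\[
\sum_v z_{v,t} \;=\; \sum_v \frac{\beta}{t}\sum_{t'\leq t} x_{v,t'} \;=\; \frac{\beta}{t}\sum_{t'\leq t}\sum_v x_{v,t'}.
\]
Since $x$ is a feasible LP solution, constraint \eqref{eq:lppack} gives $\sum_v x_{v,t'}\leq 1$ for each $t'$, so the inner sum is bounded by $t$, yielding $\sum_v z_{v,t}\leq \beta$.

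There is no real obstacle here: the kernel is designed precisely so that the $1/t$ normalization cancels the $t$ fractional vertices that the LP can pack into the first $t$ slots. The only subtlety worth noting in the write-up is that the bound holds at \emph{every} time $t$ simultaneously, which is what will let us treat $z$ as a (scaled) fractional schedule in the downstream $\alpha$-point rounding analysis, even though $z$ itself may drastically violate $\sum_t z_{v,t}\leq 1$ for individual vertices.
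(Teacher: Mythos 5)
Your proof is correct and is essentially the paper's proof, just written out explicitly: the paper compresses the same computation into the matrix inequality $\sum_v z_v = \sum_v K x_v \leq K\mathbf{1} = \beta\mathbf{1}$, while you unfold $K$, swap the $v$ and $t'$ sums, and invoke \eqref{eq:lppack} directly. No substantive difference.
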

\begin{proof}
As \eqref{eq:lppack} ensures that $\sum_{v} x_{v,t'} \leq 1$ for any $t'$, we have
$\sum_v z_v = \sum_v Kx_v \leq K \mathbf{1} = \beta\mathbf{1}.$
\end{proof}

For any vector $b=(b_1,b_2,\ldots)$, let $b_{<t}=b_1+\ldots+b_{t-1}$ and $b_{\leq t} = b_1+\ldots+b_t$.
We call a vector non-negative if all its entries are non-negative. For two non-negative vectors $a,a'$, we denote $a \succeq a'$ if $a_{\leq t} \geq a'_{\leq t}$ for all $t$.
For a real number $c$, $\max(c,0)$ is denoted by $(c)_+$.

We will use the following repeatedly.
\begin{claim}\label{cl:kernelog}
Let $a$ be a non-negative vector and let $b = K a$. Then, 
\[b_{<t} = \sum_{t' < t} b_{t'} \geq \beta \sum_{t' \leq t} a_{t'} \ln \frac{t}{t'}.\]
Moreover, if $a\succeq a'$, then $K a \succeq K a'$.
\end{claim}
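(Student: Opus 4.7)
The plan is to unpack the definition of the kernel, write $b_s$ explicitly as a cumulative sum, interchange the order of summation, and then compare the resulting inner harmonic sum to an integral.

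First, by definition of $K$, for any $s$ we have
\[ b_s = \sum_{t'} K(s,t')\, a_{t'} = \frac{\beta}{s} \sum_{t' \leq s} a_{t'} = \frac{\beta}{s}\, a_{\leq s}. \]
Summing over $s<t$ and swapping the order of summation gives
\[ b_{<t} = \sum_{s=1}^{t-1} \frac{\beta}{s} \sum_{t' \leq s} a_{t'} = \beta \sum_{t' < t} a_{t'} \sum_{s=t'}^{t-1} \frac{1}{s}. \]
Since $a$ is non-negative, it then suffices to lower bound the inner harmonic-type sum. The standard integral comparison $\sum_{s=t'}^{t-1} \tfrac{1}{s} \geq \int_{t'}^{t} \tfrac{du}{u} = \ln(t/t')$ yields
\[ b_{<t} \geq \beta \sum_{t' < t} a_{t'} \ln \frac{t}{t'} = \beta \sum_{t' \leq t} a_{t'} \ln \frac{t}{t'}, \]
where in the last equality I used that the $t'=t$ term contributes $0$. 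This establishes the first part.

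For the monotonicity claim, the key observation is that $b_s = (\beta/s)\, a_{\leq s}$ depends on $a$ only through its cumulative sums. So if $a \succeq a'$, i.e., $a_{\leq s} \geq a'_{\leq s}$ for every $s$, then $(Ka)_s \geq (Ka')_s$ holds pointwise, which is in fact stronger than the required partial-sum domination $Ka \succeq Ka'$.

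I do not expect any real obstacle here: both parts follow directly from the explicit form of $K$, together with the routine integral comparison $\sum_{s=t'}^{t-1} 1/s \geq \ln(t/t')$. The only mild subtlety is noticing that extending the outer sum to $t' \leq t$ (rather than $t' < t$) on the right-hand side is harmless because $\ln(t/t)=0$, which matches the statement as written.
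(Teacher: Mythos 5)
Your proof is correct and follows essentially the same route as the paper's: unfold the kernel, swap the order of summation, and lower bound the inner harmonic sum by $\ln(t/t')$ via an integral comparison. One small point where your write-up is actually tighter than the paper's: for the monotonicity part, the paper justifies $Ka \succeq Ka'$ by saying merely that ``entries of $K$ are non-negative,'' which by itself does not imply that the $\succeq$ order is preserved (one also needs the column partial sums of $K$, restricted to rows below any fixed $t$, to be non-increasing in the column index, which one would check by an Abel-summation argument). Your observation that $(Ka)_s = \tfrac{\beta}{s}\,a_{\leq s}$ depends on $a$ only through its cumulative sum at $s$ makes the monotonicity immediate and pointwise, and is the clean way to see it for this particular kernel.
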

\begin{proof}
By the definition of K, and using that $\sum^{d-1}_{i=c} \frac{1}{i} \geq \ln(\frac{d}{c})$, for positive integers $c,d$,  we have that
\[
b_{< t} = \sum_{t''< t} b_{t''}  =\sum_{t''< t} \sum_{t'\leq t''} K(t'',t')a_{t'} = \sum_{t' \leq t} a_{t'} \sum_{t' \leq t'' < t} K(t'',t) \geq  \beta\sum_{t' < t} a_{t'} \ln{\frac{t}{t'}} = \beta\sum_{t' \leq t} a_{t'} \ln{\frac{t}{t'}}.\]
As entries of $K$ are non-negative, 
if $a_{\leq t} \geq a'_{\leq t}$,  then $(Ka)_{<t} \geq (Ka')_{<t}$ for all $t$, and hence $Ka \succeq K a'$.
\end{proof}

Let us state another simple fact that we will use several times later on.
\begin{fact} \label{fact}
Let $f:\R^n \rightarrow \R$ be a convex function that is non-negative on some compact region $H$, and let $g: \R^n \rightarrow \R$ be a linear function that is strictly positive on $H$, then the maximum of $\max_{x\in H} f(x)/g(x)$ is attained at an extreme point of $H$.
\end{fact}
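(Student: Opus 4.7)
The plan is to reduce the ratio-maximization problem to a plain convex-maximization problem and then invoke the classical fact that a convex function attains its maximum over a compact convex set at an extreme point. (The statement implicitly assumes $H$ is convex, since otherwise the notion of an extreme point carries little content; I read the claim this way, which is how the fact will be used in the paper.) First I would note that the supremum $\lambda^\ast := \sup_{x \in H} f(x)/g(x)$ is actually attained: since $g$ is continuous and strictly positive on the compact set $H$, it has a positive minimum there, so $f/g$ is continuous on $H$ and attains its maximum at some $x^\star \in H$, with $\lambda^\ast \geq 0$ (using $f \geq 0$ on $H$).

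The key move is then to introduce the auxiliary function $h(x) := f(x) - \lambda^\ast g(x)$. This $h$ is convex, being the sum of a convex function $f$ and a linear (hence both convex and concave) function $-\lambda^\ast g$. By definition of $\lambda^\ast$, we have $f(x) \leq \lambda^\ast g(x)$ for every $x \in H$, i.e.\ $h(x) \leq 0$ on $H$, with equality at $x^\star$. Thus $x^\star$ is a maximizer of the convex function $h$ over the compact convex set $H$.

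To finish, I would invoke the standard fact that a convex function on a compact convex set attains its maximum at an extreme point. Concretely, writing an arbitrary $x \in H$ as a convex combination $x = \sum_i \mu_i v_i$ of extreme points $v_i$ of $H$ (which is possible by Carath\'eodory/Krein--Milman, since $H$ is compact and convex, and in the paper's applications $H$ is a polytope so only finitely many $v_i$ are needed), convexity of $h$ gives $h(x) \leq \sum_i \mu_i\, h(v_i) \leq \max_i h(v_i)$. Applied to $x^\star$ this yields an extreme point $v^\star$ of $H$ with $h(v^\star) \geq h(x^\star) = 0$; combined with $h \leq 0$ on $H$ this forces $h(v^\star)=0$, i.e.\ $f(v^\star)/g(v^\star) = \lambda^\ast$, so the ratio's maximum is attained at $v^\star$. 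I do not anticipate any real obstacle here; the only point worth flagging is the implicit convexity assumption on $H$, which is harmless for the paper's use.
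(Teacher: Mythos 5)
Your proof is correct, but it takes a genuinely different route from the paper. The paper argues directly that the ratio $f/g$ is quasiconvex: for any $x,y\in H$ and $\lambda\in[0,1]$, it applies the mediant inequality to get $\max\bigl(f(x)/g(x),\,f(y)/g(y)\bigr)\geq \bigl(\lambda f(x)+(1-\lambda)f(y)\bigr)/\bigl(\lambda g(x)+(1-\lambda)g(y)\bigr)$, then uses convexity of $f$ and linearity of $g$ to bound this below by $f(\lambda x+(1-\lambda)y)/g(\lambda x+(1-\lambda)y)$, and concludes (implicitly invoking the same Krein--Milman/Carath\'eodory step you make explicit) that the maximum is at an extreme point. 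You instead use the classical Dinkelbach-style reduction: let $\lambda^\ast$ be the optimal ratio, observe that $h=f-\lambda^\ast g$ is convex with $h\leq 0$ on $H$ and $h(x^\star)=0$, and invoke the standard fact that a convex function on a compact convex set attains its maximum at an extreme point. Your approach buys a cleaner reduction to a well-known theorem at the cost of first establishing that the supremum is attained; the paper's approach is slightly more self-contained and avoids introducing $\lambda^\ast$, at the cost of working directly with the less standard notion of quasiconvexity of a ratio. Both correctly flag (yours explicitly, the paper's implicitly) that $H$ must be convex for ``extreme point'' to make sense, which holds in all of the paper's applications where $H$ is a simplex or polytope.
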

\begin{proof}
This follows by observing that for any $x,y \in H$, and $\lambda \in [0,1]$, as $f,g$ are non-negative on $H$,  
\[ \max \left( \frac{f(x)}{g(x)}, \frac{f(y)}{g(y)}\right) \geq \frac{ \lambda f(x) + (1-\lambda) f(y) }{\lambda g(x) + (1-\lambda) g(y) } \geq \frac { f(\lambda x + (1-\lambda)y)}{g(\lambda x + (1-\lambda) y)}\,,\]
where the last step uses the convexity of $f$ and linearity of $g$.
\end{proof}
\subsection{Min Sum Set Cover}
\label{s:mssc}
Here we show the following result for MSSC, refining the previous approximation of \cite{FLT04}, which was with respect to the integral optimum solution.
\begin{theorem}
\label{thm:mssc}
For any $\beta \geq 1$, the algorithm described above is a $\beta^2/(\beta-1)$-approximation. In particular, setting $\beta=2$ gives a $4$-approximation.
\end{theorem}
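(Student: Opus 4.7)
The plan is a per-edge analysis organized into three steps: define a proxy $c_z(e)$ equal to $\E[\cov_\tau(e)]$; bound the ratio $c_z(e)/c_x(e)$; and bound $\E[\cov_\sigma(e)]$ in terms of $c_z(e)$. The first step is immediate: since the $\alpha_v$'s are independent, $\Pr[\cov_\tau(e) \ge t] = \prod_{v\in e}(1-z_{v,<t})_+$, so set $c_z(e) := \sum_{t\ge 1}\prod_{v\in e}(1-z_{v,<t})_+ = \E[\cov_\tau(e)]$.

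For the ratio bound, I would first relax $\prod_{v\in e}(1-z_{v,<t})_+ \le \exp(-w_{e,<t})$ with $w_{e,t} := \sum_{v\in e}z_{v,t}$, noting $w_e = K y_e$ for $y_{e,t} := \sum_{v\in e}x_{v,t}$. Claim~\ref{cl:kernelog} then gives $w_{e,<t} \ge \beta\sum_{t'<t} y_{e,t'}\ln(t/t')$. Without loss of generality $\sum_t y_{e,t} = 1$: if this sum is $<1$ then $c_x(e)=\infty$; otherwise truncating to the unit simplex leaves $c_x(e) = \sum_{t'} t'\, y_{e,t'}$ unchanged and can only decrease $c_z(e)$. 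Under this normalization $c_x(e)$ is linear and the upper bound on $c_z(e)$ is convex in $y_e$, so by Fact~\ref{fact} the ratio is maximized at a spike $y_e = \mathbf{1}_{T^*}$. A direct computation gives $c_x(e) = T^*$ and $c_z(e) \le T^* + \sum_{t>T^*}(T^*/t)^\beta \le T^*\cdot\beta/(\beta-1)$ (bounding the tail by $\int_{T^*}^\infty x^{-\beta}\,dx$), so $c_z(e)/c_x(e) \le \beta/(\beta-1)$.

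To bound $\E[\cov_\sigma(e)]$, I would analyze the random tie-breaking. With $T = \cov_\tau(e)$, $V_T = \{u:\tau(u)=T\}$, $j = |V_T \cap e| \ge 1$, and $l = |V_T \setminus e|$, the position in $\sigma$ of the first $e$-vertex equals $|\{u:\tau(u)<T\}| + R$, where $R$ is the rank of the first $e$-element in a uniform random permutation of $V_T$. A standard calculation gives $\E[R\mid j,l] = 1 + l/(j+1) \le 1 + l/2$. Taking expectations, using the independence of $\tau(u)$ and $T$ for $u\notin e$ and the per-slot bound $\sum_{u\notin e}z_{u,t}\le\beta$ (Claim~\ref{cl:widthz}), the total contribution of non-$e$ vertices equals $\beta\sum_t\bigl(\Pr[T>t] + \tfrac{1}{2}\Pr[T=t]\bigr) = \beta(c_z(e) - \tfrac{1}{2})$. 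Hence $\E[\cov_\sigma(e)] \le \beta(c_z(e) - \tfrac{1}{2}) + 1 = \beta c_z(e) + 1 - \beta/2 \le \beta c_z(e)$ for $\beta\ge 2$.

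Combining gives $\E[\cov_\sigma(e)] \le (\beta^2/(\beta-1))\,c_x(e)$; summing over $e$ yields the claimed $\beta^2/(\beta-1)$-approximation against the LP, and hence against $\mathrm{OPT}$. I expect the main obstacle to be the ratio step: collapsing the many $z_{v,t}$ degrees of freedom to the single variable $y_e$ requires the exponential relaxation and Claim~\ref{cl:kernelog}, and the tight constant $\beta/(\beta-1)$ only emerges after normalizing to the unit simplex, invoking Fact~\ref{fact}, and estimating the resulting power-law tail at the spike $y_e = \mathbf{1}_{T^*}$. The tie-breaking analysis is also delicate: the extra factor $1/2$ saved by uniform random tie-breaking (relative to the naive bound $\E[R] \le l+1$) is precisely what cancels the additive slack and yields $\E[\cov_\sigma(e)] \le \beta c_z(e)$ at $\beta=2$.
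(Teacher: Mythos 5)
Your proposal follows essentially the same path as the paper: the same kernel, the same exponential relaxation of $\prod_{v\in e}(1-z_{v,<t})_+$, Claim~\ref{cl:kernelog} to pass to $\beta\sum_{t'\le t}y_{e,t'}\ln(t/t')$, Fact~\ref{fact} to reduce to a spike on the unit simplex, the power-law tail estimate, and the random tie-breaking giving an extra factor of two. Two points deserve attention.

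First, your justification for reducing to $\|y_e\|_1 = 1$ is stated in the wrong direction. If $\|y_e\|_1 > 1$ and you truncate the cumulative profile down to one, $c_x(e)=\sum_t(1-y_{e,<t})_+$ is indeed unchanged, but the exponential bound $\sum_t e^{-(Ky_e)_{<t}}$ can only \emph{increase} (not decrease), since $K$ has non-negative entries. That is exactly what you need: the ratio does not drop under truncation, so its maximum over the unit simplex dominates the general case. Your phrase ``can only decrease $c_z(e)$'' would, if taken literally, leave the WLOG unjustified. The paper instead replaces $\sum_{v\in e}x_{v,t'}$ by $x_{e,t'}=u_{e,t'}-u_{e,t'+1}\le\sum_{v\in e}x_{v,t'}$, which has $\|x_e\|_1=1$ exactly and, being pointwise smaller, only increases the exponential bound --- the same reduction with the direction stated correctly.

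Second, your tie-breaking bound only uses $\sum_{u\notin e}z_{u,<t}\le\beta(t-1)$ and $\sum_{u\notin e}z_{u,t}\le\beta$, yielding $\E[\cov_\sigma(e)]\le\beta c_z(e)+1-\beta/2$, which gives $\beta c_z(e)$ only for $\beta\ge2$. The theorem as stated claims the guarantee $\beta^2/(\beta-1)$ for all $\beta\ge1$. The paper's proof in \S\ref{sec:conditioning} is tighter: it subtracts the contribution of $e$ itself, writing $\E[|A_{<t}|]\le\beta(t-1)-z_{e,<t}$ and $\E[|A_t|]\le\beta-z_{e,t}$, and then shows via a convexity/integral estimate that $\sum_t z_{e,t}(p_t(e)+p_{t+1}(e))\ge2$, which absorbs the additive $1-\beta/2$ for every $\beta\ge1$. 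Your version suffices for the headline $\beta=2$ / $4$-approximation, but does not establish the full stated range of $\beta$.
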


Before proving this theorem, we first give some notation. Recall that $\cov_\tau(e)$ is the cover time of $e$ in the tentative schedule $\tau$, and $\cov_\sigma(e)$ in the proper schedule $\sigma$. Moreover, $c_x(e)$ is the LP cost for $e$.  

By the definition of the schedule $\tau$, note that $\E[\cov_\tau(e)] = \sum_{t} p_t(e)$, where
 $p_t(e)$ is the probability that edge $e$ is uncovered at the beginning of time-slot $t$ in $\tau$.
 As each $v$ is rounded independently using $\alpha$-point rounding,
 \beq \label{eq:pt} p_t(e) = \prod_{v \in e} (1-z_{v,<t})_+   \leq \exp(-\sum_{v \in e} z_{v,<t}). \eeq
 Let us define
\[c_z(e) = \sum_t \exp(-\sum_{v\in e} z_{v,<t}),\]
and note that it upper bounds $\expec{}{\cov_\tau(e)}$.
Theorem~\ref{thm:mssc} is a direct consequence of the next two lemmas. 

\begin{lemma} 
\label{lem:tentcostmssc} For each edge $e$, \ $c_z(e)\leq \frac{\beta}{\beta-1} c_x(e)$.
\end{lemma}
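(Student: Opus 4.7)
The plan is to upper bound $c_z(e)$ by a quantity depending only on the aggregate schedule $y_e := (\sum_{v \in e} x_{v,t})_t$ and then reduce the extremal problem to a single ``point mass'' $y_e$. Applying Claim~\ref{cl:kernelog} to $a = y_e$ (with $b = K y_e = \sum_{v \in e} z_v$ by linearity of $K$) yields $\sum_{v \in e} z_{v,<t} \geq \beta \sum_{t' < t} y_{e,t'} \ln(t/t')$, and combining with \eqref{eq:pt} gives
\[ c_z(e) \;\leq\; \tilde c_z(y_e), \qquad \tilde c_z(y) := \sum_{t} \exp\!\Bigl(-\beta \sum_{t' < t} y_{t'} \ln(t/t')\Bigr). \]
Since $c_x(e) = \sum_t (1-y_{e,<t})_+$ also depends only on $y_e$, it suffices to show $\tilde c_z(y)/c_x(y) \leq \beta/(\beta-1)$ for every nonnegative $y$.

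Next, I would reduce to the simplex $H := \set{y \in \R_{\geq 0}^T : \sum_t y_t = 1}$, where $T$ is the first time at which $y_{<T} \geq 1$. Replacing $y_T$ by $1-y_{<T}$ and zeroing $y_t$ for $t>T$ only decreases each coordinate of $y$: this keeps $c_x(y)$ unchanged (since those terms of the sum were already $0$) while decreasing each $(Ky)_{<t}$, hence increasing $\tilde c_z$. So the ratio only grows under this modification, and it suffices to consider $y \in H$. On $H$, $c_x(y) = \sum_{t=1}^T (1-y_{<t})$ is linear in $y$ and bounded below by $1$, while $\tilde c_z$ is convex as a sum of exponentials of negative linear forms of $y$.

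Fact~\ref{fact} then implies that the maximum of $\tilde c_z/c_x$ over $H$ is attained at an extreme point of the simplex, namely at $y = \mathbbm{1}_{t^*}$ for some $t^* \in \set{1, \ldots, T}$. At such a point, $c_x(\mathbbm{1}_{t^*}) = t^*$ and
\[ \tilde c_z(\mathbbm{1}_{t^*}) \;=\; t^* + \sum_{t > t^*}(t^*/t)^\beta \;\leq\; t^* + \int_{t^*}^\infty (t^*/t)^\beta\, dt \;=\; \frac{\beta}{\beta-1}\, t^*, \]
yielding the desired bound $\beta/(\beta-1)$ (which is tight in the limit $t^* \to \infty$).

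The main obstacle I expect is justifying the reduction step cleanly. The exp-of-linear structure makes $\tilde c_z$ naturally convex in $y$, but $c_x$ is only piecewise linear in $y$ due to the $(\cdot)_+$; the truncation-to-simplex trick precisely removes the kinks so that Fact~\ref{fact} applies. Once this is in place, the remaining computation is a standard harmonic/geometric tail estimate.
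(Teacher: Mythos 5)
Your proof is correct and takes essentially the same route as the paper. The truncated vector $y'$ you construct is exactly the paper's $x_{e,t} := u_{e,t} - u_{e,t+1}$: unrolling the definitions, $x_{e,t} = (1 - y_{e,<t})_+ - (1 - y_{e,<t+1})_+$, which equals $y_{e,t}$ before the cumulative sum crosses $1$, equals the residual $1 - y_{e,<T}$ at the crossing time, and is $0$ thereafter. From there the remaining steps — convexity of the $\exp$-of-linear numerator, Fact~\ref{fact} to reduce to an extreme point of the simplex, and the integral comparison $t^* + \sum_{t>t^*}(t^*/t)^\beta \leq \frac{\beta}{\beta-1}t^*$ — coincide with the paper's. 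One small slip to fix: you define $T$ as ``the first time at which $y_{<T} \geq 1$,'' but then replace $y_T$ by $1 - y_{<T}$, which would be $\leq 0$. You mean $T$ to be the first time with $y_{\leq T} \geq 1$ (equivalently, $y_{<T} < 1 \leq y_{<T+1}$), so that $1 - y_{<T} \in [0, y_T]$ and the replacement is a genuine decrease of a nonnegative coordinate.
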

\begin{lemma}\label{lem:msscfin} For each edge $e$,\  $\expec{}{\cov_\sigma(e)} \leq \beta c_z(e)$.
\end{lemma}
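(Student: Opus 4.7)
The plan is to bound $\cov_\sigma(e)$ in terms of the structure of the tentative schedule $\tau$, and then take expectation by exploiting the independence between the rounding choices for vertices inside $e$ and those outside $e$, together with the $\beta$-width bound of Claim~\ref{cl:widthz}.

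Let $T = \cov_\tau(e)$ and let $W_t = |\{v : \tau_{v,t}=1\}|$ be the number of vertices scheduled to slot $t$ in $\tau$; decompose $W_t = W_t^e + W_t^{\bar e}$ according to whether the vertex lies in $e$. Since $k_e = 1$, $T$ is the first slot in $\tau$ at which some vertex of $e$ appears, so $W_t^e = 0$ for $t < T$ and $W_T^e \ge 1$. In the proper schedule $\sigma$, the first $e$-vertex is preceded exactly by (i) every non-$e$ vertex scheduled at times $< T$ in $\tau$, and (ii) those non-$e$ vertices at time $T$ that win the random tie-break against every $e$-vertex at time $T$. A uniform random order on a fixed non-$e$ vertex at time $T$ together with the $W_T^e$ tied $e$-vertices places that non-$e$ vertex first with probability $1/(W_T^e+1)$, so by linearity of expectation and $W_T^e \ge 1$,
\[
\expec{}{\cov_\sigma(e) \mid \tau} \;\le\; 1 + \sum_{t < T} W_t^{\bar e} + \frac{W_T^{\bar e}}{W_T^e + 1} \;\le\; 1 + \sum_{t < T} W_t^{\bar e} + \frac{W_T^{\bar e}}{2}.
\]

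Next I take the outer expectation. The key observation is that $W_t^{\bar e}$ is a function solely of $\{\alpha_v : v \notin e\}$, while $T$ and $W_t^e$ are functions solely of $\{\alpha_v : v \in e\}$; since the $\alpha_v$ are independent, $W_t^{\bar e}$ is independent of the indicators $\mathbbm{1}[T>t]$ and $\mathbbm{1}[T=t]$. By Claim~\ref{cl:widthz}, $\expec{}{W_t^{\bar e}} \le \sum_v z_{v,t} \le \beta$, so linearity gives
\[
\expec{}{\sum_{t<T} W_t^{\bar e}} = \sum_{t \ge 1}\expec{}{W_t^{\bar e}}\Pr[T>t] \le \beta(\expec{}{T}-1), \qquad \expec{}{W_T^{\bar e}} = \sum_t \expec{}{W_t^{\bar e}}\Pr[T=t] \le \beta.
\]

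Combining yields $\expec{}{\cov_\sigma(e)} \le 1 + \beta(\expec{}{T}-1) + \beta/2 = \beta\expec{}{T} + (1-\beta/2)$, which is at most $\beta\expec{}{T}$ when $\beta \ge 2$ (the range used for MSSC). Finally, by \eqref{eq:pt}, $\expec{}{T} = \sum_t p_t(e) \le \sum_t \exp(-\sum_{v\in e} z_{v,<t}) = c_z(e)$, completing the proof. The most delicate step is the conditional bound via random tie-breaking: without the $1/(W_T^e+1)$ improvement the argument would lose an additional $\beta/2$ that the slack $c_z(e)-\expec{}{T}$ does not generally cover; the fact that $W_T^e \ge 1$ (forced by the definition of $T$) is exactly what makes the tie-breaking yield the correct constant.
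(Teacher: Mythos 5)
Your argument follows the same skeleton as the paper's (condition on the tentative cover time $T=\cov_\tau(e)$; decompose the delay into vertices scheduled strictly before $T$ and those tied with it; exploit the independence of $\alpha_v$ for $v\notin e$ from the $e$-randomness; use the random tie-break to halve the contribution of the tied slot). The computation $\expec{}{\cov_\sigma(e)\mid\tau}\le 1+\sum_{t<T}W_t^{\bar e}+W_T^{\bar e}/2$ and the subsequent use of independence with Claim~\ref{cl:widthz} are both correct.

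However, there is a gap relative to the statement as written. You bound $\expec{}{W_t^{\bar e}}\le\beta$, which yields $\expec{}{\cov_\sigma(e)}\le\beta\expec{}{T}+(1-\beta/2)$ and hence requires $\beta\ge 2$ to conclude. But Lemma~\ref{lem:msscfin} has no restriction on $\beta$, and Theorem~\ref{thm:mssc} explicitly asserts a $\beta^2/(\beta-1)$ guarantee for \emph{all} $\beta\ge 1$; your proof does not establish that. The paper avoids this by using the tighter width bound $\expec{}{W_t^{\bar e}}\le\beta-z_{e,t}$ (i.e.\ subtracting the budget consumed by $e$'s own vertices), which introduces the terms $z_{e,<t}$ and $z_{e,t}$ into the accounting; closing the resulting inequality for $\beta\in[1,2)$ then needs the extra step in Section~\ref{sec:conditioning}, namely the trapezoidal/convexity bound showing $\sum_t z_{e,t}\bigl(p_t(e)+p_{t+1}(e)\bigr)\ge 2$. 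Since the headline $4$-approximation only uses $\beta=2$, your restricted version suffices for that corollary, but as a proof of the stated lemma you would need to either add the $-z_{e,t}$ refinement and the convexity argument, or explicitly weaken the lemma to $\beta\ge 2$. A minor additional remark: passing through $\expec{}{T}\le c_z(e)$ is slightly lossier than the paper's direct comparison with $c_z(e)$, though this does not by itself cause any problem here.
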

Intuitively, Lemma~\ref{lem:msscfin} follows from what we observed as Claim~\ref{cl:widthz}: As $\tau$ has at most $\beta$ vertices in expectation at any time $t$,  the expected cover time of $e$ in $\sigma$ should be at most $\beta$ times that in $\tau$, so can be upper bounded by $\beta c_z(e)$. However, to make this intuition precise, one needs to {\em condition} on when some vertex of $e$ was first scheduled in $\tau$.

We address this in \S\ref{sec:conditioning}, and focus here on Lemma~\ref{lem:tentcostmssc} instead.

 \begin{proof}[Proof of Lemma \ref{lem:tentcostmssc}]
 As $z_v = K x_v$, by Claim \ref{cl:kernelog}, we have that $z_{v,<t} \geq \beta\sum_{t' \leq t} x_{v,t'} \ln(t/t')$, and hence
 \[ c_z(e) =  \sum_t \exp(-z_{v,<t}) \leq \sum_t \exp \Big(-\beta\sum_{v \in e}\sum_{t^{\prime}\leq t} x_{v,t^{\prime}} \ln (t/t^{\prime}) \Big).\]
 
 Recalling that $c_x(e) = \sum_t u_{e,t} = \sum_t (1-  \sum_{v \in e} x_{v,<t})_+$, to prove the lemma, it suffices to show that 
\beq
\label{lem:ptbound}
\sum_t \exp \Big(-\beta\sum_{t^{\prime}\leq t}\sum_{v \in e} x_{v,t^{\prime}} \ln (t/t^{\prime}) \Big) \leq \frac{\beta}{(\beta-1)}   \sum_t (1-  \sum_{v \in e} x_{v,<t})_+.
\eeq
holds for any setting of $x_{v,t}$ variables.

Let us define $x_{e,t} = u_{e,t} - u_{e,t+1}$ (the amount of $e$ satisfied at time $t$ in solution $x$). Noting that $x_{e,t} \leq \sum_{v \in e} x_{v,t}$, replacing $\sum_{v \in e}x_{v,t'}$ by $x_{e,t'}$ can only increase the left hand side of~\eqref{lem:ptbound}.

 Moreover, $x_{e,<T} = 1$ for large enough $T$, as $e$ will eventually be satisfied by the LP.  So, the cost $c_x(e)$ can be written as $\sum_t t x_{e,t}$ and our goal is to show that for any $x_{e,t}\geq 0$ satisfying $\sum_t x_{e,t} =1$, the following inequality holds:
\[ \sum_t \exp \Big(-\beta \sum_{t^{\prime}\leq t}  x_{e,t'} \ln(t/t^{\prime}) \Big) \leq \frac{\beta}{(\beta-1)} \sum_t t x_{e,t}\,. \]
\noindent
To this end, we consider the following optimization problem in  the variables $a_t$.
\[ (\P) \qquad  \label{def:probmssc} \text{Maximize }  \frac{\sum_t \exp(-\beta\sum_{t^{\prime}\leq t} a_{t'} \ln\frac{ t}{t^{\prime}})}{\sum_t ta_t} \qquad \text{ s.t. }\  \|a\|_1 =1, \qquad a_t \geq 0\,.  \]

\begin{claim}
\label{cl:optmssc}
For any $\beta > 1$, the optimum value of the problem $\P$ is at most $\beta/(\beta-1)$.
\end{claim}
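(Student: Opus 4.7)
The plan is to reduce the maximization in $(\P)$ to a one-dimensional problem using the extremality fact, and then estimate the resulting single-parameter expression via an integral comparison.

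First, I would observe that the numerator
\[ f(a) = \sum_t \exp\!\Big(-\beta\sum_{t'\leq t} a_{t'}\ln(t/t')\Big) \]
is a convex function of $a$, since each summand is the exponential of an affine function of $a$, and the denominator $g(a)=\sum_t t\,a_t$ is linear and (strictly) positive on the closed simplex $H = \{a \geq 0 : \|a\|_1 = 1\}$. Since this is an infinite-dimensional simplex, rather than invoke Fact~\ref{fact} as a black box, I would give a direct convex-combination argument in the same spirit: for any $a\in H$, write $a = \sum_i a_{t_i} e_{t_i}$ where $\{t_i\}$ is the support of $a$ and $e_{t_i}$ is the indicator at $t_i$. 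Then convexity of $f$ and linearity of $g$ give
\[ \frac{f(a)}{g(a)} \;\leq\; \frac{\sum_i a_{t_i}\, f(e_{t_i})}{\sum_i a_{t_i}\, g(e_{t_i})} \;\leq\; \sup_{i} \frac{f(e_{t_i})}{g(e_{t_i})}, \]
so it suffices to bound $f(e_{t^*})/g(e_{t^*})$ uniformly in $t^*\in\mathbb{N}$.

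Second, I would plug $a = e_{t^*}$ into the objective. For $t < t^*$ the inner sum is empty, contributing $e^0 = 1$, while for $t \ge t^*$ the exponent equals $-\beta\ln(t/t^*)$. Hence
\[ f(e_{t^*}) = (t^*-1) + \sum_{t\ge t^*}(t^*/t)^\beta, \qquad g(e_{t^*}) = t^*. \]
The tail sum is the main quantity to estimate. I would pull out the $t=t^*$ term and use the standard comparison $\sum_{t\ge t^*+1} h(t) \le \int_{t^*}^\infty h(x)\,dx$ for the decreasing function $h(x)=(t^*/x)^\beta$ (valid because $\beta>1$), obtaining
\[ \sum_{t\ge t^*}(t^*/t)^\beta \;\le\; 1 + \int_{t^*}^{\infty}(t^*/x)^\beta\,dx \;=\; 1 + \frac{t^*}{\beta-1}. \]

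Putting these together yields
\[ \frac{f(e_{t^*})}{g(e_{t^*})} \;\le\; \frac{(t^*-1)+1+\tfrac{t^*}{\beta-1}}{t^*} \;=\; 1 + \frac{1}{\beta-1} \;=\; \frac{\beta}{\beta-1}, \]
which is independent of $t^*$ and completes the proof. The step I expect to require the most care is the reduction to extreme points, since Fact~\ref{fact} is stated for a compact region whereas the feasible set of $(\P)$ is an infinite-dimensional simplex; the convex-combination argument above is what replaces compactness, and one should verify that both $f(a)$ and $g(a)$ are finite for every feasible $a$ (which follows since $\sum_{t'\le t} a_{t'}\ln(t/t') \to \infty$ as $t \to \infty$, giving polynomial decay of rate $\beta>1$ in the tail of $f(a)$). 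Everything else is a routine integral estimate.
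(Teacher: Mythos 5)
Your proof is correct and follows essentially the same approach as the paper: reduce to unit vectors $a = e_{t^*}$ by convexity of the numerator and linearity of the denominator, then bound $\sum_{t>t^*}(t^*/t)^\beta$ by the integral $\int_{t^*}^\infty (t^*/x)^\beta\,dx = t^*/(\beta-1)$. Your explicit Jensen/mediant argument in place of invoking Fact~\ref{fact} is a reasonable precaution about the non-compact simplex, though in the actual application $a$ (coming from $x_{e,t}$) has finite support, so the distinction is largely cosmetic.
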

\begin{proof}
Consider the numerator of the objective in $\P$. Each summand $\exp(-\beta \sum_{t' \leq t} a_{t'} \ln (t/t'))$ is of the form  $f=\exp(-\sum_{t'} c_{t'} a_{t'})$  
(note $c_{t'} = \beta \ln (t/t')$ are constants), and hence is convex in $a$. As the sum of convex functions is convex, the numerator is convex. 
Moreover, the domain $H$ of $a$  is the non-negative simplex and the denominator of $\P$ is linear and non-negative on $H$. By Fact \ref{fact}, the optimum of $\P$ is at some extreme point of the unit simplex, which must be of the form $a_u = 1$ for some $u$ and $a_{u'}=0,u' \neq u$.

For such an extreme point, the objective of $\P$ is easy to compute. In particular, the denominator is exactly $u$ and the numerator is 
\[  \sum_{t=1}^u  1  +  \sum_{t > u}  \exp (-  \beta \ln t/u) = u  + \sum_{t > u} (u/t)^\beta \leq u + \int^\infty_{x=u} (u/x)^\beta dx = u\Bigl(1 + \frac{1}{\beta-1}\Bigr) = \frac{\beta}{\beta-1} u.\qedhere \]
\end{proof}
\noindent This proves that $c_z(e)\leq \beta/(\beta-1) c_x(e)$ as desired. 
 \end{proof}
 
\subsection{Min Latency Set Cover}
\label{sec:minlat}
We now consider the min latency setting where $k_e  = |e|$, and show the following.
\begin{theorem}
\label{thm:lat} For any $\beta \geq 1$, the algorithm is a
$\beta e^{1/\beta}$-approximation. In particular for $\beta=1$, this gives an $e$-approximation.\footnote{The use of $e$ both for an edge and the constant $e=2.718\ldots$ is a bit unfortunate, but hopefully it is not confusing.}
\end{theorem}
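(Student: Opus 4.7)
The plan is to instantiate Algorithm~\ref{meta-algorithm} with the kernel $K(t,t') = \beta/t$ for $t' \le t$ as in \S\ref{s:mssc}, and to mirror the three-step per-edge analysis of Theorem~\ref{thm:mssc}, adjusting only for the fact that min-latency requires every vertex of $e$ to appear before $e$ is covered. Concretely: (i) lower bound the LP cost $c_x(e)$ by a per-vertex quantity via the KC inequalities; (ii) define a cost $c_z(e) \ge \expec{}{\cov_\tau(e)}$ and prove $c_z(e) \le e^{1/\beta}\, c_x(e)$; and (iii) pay a further factor $\beta$ from the tentative-to-proper conversion via an analog of Lemma~\ref{lem:msscfin}.

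For (i), because $k_e=|e|$, the KC inequality~\eqref{eq:lpcov} with $S = e\setminus\set{v}$ simplifies to $u_{e,t} \ge 1 - x_{v,<t}$ for every $v \in e$. WLOG $\sum_t x_{v,t}=1$ at the LP optimum (otherwise $e$ is never covered), so summing in $t$ and using $\sum_t (1-x_{v,<t})_+ = \sum_t t\,x_{v,t}$ yields $c_x(e) \ge \max_{v\in e} T_v^\ast$, where $T_v^\ast := \sum_t t\, x_{v,t}$. For (ii), since $e$ is covered in $\tau$ only once every $v\in e$ has been tentatively scheduled, and the $\alpha$-point choices are independent across vertices,
\[
\expec{}{\cov_\tau(e)} \;=\; \sum_t \Bigl(1 - \prod_{v\in e} (z_{v,<t}\wedge 1)\Bigr).
\]
Letting $T_v^z := \min\set{t : z_{v,<t}\ge 1}$, for $t > \max_{v\in e} T_v^z$ each factor of the product equals $1$ and the summand vanishes. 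Hence $\expec{}{\cov_\tau(e)} \le c_z(e) := \max_{v\in e} T_v^z$.

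The main step, and the only real obstacle, is the per-vertex kernel bound $T_v^z \le e^{1/\beta}\, T_v^\ast$. By Claim~\ref{cl:kernelog}, it suffices to prove $\beta \sum_{t'\le t} x_{v,t'} \ln(t/t') \ge 1$ at $t := e^{1/\beta}\, T_v^\ast$. When the support of $x_v$ lies in $[1,t]$, Jensen's inequality on the concave $\log$ directly gives $\sum_{t'} x_{v,t'}\ln t' \le \ln T_v^\ast$, from which the inequality follows immediately. The delicate case is when some mass $1-p := \sum_{t'>t} x_{v,t'}$ of $x_v$ lies beyond $t$: the tail contributes at least $(1-p)\,e^{1/\beta}\, T_v^\ast$ to $T_v^\ast = \sum_{t'} t'\,x_{v,t'}$, which both forces $p > 1 - e^{-1/\beta}$ (otherwise the tail alone already saturates $T_v^\ast$) and bounds the prefix first moment by $\sum_{t'\le t} t'\,x_{v,t'} \le T_v^\ast\bigl(1 - e^{1/\beta}(1-p)\bigr)$. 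Applying Jensen to the renormalized prefix distribution and substituting this bound reduces the task to verifying the one-variable inequality $(1-q)\, e^{1/\beta}/\bigl(1 - q\, e^{1/\beta}\bigr) \ge \exp\bigl(1/(\beta(1-q))\bigr)$ for $q = 1-p \in [0, e^{-1/\beta})$; this holds with equality at $q=0$ and can be checked throughout the interval using that the difference of the two logs has non-negative derivative at $0$ (via $e^{1/\beta} \ge 1 + 1/\beta$) and blows up as $q \to e^{-1/\beta}$.

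Combining, $\expec{}{\cov_\tau(e)} \le c_z(e) \le e^{1/\beta}\, c_x(e)$. For step (iii), the width bound $\sum_v z_{v,t} \le \beta$ from Claim~\ref{cl:widthz} controls the slowdown from $\tau$ to $\sigma$, and the same conditioning argument used for Lemma~\ref{lem:msscfin} yields $\expec{}{\cov_\sigma(e)} \le \beta\, \expec{}{\cov_\tau(e)} \le \beta\, c_z(e)$. Summing over edges gives the $\beta e^{1/\beta}$-approximation, and $\beta = 1$ recovers the claimed $e$-approximation.
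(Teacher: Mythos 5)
Your overall structure is sound, but two of your three steps need correction, and your step (ii) takes a genuinely different route from the paper's that is worth contrasting.

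For step (ii), the paper does not work per vertex; it passes to the per-edge quantity $x_{e,t} = u_{e,t} - u_{e,t+1}$, defines $t^*$ by the balance equation $\beta \sum_{t'\le \lfloor t^*\rfloor} x_{e,t'}\ln(t^*/t')=1$, and then shows $c_x(e) \ge e^{-1/\beta} t^*$ by setting up a two-constraint LP and arguing over its extreme points (Claim~\ref{cl:optk1}, with two cases depending on whether both support points lie below $t^*$). Your approach — bound $c_x(e)$ by $\max_v T_v^*$, prove the per-vertex statement $T_v^z \le e^{1/\beta} T_v^*$ via Jensen on $\ln$, and split on whether any mass of $x_v$ lies beyond $e^{1/\beta} T_v^*$ — is a direct, analytic alternative that avoids the LP-extreme-point machinery. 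However, your justification of the resulting one-variable inequality is not rigorous: a non-negative derivative at $q=0$ plus blow-up at $q=e^{-1/\beta}$ does not rule out a dip in between. Fortunately, the inequality is true and has a clean proof: with $\eta=e^{1/\beta}$ and $m=1/\beta$, the function $L(q) = \ln(1-q) - \ln(1-q\eta) - mq/(1-q)$ has $L(0)=0$ and, expanding, $L'(q)\ge 0$ reduces to $(1-q)(\eta-1)\ge m(1-q\eta)$, which holds because $\eta-1=e^m-1\ge m$ and $1-q\ge 1-q\eta$. So $L$ is monotone and the inequality holds on the whole interval — you should replace the heuristic check with this.

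For step (iii), the reference to "the same conditioning argument used for Lemma~\ref{lem:msscfin}" is wrong in two ways. First, that argument (Section~\ref{sec:conditioning}) is written for $k_e=1$: it begins from "the expected cover time of $e$ in $\sigma$ is $1$ plus the delay due to vertices not in $e$" and divides the tie-break count by $k+1$ with $k\ge 1$ — neither of which transfers directly to $k_e=|e|$. Second, it proves $\expec{}{\cov_\sigma(e)}\le\beta c_z(e)$, not the intermediate inequality $\expec{}{\cov_\sigma(e)}\le\beta\,\expec{}{\cov_\tau(e)}$ that you write. In fact no conditioning is needed for min-latency: since $z_{v,\le t_e}\ge 1$ for every $v\in e$, the event $\cov_\tau(e)\le t_e$ holds with probability $1$, so $\cov_\sigma(e)$ is deterministically at most the number of vertices scheduled in $\tau$ at time $\le t_e$, which is $k_e$ plus the number of vertices not in $e$ at time $\le t_e$. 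Taking expectations and using $\sum_v z_{v,\le t_e}\le \beta t_e$ together with $z_{v,\le t_e}\ge 1$ for $v\in e$ gives $\expec{}{\cov_\sigma(e)}\le k_e + (\beta t_e - k_e) = \beta t_e = \beta c_z(e)$. This is exactly the paper's Lemma~\ref{cl:ketentofin}, and it is much simpler than the argument you tried to import.

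Finally, a small bookkeeping note: you define $T_v^z$ via the strict prefix $z_{v,<t}\ge 1$ while the paper's $t_e$ uses $z_{v,\le t_e}\ge 1$, an off-by-one discrepancy; and your choice $t:=e^{1/\beta}T_v^*$ is not an integer, so Claim~\ref{cl:kernelog} must be applied at $\lfloor t\rfloor$ (or $\lceil t\rceil$) with the same rounding care the paper exercises with $\lfloor t^*\rfloor$. These are easy to repair but should be made precise.
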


For min latency, the KC constraints by \eqref{eq:lpcov} imply, for each $e,t$ and $v \in e$,  that $u_{e,t} + x_{v,<t} \geq 1$. In particular, we can assume $u_{e,t} = \max_{v\in e}(1-x_{v,<t})$.
Let us define $x_{e,t} = u_{e,t}-u_{e,t+1}$ (the amount of $e$ satisfied by the LP at time $t$), and note that $x_{e,<t} = 1-u_{e,t}$. So we have that $x_{v,<t} \geq x_{e,<t}$ for all $v \in e$ and all $t$, i.e., $x_v \succeq x_e$.

For an edge $e$, let $t_e$ be the earliest time  such that
$z_{v, \leq t_e} \geq 1$ for all $v \in e$. We define $c_z(e)=t_e$. Note that the $\alpha$-point rounding will surely schedule each $v \in e$ in $\tau$ by time  $t_e$, and hence $\cov_\tau(e) \leq c_z(e)$ with probability $1$. 

Theorem~\ref{thm:lat} is a direct consequence of the following Lemmas~\ref{lem:ketent} and \ref{cl:ketentofin}.

\begin{lemma}\label{lem:ketent} For each edge $e$, \ $c_z(e) \leq e^{1/\beta} c_x(e)$.
\end{lemma}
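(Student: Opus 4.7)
The plan is to reduce the ``for all $v \in e$'' condition $z_{v,\leq t} \ge 1$ to a single one-dimensional condition on $w := Kx_e$, then find an explicit small $t$ at which $w$ has accumulated mass $\ge 1$ using Jensen's inequality. Recall from the setup just preceding the lemma that, for the min-latency LP, $x_v \succeq x_e$ for every $v \in e$. Since $K$ is order-preserving (monotonicity half of Claim~\ref{cl:kernelog}), this yields $z_v = Kx_v \succeq Kx_e = w$, and in particular $z_{v,\le t} \ge w_{\le t}$ simultaneously for every $v \in e$. Thus it suffices to produce a small $t$ with $w_{\le t} \ge 1$, since such $t$ will be an upper bound on $t_e = c_z(e)$.

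Next I would apply the logarithmic inequality from Claim~\ref{cl:kernelog}:
\[
w_{<t} \;\ge\; \beta \sum_{t' < t} x_{e,t'} \ln(t/t').
\]
Here $(x_{e,t'})_{t'}$ is a probability distribution on the positive integers: $u_{e,t}$ is non-increasing, $u_{e,1}=1$, and $u_{e,t} \to 0$, so $\sum_{t'} x_{e,t'} = 1$. A telescoping/double-counting identity further gives $c_x(e) = \sum_t u_{e,t} = \sum_{t'} t' \, x_{e,t'}$, i.e.\ $c_x(e)$ is exactly the mean of this distribution.

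Now I would invoke Jensen's inequality for the concave function $\ln$:
\[
\sum_{t'} x_{e,t'} \ln(t/t') \;=\; \ln t - \sum_{t'} x_{e,t'} \ln t' \;\ge\; \ln t - \ln c_x(e) \;=\; \ln\!\bigl(t/c_x(e)\bigr).
\]
Restricting the sum to $t' < t$ only drops terms with $\ln(t/t') \le 0$, so the same lower bound holds for $\sum_{t' < t} x_{e,t'}\ln(t/t')$. Picking $t = \lceil e^{1/\beta} c_x(e)\rceil$ makes $\ln(t/c_x(e)) \ge 1/\beta$, hence $w_{<t} \ge 1$ and so $w_{\le t-1} \ge 1$. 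This forces $t_e \le t-1 \le e^{1/\beta} c_x(e)$, which is the desired bound $c_z(e) \le e^{1/\beta} c_x(e)$.

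The proof is really quite short once one spots the right reduction; I do not expect a serious obstacle. The one place to be careful is the bookkeeping between $w_{<t}$ and $w_{\le t}$ (which is why the ceiling appears, and why $t_e \le t-1$ is the precise conclusion) and the verification that $(x_{e,t'})$ truly is a probability distribution with mean $c_x(e)$, since all subsequent mileage is extracted from Jensen applied to this specific mean.
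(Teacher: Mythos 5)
Your proof is correct but takes a genuinely different and cleaner route than the paper's. The paper defines $t^*$ implicitly by $\beta\sum_{t'\leq\lfloor t^*\rfloor}x_{e,t'}\ln(t^*/t')=1$, shows $t_e\leq\lfloor t^*\rfloor$, and then lower-bounds $c_x(e)\geq e^{-1/\beta}t^*$ via an explicit optimization problem $(\P)$: it reduces to an extreme point supported on at most two coordinates and does a two-case analysis according to whether the second support point falls before or after $\lfloor t^*\rfloor$, invoking convexity of $e^{-x}$ in each case. You short-circuit all of this by recognizing that $(x_{e,t'})_{t'}$ is a probability distribution whose mean is exactly $c_x(e)$, picking the target time $t=\lceil e^{1/\beta}c_x(e)\rceil$ explicitly, and applying Jensen for $\ln$ to the full sum $\sum_{t'}x_{e,t'}\ln(t/t')\geq\ln(t/c_x(e))$ in one shot — the same convexity the paper deploys case by case. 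Your observation that dropping terms with $t'\geq t$ only increases the sum (they contribute $\ln(t/t')\leq 0$) correctly reconciles the full-support Jensen bound with the truncated sum appearing in Claim~\ref{cl:kernelog}, whereas the paper restricts to $t'\leq\lfloor t^*\rfloor$ from the start via its constraint. The bookkeeping is sound: $c_x(e)\geq u_{e,1}=1$ ensures $t\geq 2$, and $t_e\leq t-1=\lceil e^{1/\beta}c_x(e)\rceil-1\leq e^{1/\beta}c_x(e)$. What your argument buys is directness — no optimization problem, no extreme-point reduction, no case split — at the small cost of having to identify the right probability distribution and target time up front.
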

\begin{proof}
Fix an edge $e$. Let $t^*$ (not necessarily an integer) be such that
\beq \label{eq:t*-lat} \beta\sum_{t' \leq \lfloor t^* \rfloor} x_{e,t'}\ln\frac{t^*}{t'} = 1\,. \eeq
We claim that $t_e \leq \lfloor t^* \rfloor$. This follows as for any $v \in e$, by Claim \ref{cl:kernelog} with $t= \lfloor t^* \rfloor+1$, 
\[ z_{v,\leq \lfloor t^* \rfloor} = (K x_v)_{\leq \lfloor t^* \rfloor} \geq (K x_e)_{\leq \lfloor t^* \rfloor} \geq \beta \sum_{t' \leq \lfloor t^* \rfloor} x_{e,t'} \ln \frac{\lfloor t^* \rfloor+1}{t'} \geq \beta \sum_{t' \leq \lfloor t^* \rfloor} x_{e,t'} \ln \frac{t^*}{t'} = 1\]
where  the first and second inequalities use Claim~\ref{cl:kernelog} and $x_{v} \succeq x_{e}$, and the last equality uses \eqref{eq:t*-lat}.

As $c_z(e) = t_e \leq \lfloor t^* \rfloor \leq t^*$, to prove the lemma it suffices to show that $c_x(e) \geq e^{-1/\beta} t^*$. As $c_x(e)=\sum_t t x_{e,t}$, this is equivalent to showing that the optimum solution of the following optimization problem with variables $a_t$ (corresponding to $x_{e,t}$), is at least $ e^{-1/\beta} t^* $. 
\[ (\P) \label{def:probke} \qquad 
 \text{Minimize}\  \sum_t ta_t \qquad \text{ s.t. } \|a\|_1 =1 \ \mbox{ and } \ \ \  \beta \sum_{t \leq \lfloor t^*\rfloor} a_t \ln \frac{t^*}{t} = 1\,.  \]
The first constraint follows as $x_{e,\leq T} =1$ for $T$ large enough (since there are $n$ vertices, we can assume that $x_{v,\leq n} = 1$ for all $v$), and the second constraint is by the relation \eqref{eq:t*-lat} defining  $t^*$.
\begin{claim} The optimum of $\P$ is at least $t^* e^{-1/\beta}.$ \end{claim}\label{cl:optk1}
\begin{proof}
As $\P$ is an LP with 
$2$ non-trivial constraints, there is some optimum solution with at most $2$ non-zero $a_{t}$. Let $u$ and $v$ be those indices with $u \leq v$ and let $a_u = s$ and $a_v = 1-s$ so that $\|a\|_1=1$. 

We consider two cases, depending on whether $v \leq \lfloor t^* \rfloor$ or not.  Suppose first that $v \leq \lfloor t^* \rfloor$.
    Then $(\P)$ reduces to the following problem on three variables $u,v,s$.
\[ \text{Minimize } su + (1-s)v \qquad \text{ s.t. } \qquad s \ln (t^*/u) + (1-s) \ln (t^*/v)  = 1/\beta, \qquad s \in [0,1], u \leq v \leq \lfloor t^*\rfloor \,.\]
Setting $u=t^*e^{-c}$ and $v = t^*e^{-d}$ where $c,d\geq 0$ as $u,v\leq t^*$, this becomes
\[ \text{Minimize } t^*(se^{-c} + (1-s)e^{-d}) \qquad \text{ s.t. } \qquad (sc + (1-s)d)  = 1/\beta, \qquad s \in [0,1], c\geq d\geq 0.\]
But as $e^{-x}$ is convex, the objective is always at least $t^*(e^{-(sc + (1-s)d)}) = t^*(e^{-1/\beta})$, as claimed.

\noindent Next, suppose that $v > \lfloor t^* \rfloor$. The value $a_v$ does not affect $t^*$ here, and $(\P)$  now becomes
\[ \text{Minimize } su + (1-s)v \qquad \text{ s.t. } \qquad s \ln (t^*/u)  = 1/\beta, \qquad s \in [0,1], u \leq \lfloor t^* \rfloor , v > \lfloor t^* \rfloor \,. \]
The constraint implies $u = t^* e^{-1/s\beta}$, and let us further relax the constraints to $u \leq t^* , v \geq  t^*$ as this can only reduce the objective. 
As the only constraint of $v$ is $v \geq t^*$, $v=t^*$ in any optimum solution.

So the problem reduces  to minimizing $ t^* (s e^{-1/s\beta} + (1-s)) $ for  $s \in [0,1] $. 
Setting $f(s) = (s e^{-1/s\beta} + (1-s))$ and checking that $f'(s) = e^{-1/s\beta}(1+1/s\beta) - 1 \leq 0$ for $s \geq 0$ (as $1+x \leq e^x$ for all $x$), we get that it is minimized at $s=1$ with value $t^* e^{-1/\beta}$.
This completes the proof of the claim.
\end{proof}
\noindent Putting all this together, $c_z(e) \leq t^*$ and $c_x(e) \geq t^* e^{-1/\beta} $, which completes the proof of Lemma~\ref{lem:ketent}. 
\end{proof}

Finally we relate the expected cost under $\sigma$ to $c_z(e)$. The argument is in fact quite simple for min-latency.
\begin{lemma}\label{cl:ketentofin} For each edge $e$, $\expec{}{\cov_\sigma(e)} \leq \beta t_e = \beta c_z(e)$.
\end{lemma}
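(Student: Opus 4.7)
The plan is to use the fact that in the min-latency setting, the cover time in any (proper) schedule is exactly the time at which the last vertex of $e$ is processed. Since $t_e$ is defined so that $z_{v,\leq t_e} \geq 1$ for every $v \in e$, the $\alpha$-point rounding step of Algorithm~\ref{meta-algorithm} deterministically places every $v \in e$ into a slot of $\tau$ that is at most $t_e$ (because $\alpha_v \in [0,1]$ and the threshold $1$ is already reached by $z_{v,\leq t_e}$). Consequently all of $e$ is present among the vertices that $\tau$ assigns to slots $1,2,\ldots,t_e$.

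Next, I would quantify the delay incurred when converting $\tau$ into the proper schedule $\sigma$. Let $N_t$ denote the number of vertices that $\tau$ assigns to slot $t$. Since $\sigma$ processes the slots of $\tau$ in order (breaking ties within a slot arbitrarily), every vertex that $\tau$ places in slot $t$ is scheduled by $\sigma$ no later than position $N_1 + N_2 + \cdots + N_t$. Applying this to the last vertex of $e$ in $\tau$, which by the previous step lies in some slot $t' \leq t_e$, gives the deterministic bound
\[
\text{cov}_\sigma(e) \;\leq\; \sum_{t=1}^{t_e} N_t.
\]

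Finally, I would bound $\mathbb{E}[N_t]$. The probability that $\alpha$-point rounding assigns vertex $v$ to slot $t$ is at most $z_{v,t}$, so $\mathbb{E}[N_t] = \sum_v \Pr[\tau_{v,t}=1] \leq \sum_v z_{v,t} \leq \beta$, where the last inequality is exactly Claim~\ref{cl:widthz}. Taking expectations in the display above and summing yields
\[
\mathbb{E}[\text{cov}_\sigma(e)] \;\leq\; \sum_{t=1}^{t_e} \mathbb{E}[N_t] \;\leq\; \beta\, t_e \;=\; \beta\, c_z(e),
\]
which is the desired inequality. No step here looks difficult; the only conceptual point to get right is that in min-latency, the coverage of $e$ in $\sigma$ is controlled by the \emph{total} load of $\tau$ up to slot $t_e$ (as opposed to the first-hitting arguments needed for MSSC/GMSSC), which is precisely where the factor $\beta$ from Claim~\ref{cl:widthz} enters linearly.
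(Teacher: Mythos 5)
Your proof is correct and follows essentially the same approach as the paper: both bound $\text{cov}_\sigma(e)$ by the total number of vertices $\tau$ places in slots $1,\ldots,t_e$, then take expectations and invoke Claim~\ref{cl:widthz}. The paper phrases the deterministic bound as (number of non-$e$ vertices by $t_e$) plus $k_e$, whereas you bound the total load $\sum_{t\leq t_e} N_t$ directly; these are the same quantity, and both yield $\beta t_e$ after taking expectations.
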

\begin{proof}
As all the vertices of $e$ are scheduled by $t_e$ in $\tau$, the cover time $\cov_\sigma(e)$ of $e$ in $\sigma$ is trivially upper bounded by the number of vertices not in $e$ that appear in $\tau$ by time $t_e$, plus $k_e$,
the number of vertices in $e$ needed to cover $e$.

As $\sum_{v} z_{v,\leq t_e} \leq \beta t_e$ by Claim \ref{cl:widthz}, and as $z_{v,\leq t_e} \geq 1$ for each $v \in e$,
the expected number of vertices not in $e$ that appear by time $t_e$ is $\sum_{v \notin e} z_{v,\leq t_e} \leq \beta t_e - k_e$. So
 $\expec{}{\cov_\sigma(e)} \leq \beta t_e -k_e + k_e = \beta t_e$. 
\end{proof}

\subsection{Generalized Min Sum Set Cover}
\label{sec:gmssc}
We now consider the setting with general $k_e$. The algorithm is exactly the same as before: we use the kernel $K(t,t')=\beta/t$ for $t'\leq t$, and follow the algorithmic template.
Our analysis will (necessarily) combine aspects from both MSSC and the min-latency setting, and will be more technically involved. We will also need to use the KC inequalities in a careful way.

Briefly recalling earlier notation, let $x$ denote some optimum LP solution, and let $z$ be obtained by applying the kernel $K$ to $x$, and let $\tau$ be the tentative schedule obtained by applying the $\alpha$-point rounding to $z$ for each $v$.

Fix a time $t$.
For each vertex $v$, let $Y_v$ be a Bernoulli random variable which is $1$ if $v$ is picked before $t$ in schedule $\tau$ and $0$ otherwise.
Then, by the property of $\alpha$-point rounding
$\expec{}{Y_v}=\min(1,z_{v,<t})$, the random variables $Y_v$ for different vertices $v$ are independent, and the probability that an edge $e$ is not yet covered before $t$ in $\tau$ can be written as
\begin{align}
    \label{def:p_t}
    p_t(e) = \Pr[\sum_{v \in e} Y_v < k_e]\,.
\end{align}

\noindent\textbf{ The $P(\cdot)$-function.} 
As $\expec{}{\cov_\tau(e)} = \sum_t p_t(e)$,
this motivates understanding the following question:

Given Bernoulli random variables ${Y_v}$ with $\expec{}{\sum_v Y_v} = \gamma k$, where $k$ is a positive integer and $\gamma \geq 1$, what is the best
upper bound $P(\gamma)$ on $\Pr[\sum_v Y_v \leq k-1]$,
as a function of $\gamma$?

\noindent
{\em Remark.} Note that we require $P(\gamma)$ to only depend on $\gamma$ (as $k$ and $\expec{}{Y_v}$, will be completely arbitrary for us). Moreover, we assume that $\gamma\geq 1$, since if $\gamma<1$,  we simply assume the worst case that $p_t(e)=1$.

Let us first note that 
for $k=1$, this question is precisely $\Pr[\sum_v Y_v=0]$ if $\expec{}{\sum_v Y_v}=\gamma$, which tends to $\exp(-\gamma)$ (e.g.~in the Poisson regime). So the best $P(\gamma)$ we can hope for is $\exp(-\gamma)$. However, in general $P(\gamma)$ must be strictly worse: in particular, if $k$ is large and $\gamma=1$, then $\Pr[\sum_v Y_v \leq k-1]$ is arbitrarily close to $1/2$, see \cite{JS68}.

One of our main technical contributions is coming up with refined functions $P(\gamma)$ that are essentially optimum\footnote{One could use standard Chernoff-Hoeffding bounds or other tail bounds for sums of Bernoulli random variables, but these are quite crude for our purposes as we lose relatively large constants, leading to substantially worse approximation factors.}. In particular Theorem \ref{thm:enhanced-bound} provides us with
\beq \label{bd:stronger-eq} P(\gamma) \leq e^{-\gamma} (e/2)^{\exp(1-\gamma)}\,.\eeq
Note that here, $P(1)=1/2$ (the best we can hope for). Moreover, and as $\gamma$ increases, the factor  $(e/2)^{\exp(1-\gamma)}$ approaches rapidly to $1$ (as $\exp(1-\gamma)$ approaches 0), and hence  $P(\gamma)$ is essentially $e^{-\gamma}$ for larger $\gamma$. So the bound is close to the best possible for the entire range of $\gamma$.
The derivation of this bound is rather technical and is deferred to the Appendix. In \S\ref{sec:tail}, we show a weaker (but still quite non-trivial) bound of $P(\gamma) = e^{1-\gamma}/2$.

\noindent\textbf{Our result.}
Our main result is the following, which we state in terms of a general function $P(\cdot)$.
\begin{theorem}\label{thm:maingmssc}
For any $\beta \geq 1$, and any tail bound function $P(\cdot)$ that is convex and non-increasing on $[1,\infty)$ and satisfies $P(1)\leq 1$, the approximation ratio of the algorithm is at most $ \beta r(\beta)$ where 
\begin{equation}
    \label{gen-bound}
r(\beta) := e^{1/\beta}\left(1+ \frac{1}{\beta} \int^\infty_1 P(x) e^{\frac{x-1}{\beta}} dx \right)\,.
\end{equation}
In particular, for the bound $P(\gamma) = e^{1-\gamma}/2$, this gives the approximation
$\beta (2\beta-1) e^{1/\beta}/(2\beta-2)$
which has value at most $4.9102$, for $\beta=2.191$. Using the  stronger bound \eqref{bd:stronger-eq} for $P(\gamma)$, the approximation factor is at most $4.642$,
 for $\beta  = 2.0715$.\end{theorem}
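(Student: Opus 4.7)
The plan is to follow the three-step template of Sections~\ref{s:mssc} and \ref{sec:minlat}: define a per-edge surrogate $c_z(e)$ upper-bounding $\E[\cov_\tau(e)]$, show $c_z(e) \leq r(\beta)\, c_x(e)$, and then deduce $\E[\cov_\sigma(e)] \leq \beta\, c_z(e)$ via the load bound from Claim~\ref{cl:widthz} and the conditioning machinery of \S\ref{sec:conditioning}. The new content is a careful merging of the min-latency style analysis with the tail bound $P(\cdot)$, together with a controlled use of the Knapsack Cover inequalities.

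For the surrogate, I would set $\mu_t(e) := \sum_{v \in e}\min(1, z_{v,<t})$, the expected number of vertices of $e$ that appear before $t$ in the tentative schedule $\tau$. Since the $\alpha$-point roundings are independent across vertices, the variables $Y_v = \mathbbm{1}[v\text{ appears before }t\text{ in }\tau]$ are independent Bernoullis whose means sum to $\mu_t(e)$; whenever $\mu_t(e) \geq k_e$, the hypothesis on $P$ yields $p_t(e) = \Pr[\sum_{v\in e} Y_v < k_e] \leq P(\mu_t(e)/k_e)$. Using $p_t(e)\leq 1$ otherwise, $\E[\cov_\tau(e)] \leq c_z(e) := \sum_t \min\bigl(1, P(\mu_t(e)/k_e)\bigr)$ with the convention $P(\gamma)=1$ for $\gamma<1$.

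Next, bound $c_z(e) \leq r(\beta)\,c_x(e)$. Let $t^*_e$ be the smallest (possibly fractional) time with $\mu_{t^*_e}(e) = k_e$, and split $c_z(e) \leq t^*_e + \sum_{t > t^*_e} P(\mu_t(e)/k_e)$. The first step is to derive the min-latency-style bound $t^*_e \leq e^{1/\beta}\,c_x(e)$: here the KC inequalities \eqref{eq:lpcov} play the role that $x_v \succeq x_e$ played in Lemma~\ref{lem:ketent}. Taking $S$ in \eqref{eq:lpcov} to be the set of already-fully-scheduled vertices, one obtains $\sum_{v \in e}\min(1, x_{v,<t}) \geq k_e(1 - u_{e,t})$; passing through the kernel via Claim~\ref{cl:kernelog}, and then reducing by Fact~\ref{fact} to the worst-case extreme point where all of $e$'s LP-mass is concentrated at a single time slot, an analogue of Claim~\ref{cl:optk1} (now with right-hand side $k_e$ in place of $1$) gives the desired $e^{1/\beta}$ bound. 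For the tail, Claim~\ref{cl:kernelog} again implies $\mu_t(e)/k_e \geq 1 + \beta\ln(t/t^*_e)$ in the worst case; substituting $x = \mu_t(e)/k_e$ so that $dt = (t^*_e/\beta)e^{(x-1)/\beta}\,dx$, the tail is bounded by $(t^*_e/\beta)\int_1^\infty P(x) e^{(x-1)/\beta}\,dx$, where convexity and monotonicity of $P$ together with Fact~\ref{fact} justify reducing to this worst-case growth profile. Combining, $c_z(e) \leq t^*_e \cdot \bigl(1 + (1/\beta)\int_1^\infty P(x) e^{(x-1)/\beta}\,dx\bigr) \leq r(\beta)\,c_x(e)$.

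Finally, Step (iii) uses Claim~\ref{cl:widthz} and the conditioning argument of \S\ref{sec:conditioning}, exactly as in Lemma~\ref{lem:msscfin}, to give $\E[\cov_\sigma(e)] \leq \beta\,c_z(e)$, yielding the overall $\beta\,r(\beta)$ bound; the numerical claims follow by plugging the specified $P$'s into $r(\beta)$ and optimizing in $\beta$. I expect the main obstacle to lie in Step~(ii): the interaction between the $\min(1,\cdot)$ caps inside $\mu_t$ and the kernel-log estimate of Claim~\ref{cl:kernelog} is delicate, and the KC-based reduction above is precisely what forces the extra $e^{1/\beta}$ factor in $r(\beta)$ (as contrasted with the cleaner $\beta/(\beta-1)$ bound of the MSSC analysis in \S\ref{s:mssc}), accounting for the ``additional loss from KC'' noted in the introduction.
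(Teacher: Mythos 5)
Your surrogate $\mu_t(e) := \sum_{v\in e}\min(1,z_{v,<t})$ is where the approach breaks. Because each summand is capped at $1$, you have $\mu_t(e)/k_e \leq |e|/k_e$ for \emph{all} $t$. So your claimed worst-case lower bound $\mu_t(e)/k_e \geq 1 + \beta\ln(t/t^*_e)$ is false for $t$ large, and in fact $\sum_t P(\mu_t(e)/k_e)$ can diverge. Take the min-latency regime $k_e=|e|$: once every $v\in e$ satisfies $z_{v,<t}\geq 1$, you get $\mu_t(e)/k_e = 1$ for \emph{every} subsequent $t$, and $P(1)=1/2$, so $c_z(e) = \infty$ while the true $p_t(e) = 0$ for such $t$ (every $v\in e$ is surely scheduled in $\tau$, so $e$ is surely covered). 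The bound $p_t(e)\leq P(\mu_t(e)/k_e)$ is valid but loses all information from vertices that are scheduled with certainty.

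The paper's argument is precisely designed to avoid this. It partitions $e$ at time $t$ into $A_e(t) = \{v\in e: z_{v,<t}\geq 1\}$ (surely scheduled, hence removable from the probabilistic event) and $B_e(t) = e\setminus A_e(t)$, applies the KC inequality with $S=A_e(t)$, and pushes it through the kernel (Claim~\ref{cl:kernelog}) to get $\sum_{v\in B_e(t)} z_{v,<t} \geq k_e(t)\, z_{e,<t}$ where $k_e(t) = \max(k_e-|A_e(t)|,0)$. Since $e$ is uncovered iff fewer than $k_e(t)$ of the $B_e(t)$-variables fire, this gives $p_t(e)\leq P(z_{e,<t})$, and the key point is that $z_{e,<t}$ (the kernel applied to the LP's completion profile $x_{e,\cdot}$) grows \emph{unboundedly} in $t$, so $\sum_{t>t_e} P(z_{e,<t})$ converges and can be controlled by the optimization problem $\P$. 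You do flag the ``delicate interaction between the $\min(1,\cdot)$ caps and the kernel-log estimate'' as the main obstacle, and that intuition is exactly right — but the fix is to change the surrogate to depend on $z_{e,<t}$ rather than on $\mu_t(e)$, not to push the caps through the same reduction.

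A minor second point: the conditioning in Step~(iii) for GMSSC (Lemma~\ref{lem:tentofingmssc}) is \emph{not} the same as the one in \S\ref{sec:conditioning} for MSSC/MSVC. The GMSSC version is actually simpler, using only $\cov_\sigma(e) \leq k_e + N(t)$ together with the KC-derived fact $\sum_{v\in e} z_{v,\leq t}\geq k_e$ for $t\geq t_e$, and no random tie-breaking gain; but it does rely on the same $A_e(t)/B_e(t)$ and kernel-pushed KC inequality, so you cannot simply invoke the \S\ref{sec:conditioning} machinery verbatim.
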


\noindent{\em Remark.}
Based on numerical computations, it seems that $P(\gamma)$  satisfies the stronger bound   
\[ \label{bd:stronger} P(\gamma) \leq e^{-\gamma} (e/2)^{\exp(c(1-\gamma))}\,,\]
for $c=13.3$ (note that the bound in \eqref{bd:stronger-eq} corresponds to $c=1$). However, we are unable to prove this analytically.
Plugging this better bound for $P$ gives an approximation ratio of at most $4.5232$.

We do not pursue  further optimizations as
$4.509$ seems a natural bottleneck for our approach, as even assuming the best possible function $P(\gamma) = \exp(-\gamma)$, Theorem \ref{thm:maingmssc} only gives a $4.509$ approximation
(and hence additional ideas will be needed to get a $4$-approximation, assuming this is possible).

\noindent\textbf{Analysis.} Fix an edge $e$.
and as previously, define $x_{e,t}=u_{e,t}-u_{e,t+1}$ (the fraction of $e$ satisfied by the LP at $t$). Then $c_x(e) = \sum_t u_{e,t} = \sum_t tx_{e,t}$. 
Let $z_e = K x_e$ be obtained by applying the kernel $K$ to $x_e$. Let $t_e$ be the earliest time such that $z_{e,\leq t_e} \geq 1$. Let us define
\beq \label{czegmssc} c_z(e) = t_e + \sum_{t>t_e} P(z_{e,<t}).\eeq
We will show that $c_z(e)$ is an upper bound on the expected cost $\expec{}{\cov_\tau(e)}$ of $e$ in the tentative schedule $\tau$.
Theorem \ref{thm:maingmssc} will follow directly from the following two lemmas.
\begin{lemma}
\label{lem:tentcostgmssc}
For any edge $e$, $c_z(e) \leq r(\beta)  c_x(e)$.
\end{lemma}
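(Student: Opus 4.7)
The plan is to establish the lemma by identifying the worst-case edge profile $x_e$ as a single-point distribution $\delta_u$ (i.e., $x_{e,u}=1$ and $x_{e,t}=0$ for $t\neq u$), at which the ratio $c_z(e)/c_x(e)$ equals $r(\beta)$ exactly. Since $c_z/c_x$ is scale-invariant in $u$, this suffices.

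First I would bound the lead term $t_e$ exactly as in the min-latency analysis. Defining $t^*$ by $\beta \sum_{t' \leq \lfloor t^*\rfloor} x_{e,t'} \ln(t^*/t') = 1$, Claim~\ref{cl:kernelog} applied to $z_e = K x_e$ yields $z_{e,\leq \lfloor t^*\rfloor} \geq 1$, hence $t_e \leq t^*$. Claim~\ref{cl:optk1} (whose proof only uses the $t^*$-equation, and hence applies verbatim with $a = x_e$) then gives $c_x(e) \geq t^* e^{-1/\beta}$, so $t_e \leq e^{1/\beta} c_x(e)$, matching the lead $e^{1/\beta}$ factor of $r(\beta)$. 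For the tail $\sum_{t > t_e} P(z_{e,<t})$, a second application of Claim~\ref{cl:kernelog} together with the monotonicity of $P$ on $[1,\infty)$ gives $P(z_{e,<t}) \leq P(\beta \sum_{t' \leq t} x_{e,t'} \ln(t/t'))$, a convex function of $x_e$ (convex non-increasing $P$ composed with a linear functional).

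After normalizing $\sum_t x_{e,t}=1$, I would apply a Fact~\ref{fact}-type extremal argument to reduce $c_z(e)/c_x(e)$ to the extreme points of the probability simplex, namely single-point distributions $x_e = \delta_u$. For such $x_e$, the continuous approximation gives $z_{e,<t} \approx \beta \ln(t/u)$ for $t \geq u$, so $t_e \approx ue^{1/\beta}$, and the substitution $s = \beta \ln(t/u)$, $dt = (u/\beta) e^{s/\beta} ds$ evaluates the tail integral as $(u/\beta) \int_1^\infty P(s) e^{s/\beta} ds$. Combining, $c_z(e) \leq u e^{1/\beta} + (u/\beta) \int_1^\infty P(s) e^{s/\beta} ds = u \cdot r(\beta) = c_x(e) \cdot r(\beta)$, as required.

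The main obstacle is justifying the extremality step rigorously. While the tail is convex in $x_e$, the lead term $t_e$ depends implicitly on $x_e$ in a non-convex way, and the natural rewrite $c_z(e) = \sum_t f(z_{e,<t})$ with $f(y) = 1$ for $y<1$ and $f(y)=P(y)$ for $y\geq 1$ yields a non-convex $f$ (having either a slope jump at $y=1$ or, when $P(1) < 1$, an outright downward discontinuity). Simple convex non-increasing majorants $g \geq f$---for instance, extending the tangent of $P$ at $y=1$ leftward, or using $\max(1,P(y))$---are either too loose to preserve the tight constant $r(\beta)$ or fail to decay. Circumventing this requires a tailored argument: either a variational/exchange argument leveraging the scale invariance of $c_z/c_x$ to show that any $x_e$ can be sharpened to a single-point distribution without decreasing the ratio, or a case analysis splitting on the value of $t_e$ and applying Fact~\ref{fact} to the (genuinely convex) tail sum within each case.
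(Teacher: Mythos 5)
Your proposal reaches the right answer at the extreme point $x_e=\delta_u$ and correctly identifies (and honestly flags) the central obstruction: the lead term $t_e$ is not a convex function of $x_e$, and the natural rewrite $c_z(e)=\sum_t f(z_{e,<t})$ yields a non-convex $f$, so Fact~\ref{fact} cannot be applied directly. You propose two vague workarounds (a variational exchange argument, or a case split on $t_e$) but do not carry either out. That gap is real, and it is precisely where the paper does nontrivial work.

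The paper's resolution is structural rather than either of your suggested routes. Having fixed the real number $t^*$ via $\beta\sum_{t'\le \lfloor t^*\rfloor} x_{e,t'}\ln(t^*/t')=1$, the paper replaces $t_e$ by the larger constant $\lfloor t^*\rfloor$ and then, crucially, promotes the defining identity for $t^*$ to a second \emph{constraint} of the optimization problem $\P$. Over that constrained polytope the numerator $\lfloor t^*\rfloor + \sum_{t>\lfloor t^*\rfloor} P(\gamma_t)$ is genuinely convex in $a$ (the lead term is now a constant, and each $P(\gamma_t)$ is convex-composed-with-affine), so Fact~\ref{fact} applies. The price is that with two nontrivial constraints the extreme points are supported on up to two coordinates $(a_u=s,\ a_v=1-s)$, not one, so the single-point reduction you assume does not hold a priori. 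The paper then does a two-case analysis on whether $v\le t^*$ or $v>t^*$. In Case~1 the numerator miraculously becomes independent of $s,u,v$ and one only minimizes the denominator via Jensen; Case~2 requires the separate technical Claim~\ref{cl:maxa1} (proved in the appendix) showing the ratio is maximized at $a=1$. Your plan of bounding the lead and the tail separately against $c_x(e)$ would also need its own version of this two-supported extreme-point analysis for the tail ratio alone, and it is not obvious that the two separate worst cases align so that the constants add to exactly $r(\beta)$; the paper avoids this by analyzing the combined ratio. In short: correct target value and correct diagnosis of the difficulty, but the extremality step is missing, and the device that makes it go through (folding the $t^*$-equation into the feasible region, then handling two-supported extreme points case by case) is absent from your write-up.
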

\begin{lemma}\label{lem:tentofingmssc}
For each edge $e$, $\expec{}{\cov_\sigma(e)} \leq \beta c_z(e)$.
\end{lemma}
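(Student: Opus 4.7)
The plan is to extend the min-latency proof of Lemma~\ref{cl:ketentofin} with a tail-bound argument for times beyond $t_e$, after a careful analysis of how $\tau$ is unrolled into $\sigma$. First I would show that conditional on the tentative schedule $\tau$,
\[\expec{}{\cov_\sigma(e) \mid \tau} \leq k_e + \sum_{t \leq T} N_t^{\bar{e}},\]
where $T=\cov_\tau(e)$ and $N_t^{\bar e}$ counts the vertices $v\notin e$ placed at slot $t$ of $\tau$. The proof is a one-slot order-statistic calculation: with $n^e:=N_T^e$, $n^{\bar e}:=N_T^{\bar e}$, and $j:=k_e-\sum_{t<T}N_t^e \le n^e$ the number of additional $e$-vertices needed in slot $T$, the expected position of the $j$-th $e$-vertex in the uniformly random permutation inside slot $T$ equals $j(n^e+n^{\bar e}+1)/(n^e+1)\le j+n^{\bar e}$ (using $j\le n^e$); adding $\sum_{t<T}N_t$ and using $\sum_{t<T}N_t^e=k_e-j$ yields the bound.

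Taking expectation over $\tau$ and splitting at $t_e$, the ``min-latency-like'' contribution $\expec{}{\sum_{t\le t_e}N_t^{\bar e}}$ is at most $\sum_{v\notin e}z_{v,\le t_e}$ because $\Pr[\tau_v=t]\le z_{v,t}$. Claim~\ref{cl:widthz} gives $\sum_v z_{v,\le t_e}\le \beta t_e$, while the KC inequality with $S=\emptyset$ gives $\sum_{v\in e} x_v\succeq k_e x_e$, which by the monotonicity part of Claim~\ref{cl:kernelog} transfers to $\sum_{v\in e}z_{v,\le t_e}\ge k_e z_{e,\le t_e}\ge k_e$; subtracting gives at most $\beta t_e-k_e$. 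For the tail contribution $\expec{}{\sum_{t>t_e}N_t^{\bar e}\mathbbm{1}[T\ge t]}$, the variables $T$ and $N_t^{\bar e}$ depend on disjoint sets of $\alpha$-point coin flips ($v\in e$ vs.\ $v\notin e$) and are therefore independent, so this term equals $\sum_{t>t_e}p_t(e)\,\expec{}{N_t^{\bar e}}$, which is at most $\beta\sum_{t>t_e}P(z_{e,<t})$ by Claim~\ref{cl:widthz} and the tail bound $p_t(e)\le P(z_{e,<t})$.

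Combining, $\expec{}{\cov_\sigma(e)}\le k_e+(\beta t_e-k_e)+\beta\sum_{t>t_e}P(z_{e,<t})=\beta c_z(e)$, as desired. The subtlest step will be justifying $p_t(e)\le P(z_{e,<t})$ tightly: setting $S^*:=\{v\in e: z_{v,<t}\ge 1\}$ (with $|S^*|\ge k_e$ being trivial since then $p_t(e)=0$), one applies the KC inequality for $S^*$ and the kernel monotonicity to obtain $\sum_{v\in e\setminus S^*}z_{v,<t}\ge (k_e-|S^*|)z_{e,<t}$, and then invokes the $P$-function on the residual edge of requirement $k_e-|S^*|$ using $\min(z_{v,<t},1)=z_{v,<t}$ for $v\notin S^*$; since $z_{e,<t}\ge 1$ for $t>t_e$ and $P$ is non-increasing, this yields exactly the claimed bound. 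Without this reduction one would lose a multiplicative constant that spoils the overall approximation ratio.
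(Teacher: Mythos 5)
Your proof is correct and follows essentially the same approach as the paper's, relying on the same three ingredients: the load bound of Claim~\ref{cl:widthz}, the KC inequality transferred through the kernel (Claim~\ref{cl:kernelog}) to get $\sum_{v\in e}z_{v,\le t}\ge k_e$ for $t\ge t_e$, and the tail bound $p_t(e)\le P(z_{e,<t})$. The only cosmetic differences are that you split the expectation at $t_e$ directly (rather than conditioning on the exact cover time and passing to a worst-case distribution for $\Pr[E_t]$), you use KC with $S=\emptyset$ for the load lower bound where the paper uses $S=A_e(t)$ (both work), and you spell out the slot-$T$ tie-break via an order-statistic calculation where the paper simply asserts the bound $k_e+\E[N(t)]$.
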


\begin{proof}(Lemma \ref{lem:tentcostgmssc})
We first show that $\expec{}{\cov_\tau(e)} \leq c_z(e)$.

Recall by \eqref{def:p_t} that $p_t(e)$ denotes the probability that $e$ is not covered before $t$ in $\tau$.
Then $\expec{}{\cov_\tau(e)}= \sum_t p_t(e)$. To upper bound $p_t(e)$ we assume that $p_t(e)=1$ for $t\leq t_e$.  
Now, fix some time $t > t_e$. We claim that $p_t(e) \leq P(z_{e,<t})$. This directly gives that $\expec{}{\cov_\tau(e)} \leq c_z(e)$,  by the definition of $c_z(e)$ in \eqref{czegmssc}.

Let us define $A_e(t):= \{v\in e: z_{v,< t} \geq 1 \}$ to be the set of vertices in $e$ that are surely picked in schedule $\tau$ before time $t$, and let $B_e(t) = e \setminus A_e(t)$ be the remaining vertices in $e$. Let $k_e(t) = \max(k_e - |A_e(t)|,0)$ be the number of vertices that must still be picked from $B_e(t)$ to satisfy $e$.

By the KC inequality for $e$ with $S=A_e(t)$, the LP satisfies 
\[ u_{e,t} + \sum_{v \in B_e(t)} x_{v,<t}/(k_e(t)) \geq 1\,, \]
or equivalently, $\sum_{v \in B_e(t)} x_{v,<t} \geq k_e(t)
x_{e,<t}$.
Applying the kernel K on both sides then gives (via Claim~\ref{cl:kernelog})
 \beq 
 \label{kc:consequence}
 \sum_{v \in B_e(t)} z_{v,<t} \geq k_e(t) z_{e,<t}\,.\eeq  
Now $e$ is not covered in $\tau$ before $t$, if and only if strictly fewer than $k_e(t)$ vertices from $B_e(t)$ are picked in  $\tau$. Using the random variables $Y_v$ described earlier, this is exactly if $\sum_{v \in B_e(t)} Y_v < k_e(t)$, and hence from the definition of $P(\cdot)$ and \eqref{kc:consequence}, it follows that  $p_t(e) \leq P(z_{e,<t})$, as claimed.

We now upper bound $c_z(e)$ in terms of $c_x(e)$.
Let $t^*$ (not necessarily integer) be such that
\[ \beta\sum_{t' \leq \lfloor t^* \rfloor} x_{e,t'} \ln (t^*/t') = 1. \]
By Claim \ref{cl:kernelog}, as $z_e= Kx_e$, we have $z_{e,\leq t} \geq \beta \sum_{t' \leq t} x_{e,t'} \ln (t+1)/t'$. Hence, $z_{e,\leq \lfloor t^* \rfloor} \geq 1$ implying that $\lfloor t^* \rfloor \geq t_e$. 
Moreover as $P(\gamma)$ is non-increasing for $\gamma \geq 1$ (by our hypothesis), we have 
\[ c_z(e) = t_e + \sum_{t > t_e} P(z_{e,< t}) \leq  \lfloor t^* \rfloor  + \sum_{t > \lfloor t^* \rfloor} P(z_{e,<t}) \leq \lfloor t^* \rfloor  + \sum_{t > \lfloor t^* \rfloor} P\Bigl(\beta \sum_{t^{\prime}\leq t} x_{e,t^{\prime}} \ln{\frac{t}{t^{\prime}}}\Bigr)\,.  \]

Our goal henceforth is to upper bound the  maximum possible ratio of the right hand side above to $c_x(e)$.

\noindent {\bf Optimization problem.} As previously, we set up the following optimization problem $\P$ with variables $a_t$ (corresponding to $x_{e,t}$), and some fixed real $t^*$. 
\[ (\P) \quad \text{Maximize }  \frac{\lfloor t^* \rfloor   + \sum_{t > \lfloor t^*\rfloor} P(\beta(\sum_{t^{\prime}\leq t} a_{t'} \ln t/t^{\prime}))}{\sum_t ta_t} \qquad \text{ s.t. } \|a\|_1 =1, \qquad \beta \sum_{t^{\prime}\leq \lfloor t^* \rfloor} a_{t'} \ln (t^*/t^{\prime}) = 1\,.  \]
We bound this in Lemma \ref{cl:psolgmssc} which will imply the result.
 \end{proof}

\begin{lemma}\label{cl:psolgmssc}
If $P$ is non-increasing and convex in $[1,\infty)$, and satisfies $P(1) \leq 1$. Then, the optimal value of problem $\P$ is at most $r(\beta)$, defined in \eqref{gen-bound}.
\end{lemma}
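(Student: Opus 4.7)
The plan is to mirror the structure of Claim~\ref{cl:optk1}, combining an extreme-point reduction via Fact~\ref{fact}, a weighted AM--GM step, and a sum-to-integral bound on the single-atom configuration.

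First I would verify that the maximum in $\mathcal{P}$ is attained at an extreme point of the feasible set. Each summand of the numerator has the form $P(L_t(a))$ where $L_t(a) := \beta \sum_{t' \leq t} a_{t'} \ln(t/t')$ is affine in $a$. The $t^*$-constraint, combined with $t \geq \lfloor t^*\rfloor + 1 \geq t^*$ and $\ln(t/t') \geq \ln(t^*/t')$, forces $L_t(a) \geq 1$ for every $t > \lfloor t^*\rfloor$, placing the argument inside the domain on which $P$ is convex and non-increasing. Since convexity is preserved under composition with an affine map and under summation, the numerator is convex and non-negative; the denominator $\sum_t t\,a_t$ is linear and strictly positive. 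Fact~\ref{fact} then puts the optimum at an extreme point of the feasible polytope. This polytope is cut out by two linear equalities ($\|a\|_1 = 1$ and the $t^*$-constraint) together with non-negativity, so any extreme point has at most two non-zero coordinates, which I write as $a_u = s$, $a_v = 1-s$ with $u \leq v$ and $s \in (0,1]$.

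Next, I would split into two cases, mirroring Claim~\ref{cl:optk1}. In Case~1 ($v \leq \lfloor t^*\rfloor$), for every $t > \lfloor t^*\rfloor$ the $P$-argument collapses to $\beta[s\ln(t/u) + (1-s)\ln(t/v)] = \beta \ln(t/w)$ with $w := u^s v^{1-s}$, and the $t^*$-constraint reduces to $\beta \ln(t^*/w) = 1$, so $w = t^* e^{-1/\beta}$ is fixed. The numerator therefore depends on $(u,v,s)$ only through $w$, while the denominator satisfies $su + (1-s)v \geq u^s v^{1-s} = w$ by weighted AM--GM, with equality iff $u = v = w$. Hence Case~1 is dominated by the single-atom configuration $a_w = 1$. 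Case~2 ($u \leq \lfloor t^*\rfloor < v$) is more delicate: only $a_u$ enters the $t^*$-constraint, giving $u = t^* e^{-1/(s\beta)}$, but $v$ still contributes to both the denominator and to $P$-terms at $t \geq v$. I would argue the Case~2 ratio is dominated by the Case~1 ratio at the intermediate configuration $(a_u, a_{t^*}) = (s, 1-s)$ obtained by relocating the outer atom from $v$ down to $t^*$: at $v = t^*$ the identity $u^s (t^*)^{1-s} = t^* e^{-1/\beta}$ still holds, so the $P$-terms coincide with those of a Case~1 configuration, and a direct computation using convexity/monotonicity of $P$ shows that the Case~2 ratio is non-increasing in $v$ on $(\lfloor t^*\rfloor, \infty)$. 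Thus the supremum over Case~2 is achieved in the limit $v \to t^*$, which is itself a Case~1 point.

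Finally, for the single-atom configuration $a_u = 1$ with $u = t^* e^{-1/\beta}$ (so $\beta\ln(t^*/u) = 1$), I would bound the numerator sum by an integral. Since $s \mapsto P(\beta\ln(s/u))$ is non-increasing on $[t^*,\infty)$ (where $\beta\ln(s/u) \geq 1$), we have
\[
\sum_{t > \lfloor t^*\rfloor} P(\beta \ln(t/u)) \;\leq\; \int_{t^*}^\infty P(\beta\ln(s/u))\,ds \;=\; \frac{u}{\beta} \int_1^\infty P(x)\, e^{x/\beta}\,dx,
\]
via the substitution $x = \beta\ln(s/u)$; the at-most-one boundary integer $t \in (\lfloor t^*\rfloor, \lceil t^*\rceil]$ contributes at most $P(1) \leq 1$, absorbed in $\lceil t^*\rceil \leq t^* + 1$. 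Combining with $\lfloor t^*\rfloor \leq t^*$, the ratio is at most
\[
\frac{t^*}{u} + \frac{1}{\beta}\int_1^\infty P(x)\, e^{x/\beta}\,dx \;=\; e^{1/\beta}\Bigl(1 + \frac{1}{\beta}\int_1^\infty P(x)\, e^{(x-1)/\beta}\,dx\Bigr) \;=\; r(\beta),
\]
as required. The main obstacle I anticipate is the Case~2 monotonicity in $v$, which is not captured by a single AM--GM application: because $v$ does not appear in the $t^*$-constraint but does appear in the $P$-terms for $t \geq v$, one must balance quantitatively the denominator growth $(1-s)$ per unit $v$ against the numerator growth from $P$-derivatives at $t \geq v$, exploiting convexity of $P$ to close the comparison.
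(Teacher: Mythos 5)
Your Case~1 reduction and the final single-atom computation are both correct and match the paper's argument in substance (your weighted AM--GM step $su + (1-s)v \ge u^s v^{1-s}$ is the same inequality the paper derives via Jensen applied to $e^{-x}$, and your change of variables $x = \beta\ln(s/u)$ reproduces the integral defining $r(\beta)$). The extreme-point reduction via Fact~\ref{fact} and the observation that the $t^*$-constraint forces $L_t(a) \ge 1$ on the support of the sum are also correct.

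The genuine gap is exactly where you flag it: Case~2 monotonicity. You assert that ``a direct computation using convexity/monotonicity of $P$ shows that the Case~2 ratio is non-increasing in $v$,'' but you never carry it out, and in your closing paragraph you concede it is ``the main obstacle.'' This is not a routine detail. In the paper it is Claim~\ref{cl:maxa1}, proved separately in Appendix~\ref{sec:maxa1} over roughly a page. The proof there sets $h(a) = f'(a)g(a) - g'(a)f(a)$, and shows \emph{both} $h'(a) \le 0$ for $a \ge 1$ (which does follow fairly cleanly from $f''(a) \le 0$, i.e.\ $P'' \ge 0$) \emph{and} $h(1) \le 0$. The second piece is the non-obvious one: it requires computing $f(1), f'(1), g(1), g'(1)$ explicitly, bounding $g(1) \le 1$ and $f'(1) \ge 0$, and then recognizing the telescoping identity $\int_1^\infty \bigl(e^{(y-1)/\beta}P(y)\bigr)'\,dy = -P(1)$ to conclude $h(1) \le -(1-s)(1 - P(1)) \le 0$. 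Nothing in your sketch hints at this calculation, and convexity of $P$ alone does not deliver $h(1) \le 0$ --- convexity only controls the sign of $h'$. Until that half of the monotonicity argument is supplied, Case~2 is open and the lemma is not proved. Everything else in your proposal is sound and mirrors the paper.
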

\begin{proof}
Let us denote $\gamma_t = \beta \sum_{t^{\prime}\leq t} a_t \ln t/t^{\prime}$.  The constraint in $\P$ implies that, $\gamma_t \geq 1$ for $t \geq t^*$. Also, as $\gamma_{t^*}=1$, we have $P(\gamma_{t^*}) \leq 1$ by the assumption on $P$.

As the constraints of $\P$ are linear in $a$, the feasible region is a polytope. Moreover as $\gamma_t$ is linear in $a$ and $P$ is convex, $P(\gamma_t)$ is convex in $a$. So the objective of $\P$ is the ratio of a non-negative convex function and a strictly positive linear function, and by Fact~\ref{fact}, attains its optimum at an extreme point.
As there are only $2$ non-trivial constraints, any extreme point has at most $2$ non-zero variables, say  $a_u$ and $a_v$ for $u\leq v$. 

We consider two cases depending on whether $v\leq t^*$.

\noindent {\em Case 1:} $v\leq t^*$.  Let us denote $a_u = s$, then $a_v = 1-s$ (as $\|a\|_1=1$). 
Let us substitute $u = t^* e^{-a}$ and $v=t^*e^{-b}$, where $a,b \geq 0$. Then the second constraint in $\P$ is
$\beta(sa + (1-s)b)=1$. Moreover,
\[  \gamma_t = \beta \left(s \ln (t/u) + (1-s) \ln(t/v)\right) = \beta \ln t/t^* + \beta(sa+(1-s)b) = 1 + \beta \ln (t/t^*)\,. \]
So, the numerator of the objective of $\P$ is simply $ \lfloor t^* \rfloor + \sum_{t > \lfloor t^* \rfloor} P\left(1 + \beta \ln (t/t^*)\right)$,
which surprisingly is completely independent of the parameters $a_u,a_v,s$.

So optimizing $\P$ simply reduces to 
minimizing the denominator. In particular,
\[ \text{Minimize }\, t^* (s e^{-a} + (1-s) e^{-b})  \qquad \text{s.t. } sa  + (1-s) b =1/\beta, \quad  a \geq b \geq 0, s \in [0,1].\]
As $\exp(-x)$ is convex, by Jensen's inequality $se^{-a} + (1-s) e^{-b} \geq e^{-s a  - (1-s) b} = e^{-1/\beta}$, and hence the denominator is lower bounded by $t^* e^{-1/\beta}$.

So any optimum solution of $\P$ is at most  
\begin{align*}
e^{1/\beta} \Big( \frac{\lfloor t^* \rfloor}{t^*} + \frac{1}{t^*} \sum_{t > \lfloor t^* \rfloor} P\left(1 + \beta \ln (t/t^*)\right)\Big) &\leq e^{1/\beta} \left( 1 + \frac{1}{t^*} \int^\infty_{t=t^*} P\left(1 + \beta \ln (t/t^*)\right)dt\right)\\
&=e^{1/\beta}\left(1 + \frac{1}{\beta}\int^\infty_{x=1}P(x) e^{\frac{x-1}{\beta}} dx\right) = r(\beta)\,,
\end{align*}
where the first inequality follows as $P$ is non-increasing and as $P(\gamma_{t^*}) \leq 1$. The second equality follows by the change of variables $x=1+\beta \ln(t/t^*)$.

\noindent {\em Case 2:} $v > t^*$. 
Again, let $a_u = s$ and $a_v = 1-s$. The second constraint of $\P$ now becomes $\beta s \ln(t^*/u) = 1$ and hence $u = t^* =t^* \exp(-1/s\beta)$.
Moreover, the numerator of the objective of $\P$ is
\[ \lfloor t^*\rfloor + \sum^v_{t=\lfloor t^*\rfloor+1} P\left(\beta s \ln (t/u)\right) + \sum^\infty_{t=v+1} P\left(s\beta \ln (t/u) +(1-s)\beta \ln (t/v)\right)\,.\]
Substituting the value of $u$ gives $\beta s\ln (t/u) = 1 + \beta s \ln(t/t^*)$. Further, setting $v = t^*a$, the numerator simplifies to
\[  \lfloor t^*\rfloor + \sum^v_{t=\lfloor t^*\rfloor+1} P\left(1+\beta s \ln (t/t^*)\right) + \sum^\infty_{t=v+1} P\left(1+\beta \ln (t/t^*) - (1-s)\beta \ln a \right)\,.\]
Substituting the variable $x = t/t^*$, upper bounding sums by integrals using that $P$ is non-increasing and that $P(\gamma_{t^*})\leq 1$, we finally get the following upper bound on the numerator
\beq\label{eq:numvgt}
 t^*\left( 1 + \int^a_1 P(1+s \beta \ln x)dx + \int^\infty_a P(1+\beta \ln x - \beta (1-s) \ln a\right)\,.
 \eeq

Again, by the substituting the value of $u$ and $v=t^*a$,  the denominator now becomes
\begin{equation}\label{eq:denvgt}
su + (1-s)v = t^*(s e^{-1/(s\beta)} + a(1-s)).
\end{equation}

Fix $s,\beta$, and let us define $f(a)$ to be expression  in (\ref{eq:numvgt}) divided by $t^*$ and $g(a)$ to be expression in (\ref{eq:denvgt}) divided by $t^*$. Our goal is to show that $\max_{a \geq 1} f(a)/g(a) \leq r(\beta)$.

The next key claim, the proof of which is in Section \ref{sec:maxa1} in the Appendix, shows that it suffices to consider the value of $f$ and $g$ at $a=1$.
\begin{claim}\label{cl:maxa1}
For any $s \in [0,1)$ and $\beta \geq 1$, the ratio $f(a)/g(a)$, for $a \geq 1$, is maximized at $a=1$.
\end{claim}
Now for $a=1$, we have $g(1) = se^{-1/(s\beta)} + 1 - s$ which by simple calculus is minimized at $s=1$ (for $s \in [0,1]$) and hence is at least  $\geq e^{-1/\beta}$. Next,
\[ f(1) = 1 + \int^\infty_1 P(1+\beta \ln x) dx = 1 + \frac{1}{\beta} \int^\infty_1 P(y) e^{\frac{y-1}{\beta}} dy  = r(\beta) e^{-1/\beta}\,.\]
So, $f(a)/g(a) \leq r(\beta)$, which gives the required bound for Lemma~\ref{cl:psolgmssc}.
\end{proof}

\noindent\textbf{Relating the final schedule $\sigma$ to $c_z(e)$.}
We now prove Lemma~\ref{lem:tentofingmssc}, that $\expec{}{\cov_\sigma(e)} \leq \beta c_z(e)$.

\begin{proof}(Lemma \ref{lem:tentofingmssc})
Fix a time $t$, and condition on the event $E_t$ that $e$ is covered at time $t$ in $\tau$, that is, $\cov_\tau(e) = t$.
Recall that for all $t$, $\Pr[\cov_\tau(e) \geq t] \leq P(z_{e < t})$  (with $P(\gamma)=1$ for $\gamma <1$).
So to upper bound $\expec{}{\cov_\sigma(e)}$, we can assume that $\Pr[E_t] = P(z_{e < t}) -  P(z_{e \leq t})$, and that $t \geq t_e$, as otherwise $P(z_{e \leq t})= 1$ and hence $\Pr[E_t]=0$.

Now conditioned on the event $E_t$, the cover time $\cov_\sigma(e)$ of $e$ is at most $k_e + \expec{}{N(t)}$ where $N(t)$
 are the vertices not in $e$ that appear at time $\leq t$ in $\tau$, and the delay of $k_e$ is due to the vertices of $e$ itself. As the rounding for vertices $u \notin e$ is independent of $E_t$, we have that
 \[\expec{}{N(t)} = \sum_{u \notin e} z_{u,\leq t} \leq 
\beta t - \sum_{v \in e} z_{v,\leq t}\,.\]
Let $A_e(t)$ be the set of vertices $v$ in $e$ with $z_{v,\leq t} \geq 1$, and $B_e(t) = e \setminus A_e(t)$. Then,
\[\sum_{v \in e } z_{v,\leq t} = \sum_{v \in A_e(t)} z_{v,\leq t} + \sum_{v \in B_e(t)} z_{v,\leq t} \geq |A_e(t)| + k_e-|A_e(t)| = k_e\,,\]
where we use \eqref{kc:consequence}  and that $z_{e,\leq t} \geq 1$ for  $t \geq t_e$. This gives that,
\[
\expec{}{\cov_\sigma(e)} \leq \sum_t \Pr[E_t] ( k_e + \expec{}{N(t)}) \leq  \Pr[E_t] (k_e + (\beta t - k_e)) =  \beta \sum_t \Pr[E_t]\, t.\]
As $\sum_t \Pr[E_t]\, t = \sum_t  t  (P(z_{e < t}) -  P(z_{e \leq t})) = \sum_t P(z_{e < t}) = c_z(e)$, the result follows.
 \end{proof}

\section{Min Sum Vertex Cover}\label{sec:MSVC}
Now we consider MSVC, a special case of GMSSC when $|e|=2$ and $k(e)=1$. We will apply the same framework as before, but with a different kernel $K$ and use it to obtain a $16/9=1.777\ldots$ approximation.

We use the same LP as that for MSSC, where now  constraint \eqref{eq:lpcovbasic} can be written more explicitly as 
\[ u_{e,t} \geq 1- (x_{v,<t} + x_{w,<t}) \qquad \qquad \forall e=(v,w).\]
We will also show a matching $16/9$ integrality gap for this LP.

 \subsection{Approximation Result} We apply our generic Algorithm \ref{meta-algorithm} with the following kernel, 
\[K(t,t') = \frac{4t^{\prime}(t^{\prime}+1)}{t(t+1)(t+2)} \cdot \mathbbm{1}[t \ge t'].\]
and show the following
\begin{theorem}\label{thm:MSVC-approx}
The algorithm, with the above kernel, is a $16/9$-approximation for MSVC.
\end{theorem}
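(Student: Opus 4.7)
The plan is to apply the template of Section~\ref{sec:framework} with the given kernel $K$. For an edge $e=(v,w)$ with $k_e=1$, independence of $\alpha$-point rounding on $v$ and $w$ gives $\E[\cov_\tau(e)] = c_z(e) := \sum_t (1-z_{v,<t})_+(1-z_{w,<t})_+$, so it suffices to establish, per edge, the two inequalities $c_z(e) \le \tfrac43 c_x(e)$ and $\E[\cov_\sigma(e)] \le \tfrac43 c_z(e)$; multiplying and summing over edges then yields the $16/9$ ratio.

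Both the load bound and the tentative-to-proper step will be immediate adaptations of the MSSC analysis. Using the identity $\sum_{t'=1}^t t'(t'+1) = t(t+1)(t+2)/3$, a direct calculation shows $\sum_{t' \le t} K(t,t') = 4/3$, hence $\sum_v z_{v,t} \le 4/3$. Given this load bound, the conditioning argument of Section~\ref{sec:conditioning} (on the first time some vertex of $e$ is tentatively scheduled in $\tau$) carries over verbatim with $\beta = 4/3$ to give $\E[\cov_\sigma(e)] \le \tfrac43 c_z(e)$.

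The main technical step will be the bound $c_z(e) \le \tfrac43 c_x(e)$. Set $x_e := x_v + x_w$ and $z_e := z_v + z_w = K x_e$. The pointwise AM--GM inequality $(1-a)_+(1-b)_+ \le \bigl((1-(a+b)/2)_+\bigr)^2$, valid for all $a,b \ge 0$, yields $c_z(e) \le \sum_t (1 - z_{e,<t}/2)_+^2$. The LP constraint $x_{v,<t}+x_{w,<t}+u_{e,t}\ge 1$ gives $c_x(e) \ge \sum_t (1 - x_{e,<t})_+$, and since adding mass to $x_e$ past the first time its prefix sum reaches $1$ only decreases the above upper bound on $c_z(e)$ while leaving $c_x(e)$ unchanged, I may assume $\|x_e\|_1 = 1$ and $x_{e,t} \in [0,1]$, so that $c_x(e) = \sum_t t\, x_{e,t}$. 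The numerator is a sum of the convex function $(1-y)_+^2$ composed with linear forms of $x_e$ and hence is convex in $x_e$; the denominator is linear and strictly positive on the simplex. By Fact~\ref{fact}, the worst-case ratio is attained at an extreme point $x_e = \delta_{t^*}$ for some integer $t^* \ge 1$.

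At this extreme point, the partial-fraction identity $\frac{1}{s(s+1)(s+2)} = \tfrac12\bigl(\tfrac{1}{s(s+1)} - \tfrac{1}{(s+1)(s+2)}\bigr)$ telescopes the running sum of $K$ and yields $z_{e,<t}/2 = 1 - \tfrac{t^*(t^*+1)}{t(t+1)}$ for $t > t^*$ (and $z_{e,<t}=0$ for $t \le t^*$), so the upper bound on $c_z(e)$ becomes $t^* + \bigl(t^*(t^*+1)\bigr)^2 \sum_{t>t^*}\tfrac{1}{t^2(t+1)^2}$. Using $t(t+1) \ge (t-1)(t+2)$ for $t \ge 2$ to dominate termwise, followed by the second telescoping $\frac{1}{(t-1)t(t+1)(t+2)} = \tfrac13\bigl(\tfrac{1}{(t-1)t(t+1)} - \tfrac{1}{t(t+1)(t+2)}\bigr)$, I obtain $\sum_{t>t^*}\tfrac{1}{t^2(t+1)^2} \le \tfrac{1}{3 t^*(t^*+1)(t^*+2)}$, and therefore the upper bound is at most $t^* + \tfrac{t^*(t^*+1)}{3(t^*+2)} \le \tfrac43 t^* = \tfrac43 c_x(e)$. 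The main obstacle will be this final uniform tail bound: the ratio attains $4/3$ only in the limit $t^*\to\infty$, so verifying the inequality $c_z(e)\le\tfrac43 c_x(e)$ for every finite $t^* \ge 1$ hinges on the double telescoping above.
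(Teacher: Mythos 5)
Your proposal follows exactly the paper's approach: the same load bound $\sum_v z_{v,t}\le 4/3$, the same AM--GM reduction to $\sum_t(1-z_{e,<t}/2)_+^2$, the same reduction via Fact~\ref{fact} to a unit mass at $t^*$, and the identical double-telescoping tail estimate $\sum_{t>t^*}1/t^2(t+1)^2 \le 1/(3t^*(t^*+1)(t^*+2))$, combined with the conditioning argument of Section~\ref{sec:conditioning}. One small bookkeeping caveat: the conditioning argument of Section~\ref{sec:conditioning} establishes $\E[\cov_\sigma(e)]\le \frac43\sum_t(1-z_{e,<t}/2)_+^2$ (with $p_t(e)$ taken to be the upper bound $(1-z_{e,<t}/2)_+^2$, which is the quantity on which the integral trick for $\sum_t z_{e,t}(p_t+p_{t+1})\ge 1$ relies), so it is cleaner to name that upper bound $c_z(e)$ rather than the exact $\E[\cov_\tau(e)]=\sum_t(1-z_{v,<t})_+(1-z_{w,<t})_+$; with your naming, the inequality $\E[\cov_\sigma(e)]\le\frac43 c_z(e)$ is not literally what Section~\ref{sec:conditioning} proves, but your three inequalities still chain through the intermediate bound $\sum_t(1-z_{e,<t}/2)_+^2$ to give $\E[\cov_\sigma(e)]\le\frac{16}{9}c_x(e)$, so the result is unaffected.
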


This kernel should be viewed intuitively as $4t'^2/t^3$, but as time is discrete in our formulations, we use the version above.
This kernel is qualitatively very different from the one we used previously for MSSC. The idea is that in the vertex cover problem, for any edge $e=(v,w)$ at least one of $v,w$ (say $v$) will eventually be picked to an extent of at least $1/2$.
So we only need a kernel that ensures that $v$ has a mass of $1$ in the solution $z$ as $x_{v,\leq t}$ approaches $1/2$ for $t$ large enough. This allows us to pick a kernel where $K(t,t')$ decays more rapidly with $t$ (in particular, as $1/t^3$ here as opposed to $1/t$ in MSSC).

Let us make the above idea precise. 
We first state a couple of standard identities that we will need.
\[\sum_{t=a}^b t(t+1) = \frac{1}{3} \Bigl( b(b+1)(b+2) - a(a-1)(a+1)\Bigr)\,.\]
Similarly, writing $1/t(t+1) = 1/t - 1/(t+1)$ and $2/t(t+1)(t+2) = 1/t - 2/(t+1) - 1/(t+2)$,
\[ \sum_{t=a}^b  \frac{2}{t(t+1)(t+2)} =  \frac{1}{a(a+1)} - \frac{1}{(b+1)(b+2)}\,.\]

\begin{lemma} $\sum_v z_{v,t} \leq 4/3$\,, for any $t$.
\end{lemma}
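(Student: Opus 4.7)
The plan is to prove this by direct substitution, interchanging the order of summation, and then applying the closed-form identity $\sum_{t'=1}^{t} t'(t'+1) = \tfrac{1}{3} t(t+1)(t+2)$ stated just before the lemma.

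First I would expand the definition of $z_{v,t}$ via the kernel, pulling the factor $4/(t(t+1)(t+2))$ out of the inner sum since it depends only on $t$:
\[
z_{v,t} \;=\; \sum_{t' \le t} K(t,t')\, x_{v,t'} \;=\; \frac{4}{t(t+1)(t+2)} \sum_{t' \le t} t'(t'+1)\, x_{v,t'}.
\]
Then I would swap the sums in $\sum_v z_{v,t}$ and apply the LP packing constraint \eqref{eq:lppack}, which gives $\sum_v x_{v,t'} \le 1$ for every $t'$. This yields
\[
\sum_v z_{v,t} \;\le\; \frac{4}{t(t+1)(t+2)} \sum_{t'=1}^{t} t'(t'+1).
\]

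The final step is to apply the telescoping identity (with $a=1$, $b=t$, noting that the boundary term $1 \cdot 0 \cdot 2$ vanishes), which collapses the numerator and denominator to give exactly $4/3$. There is no real obstacle here — the entire content of the lemma is that the kernel $K(t,t') = 4t'(t'+1)/(t(t+1)(t+2))$ was engineered precisely so that its row sums over $t' \le t$ equal $4/3$, and the packing constraint lifts this from a row-sum statement about $K$ to a column-load statement about $z$. This is the analogue of Claim~\ref{cl:widthz} for the new kernel, and the bound $4/3$ here will play the role that $\beta$ played there when relating the tentative and final schedules.
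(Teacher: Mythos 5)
Your proposal is correct and is essentially the same as the paper's proof: both apply the packing constraint $\sum_v x_{v,t'} \le 1$ to reduce the bound to the row sum $\sum_{t'\le t} K(t,t')$, and both then evaluate that row sum via the identity $\sum_{t'=1}^{t} t'(t'+1) = \tfrac{1}{3}t(t+1)(t+2)$ to get $4/3$. The only cosmetic difference is that the paper phrases the first step in vector form as $\sum_v z_v = \sum_v K x_v \le K\mathbf{1}$, whereas you swap the sums explicitly.
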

\begin{proof}
By  constraint \eqref{eq:lppack}, \ $\sum_v x_v \leq \bf{1}$, and hence 
$
\sum_v z_{v}
= \sum_v Kx_v \leq K \mathbf{1}$. Now,
\[ (K\mathbf{1})_t = \sum_{t'} K(t,t') = \sum_{t^{\prime} \leq t} \frac{4t^{\prime}(t^{\prime}+1)}{t(t+1)(t+2)} = 
\frac{4}{t(t+1)(t+2)} \cdot \frac{t(t+1)(t+2)}{3} = 4/3.
\qedhere \]
\end{proof}

As usual, to prove Theorem \ref{thm:MSVC-approx}, it suffices to show the following.
\begin{lemma}\label{lem:4/3}
For any edge $e$,  $\expec{}{\cov_\tau(e)} \leq (4/3) c_x(e)$.
\end{lemma}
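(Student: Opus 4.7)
The plan is to follow the same per-edge template used earlier. Let $e=(v,w)$. Since the $\alpha$-point rounding is applied independently to $v$ and $w$, the probability that $e$ is uncovered at the start of time $t$ in $\tau$ equals $(1-z_{v,<t})_+ (1-z_{w,<t})_+$, so
\[
\expec{}{\cov_\tau(e)} \;=\; \sum_t (1-z_{v,<t})_+ (1-z_{w,<t})_+ \;=:\; c_z(e).
\]
Using the telescoping identity $\sum_{t''=t'}^{t-1} \frac{2}{t''(t''+1)(t''+2)} = \frac{1}{t'(t'+1)} - \frac{1}{t(t+1)}$, a direct computation with the kernel yields the closed form
\[
z_{v,<t} \;=\; 2\sum_{t'<t} x_{v,t'}\left(1-\frac{t'(t'+1)}{t(t+1)}\right).
\]

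I would next reduce to a one-parameter extremal configuration in two steps. First, \emph{symmetrize}: $c_x(e)$ depends only on $x_{v,t}+x_{w,t}$, and $(1-a)_+(1-b)_+ \leq (1-(a+b)/2)_+^2$ by AM-GM, so replacing $x_v, x_w$ by their common average $(x_v+x_w)/2$ keeps $c_x(e)$ unchanged and weakly increases $c_z(e)$; thus WLOG $x_v=x_w=\xi$. Second, set $x_{e,t}:=u_{e,t}-u_{e,t+1}$; since $x_{e,t}\leq 2\xi_t$ pointwise and the kernel has non-negative entries, replacing $2\xi$ by $x_e$ leaves $c_x(e) = \sum_t t\,x_{e,t}$ unchanged but weakly increases $c_z(e)$. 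In the reduced problem, $c_z(e)$ is convex in $x_e$ (each $(1-Z_t)_+^2$ with $Z_t$ linear is convex), $c_x(e)$ is linear and positive on the simplex $\{x_e\geq 0: \|x_e\|_1 = 1\}$, so by Fact~\ref{fact} the ratio is maximized at an extreme point $x_{e,u}=1$ for some integer $u\geq 1$.

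For this extremal configuration $c_x(e)=u$, and $Z_t=1-u(u+1)/(t(t+1))$ for $t>u$, giving
\[
c_z(e) \;=\; u \;+\; u^2(u+1)^2 \sum_{t>u} \frac{1}{t^2(t+1)^2}.
\]
The key quantitative step is to bound this tail sum. Naive estimates like $\sum_{t>u}\leq \int_u^\infty$ already overshoot at $u=1$, and induction on $u$ does not close because the quantity $u(u+1)^2 \sum_{t>u} 1/(t^2(t+1)^2)$ is monotonically \emph{increasing} toward $1/3$ from below, making this the main technical obstacle. I would sidestep both difficulties by using the integral representation $\frac{1}{t(t+1)} = \int_0^1 \frac{dx}{(t+x)^2}$ and Cauchy--Schwarz:
\[
\frac{1}{t^2(t+1)^2} \;=\; \left(\int_0^1 \frac{dx}{(t+x)^2}\right)^2 \;\leq\; \int_0^1 \frac{dx}{(t+x)^4}.
\]
Summing over $t>u$ telescopes the intervals into $\int_{u+1}^\infty y^{-4}\,dy = \tfrac{1}{3(u+1)^3}$, so
\[
c_z(e) - u \;\leq\; \frac{u^2(u+1)^2}{3(u+1)^3} \;=\; \frac{u^2}{3(u+1)} \;\leq\; \frac{u}{3},
\]
yielding $c_z(e) \leq \tfrac{4}{3}\, c_x(e)$ and proving the lemma.
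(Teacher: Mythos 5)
Your proposal is correct, and it follows the paper's template exactly through the symmetrization and extreme-point reduction, but diverges from the paper in the one genuinely technical step: bounding the tail sum $\sum_{t>u} \bigl(u(u+1)/(t(t+1))\bigr)^2$. The paper uses the algebraic comparison $\tfrac{1}{(t(t+1))^2} \leq \tfrac{1}{(t-1)t(t+1)(t+2)}$ followed by the discrete telescoping identity $\sum_{t>u}\tfrac{1}{(t-1)t(t+1)(t+2)} = \tfrac{1}{3u(u+1)(u+2)}$, giving $c_z(e)-u \leq \tfrac{u(u+1)}{3(u+2)}$. You instead write $\tfrac{1}{t(t+1)} = \int_0^1 (t+x)^{-2}\,dx$, apply Jensen/Cauchy--Schwarz to get $\tfrac{1}{(t(t+1))^2} \leq \int_t^{t+1} y^{-4}\,dy$, and telescope the integrals to obtain $c_z(e)-u \leq \tfrac{u^2}{3(u+1)}$. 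Both bounds are at most $u/3$, so both close the argument; yours is in fact marginally tighter, and it sidesteps the need to spot the product comparison and the telescoping partial-fractions identity in favor of a standard integral-comparison toolkit. The earlier reduction to $x_e$ is phrased slightly differently from the paper's "assume $x_{v,<t}+x_{w,<t}\leq 1$", but it is the same move. Your remark that a naive $\sum_{t>u} \leq \int_u^\infty$ estimate fails at small $u$ is accurate and is exactly why some care is needed at this step.
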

\begin{lemma}\label{lem:cond-msvc}
For any edge $e$,  $\expec{}{\cov_\sigma(e)} \leq (4/3) \expec{}{\cov_\tau(e)}$.
\end{lemma}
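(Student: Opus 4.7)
The plan is to condition on the random time $T := \cov_\tau(e)$ and combine the load bound $\sum_v z_{v,t}\le 4/3$ (proved just above) with the uniform random tie-breaking used when converting $\tau$ to $\sigma$. The key observation is that when $\cov_\tau(e)=t$, there are $k_t\in\{1,2\}$ vertices of $e$ among the $M_t$ vertices assigned to slot $t$ of $\tau$, and the rank of the first $e$-vertex in the uniformly random within-slot ordering has expectation $(M_t+1)/(k_t+1)$. Writing $M_t = k_t + M_t^*$ with $M_t^*$ the number of non-$e$ vertices at slot $t$, and using $k_t\ge 1$, one obtains
\[
\E[\cov_\sigma(e)\mid T=t] \;\le\; \E[N_{<t}\mid T=t] \;+\; 1 \;+\; \tfrac{1}{2}\,\E[M_t^*\mid T=t].
\]

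Since the $\alpha_u$'s for $u\notin e$ are independent of $T$ (which depends only on $\alpha_v,\alpha_w$), the two conditional expectations on the right reduce to their unconditional values. Applying the load bound and isolating the contributions of $v,w$ simplifies the above to
\[
\E[\cov_\sigma(e)\mid T=t] \;\le\; \tfrac{4}{3}t + \tfrac{1}{3} \;-\; \tfrac{1}{2}(z_{v,<t}+z_{v,\le t}) \;-\; \tfrac{1}{2}(z_{w,<t}+z_{w,\le t}).
\]
After taking expectation over $T$, the desired inequality $\E[\cov_\sigma(e)]\le \tfrac{4}{3}\E[\cov_\tau(e)]$ reduces to the claim
\[
\E_T\bigl[z_{v,<T}+z_{v,\le T}+z_{w,<T}+z_{w,\le T}\bigr]\;\ge\;\tfrac{2}{3}.
\]

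This last inequality is what I expect to be the principal technical obstacle. I would establish it by summation-by-parts. Setting $a_s := (1-z_{v,<s})_+$ and $b_s := (1-z_{w,<s})_+$, one has $\Pr[T\ge s]=a_s b_s$, and the left-hand side equals $\sum_s(z_{v,s}+z_{w,s})(a_sb_s+a_{s+1}b_{s+1})$, which in turn is at least $\sum_s \Delta\sigma_s\,(a_sb_s+a_{s+1}b_{s+1})$ with $\Delta\sigma_s:=(a_s-a_{s+1})+(b_s-b_{s+1})$ and $\sigma_s:=2-a_s-b_s\in[0,2]$. The elementary inequality $ab\ge(a+b-1)_+=(1-\sigma)_+$, valid for $a,b\in[0,1]$, identifies this lower bound as twice a trapezoidal-rule estimate of $\int_0^2(1-\sigma)_+\,d\sigma = \tfrac{1}{2}$; since $(1-\sigma)_+$ is convex, the trapezoidal rule is at least the integral, and we conclude $\E_T[\cdots]\ge 1\ge\tfrac{2}{3}$, completing the proof.
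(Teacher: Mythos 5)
Your proposal is correct, and it follows the same overall blueprint as the paper's proof in Section~\ref{sec:conditioning}: condition on $T=\cov_\tau(e)$, use the $(M_t+1)/(k_t+1)$ expected-rank fact for random tie-breaking, invoke the load bound $\sum_v z_{v,t}\le 4/3$ together with independence of $\alpha_u$ ($u\notin e$) from $T$, and finish with a convexity/trapezoidal argument. The place where you genuinely diverge is the last step. The paper replaces $\Pr[T\ge t]=a_tb_t$ by the surrogate $p_t(e)=(1-z_{e,<t}/2)^2$ (the same quadratic upper bound already built into $c_z(e)$), then applies the trapezoidal rule to the convex parabola $(1-w/2)^2$ on $[0,2]$ to get $\ge 4/3$. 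You instead keep the exact quantity $a_sb_s$, apply the elementary pointwise inequality $ab\ge(a+b-1)_+$ to reduce to the single variable $\sigma_s=2-a_s-b_s$, and then run the trapezoidal rule on the convex function $(1-\sigma)_+$ to get $\ge 1$. Both give more than the needed $2/3$. Your variant is arguably cleaner in one respect: because you never pass to the surrogate $p_t(e)$, your chain of inequalities literally bounds $\E[\cov_\sigma(e)]$ by $(4/3)\E[\cov_\tau(e)]$ as the lemma states, whereas the paper's argument actually establishes $\E[\cov_\sigma(e)]\le (4/3)c_z(e)$ (with $c_z(e)\ge\E[\cov_\tau(e)]$), which is what the overall approximation ratio needs but is not verbatim the inequality claimed. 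One small point you should make explicit for a full writeup: the bounds $a_s-a_{s+1}\le z_{v,s}$ and $b_s-b_{s+1}\le z_{w,s}$ hold as inequalities (not equalities) once the truncation at $1$ kicks in, and $\sigma_s$ runs monotonically from $0$ to some $\sigma_\infty\ge 1$, so the integral of $(1-\sigma)_+$ over the traversed range is exactly $1/2$; both facts are easy but are load-bearing.
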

The proof of Lemma~\ref{lem:cond-msvc} is given in Section~\ref{sec:conditioning}, and here we prove Lemma~\ref{lem:4/3}.
\begin{proof}[Proof of Lemma~\ref{lem:4/3}]
Fix an edge $e = (v,w)$.
Let $p_{v,t}$ denote the probability that $v$ is not covered before time $t$ in the tentative schedule $\tau$. Then, the expected cover time of $e$ in $\tau$ is $\expec{}{\cov_e(\tau)}  = \sum_{t \ge 1} p_{v,t} \cdot p_{w,t}$. To compute $p_{v,t}$, note that
the $v$ is assigned to a slot $<t$ by $\alpha$-point rounding with probability $\min \set{1, z_{v,<t}}$, and
\[ z_{v, < t} = \sum_{t'<t} z_{v,t'} 
= \sum_{t' < t} \sum_{t'' \leq t'} K(t',t'') x_{v,t''}
=\sum_{t'' < t} x_{v,t''} \sum_{t' = t''}^{t-1}
K(t',t'') = \sum_{t'' < t} 2\left(1-\frac{t''(t''+1)}{t(t+1)}\right) x_{v,t''} \]
where we use that 
\[\sum_{q = t''}^{t-1} K(q,t'')= {4 t'' (t''+1)} \sum_{q = t''}^{t-1} \frac{1}{q(q+1)(q+2)} = 2 \left(1 - \frac{t''(t''+1)}{t(t+1)} \right).\] 
This gives that,
\beq 
\label{pvt} p_{v,t}= 1 -\min \set{1, z_{v,<t}}  =  \left( 1- \sum_{ t^{\prime}< t } 2\left(1-\frac{t'(t'+1)}{t(t+1)}\right) x_{v,t'} \right)_+\,. \eeq
Next, the LP cost $c_x(e)$ for $e$ is exactly
$\sum_t \left( 1- \sum_{t' < t} (x_{v,t'} + x_{w,t'}) \right)_+$. 
So our goal is to show that for any setting of $x_{v,t}$ and $x_{w,t}$, 
\beq
\label{eq:toshowmsvc} \sum_t p_{v,t} \cdot p_{w,t} \leq \frac{4}{3} \sum_t \left( 1- \sum_{t' < t} (x_{v,t'} + x_{w,t'}) \right)_+ \eeq
where $p_{v,t}$ and $p_{w,t}$ are given by \eqref{pvt}.

To this end, we first make some simple observations. First, we can assume that $x_{v,<t} + x_{w,<t} \leq 1$ for all $t$ as this does not affect the right side of \eqref{eq:toshowmsvc}, and can only increase the left hand side.
Next, replacing each $x_{v,t}$ and $x_{w,t}$ by their average value $(x_{v,t}+x_{w,t})/2$ does not affect the right side, and can only increase the left side as by AM-GM inequality, $ (1-a)(1-b) \leq (1-(a+b)/2)^2$\,,
whenever $a,b \in [0,1]$.
Thus proving \eqref{eq:toshowmsvc} reduces to showing the following. 

Given a vector $a$ with non-negative entries (where $a_t$ corresponds to 
$x_{v,t}+x_{w,t}$) and $\|a\|_1=1$, 
\[
\sum_t \left( 1- \sum_{ t^{\prime}< t} \left(1-\frac{t'(t'+1)}{t(t+1)}\right) a_{t'} \right)^2 \leq \frac{4}{3}  \sum_t \left( 1- \sum_{t' < t} a_{t'} \right)\,. 
\]
Let $f(a)$ denote the left hand side above, and $g(a)$ denote the right side, then we get the following problem.
\[ (\P) \qquad  \label{def:probmsvc} \text{ Maximize} \qquad   f(a)/g(a) \qquad \text{s.t.} \qquad \|a\|_1 =1, \qquad a_t \geq 0. \]
Again, $g(a)$ is linear in $a$ and  strictly positive over the positive simplex. For a fixed $t$, each summand in $f(a)$ is a term of the form $(1- c \cdot a)^2$\,, where $c$ is a vector with $\|c\|_\infty \leq 1$, implying that $|c \cdot x|\leq 1 $ and that the term is convex. Thus $f(a)$ is convex. So by Fact \ref{fact}, the optimum is attained at some extreme point of the simplex of the form $a_u=1$ and $a_{u'}=0$ for $u'\neq u$.

In this case, $g(a) = {4u/3 }$ and 
\[ f(a) = u + \sum_{t > u}  \Big(\frac{u(u+1)}{t(t+1)}\Big)^2  \leq u + \sum_{t >u}  \frac{u^2(u+1)^2}{(t-1)t(t+1)(t+2)} =   u + \frac{u^2(u+1)^2}{3u(u+1)(u+2)} \leq {\frac{4u}{3}}, \]
which proves the desired inequality.
\end{proof}

\subsection{Tight Integrality Gap}
We now show that the integrality gap of the LP for MSVC is arbitrarily close to $16/9$.

Consider an instance consisting of disjoint copies of the complete graphs $K_i$, for $i=1,\ldots,k$, where $K_i$ has $n_i$ vertices. We will set $n_i = N i^{-\alpha}$ where $\alpha=\frac{2}{3}+\epsilon$, with $\epsilon$ approaching $0$, and for a suitable $k=\omega(1)$. 
Also choose $N$ large enough 
(so the issues of floor and ceiling in $n_i$ will not affect us). Note that $n_1> n_2 \ldots > n_k$.
To show the integrality gap, we will upper bound the LP cost suitably, and show that any integral solution has cost at least $16/9 -O_{\epsilon}(1)$ times the upper bound on the LP cost.

\begin{lemma} \label{lem:iglpcost}The LP cost is at most $N^3/{(4\epsilon)}$, up to lower order terms.
\end{lemma}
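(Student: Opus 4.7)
The plan is to exhibit an explicit feasible LP solution $x$ whose objective is at most $N^3/(4\epsilon)$, up to lower order terms. The crucial observation is that within the block dedicated to a clique $K_i$, the LP only needs to schedule each vertex of $K_i$ to extent $1/2$: this already makes $x_{v,<t} + x_{w,<t} \geq 1$ for every edge $(v,w)$ of $K_i$, and the remaining $1/2$ mass per vertex can be deferred to a tail region that contributes nothing to any edge cost. This essentially doubles the effective ``speed'' of the LP compared to the naive full-mass-in-block construction, and is exactly what shaves the leading constant from $1/2$ down to $1/4$.

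Concretely, set $B_i := n_i/2$ and $T_i := \sum_{j<i} B_j$, and let block $i$ be the slots $(T_i, T_i + B_i]$. For every $v \in V(K_i)$ and every $t \in (T_i, T_i+B_i]$ let $x_{v,t} = 1/n_i$. The packing constraint $\sum_v x_{v,t} = n_i\cdot(1/n_i) = 1$ is tight inside block $i$, and each vertex $v \in V(K_i)$ ends block $i$ with accumulated mass $B_i/n_i = 1/2$. The remaining mass is placed, one vertex per slot, in a tail region following block $k$; this is feasible and irrelevant to edge costs, since every edge of $K_i$ is fractionally covered by the end of its block.

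For an edge $e=(v,w)$ of $K_i$, the cost $\sum_t u_{e,t}$ decomposes into three parts. For $t \leq T_i$ one has $u_{e,t}=1$, summing to $T_i$. For $t \in (T_i, T_i+B_i]$ with $j$ slots of block $i$ elapsed, $x_{v,<t} + x_{w,<t} = 2j/n_i$, and summing $\max(0,\, 1-2j/n_i)$ over $j = 0,1,\ldots, n_i/2 - 1$ yields $n_i/4 + 1/2$. After block $i$ one has $u_{e,t}=0$. Hence the per-edge cost is $T_i + n_i/4 + O(1)$, and the total LP cost is at most
\[ \sum_{i=1}^{k} \binom{n_i}{2}\Bigl(T_i + \tfrac{n_i}{4} + O(1)\Bigr). \]

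Finally, plugging in $n_i = N i^{-\alpha}$ with $\alpha = 2/3 + \epsilon$ gives $T_i \sim N i^{1-\alpha}/(2(1-\alpha))$, and the dominant contribution is
\[ \sum_i \tfrac{n_i^2}{2}\, T_i \;\sim\; \frac{N^3}{4(1-\alpha)} \sum_i i^{1-3\alpha} \;=\; \frac{N^3}{4(1-\alpha)} \sum_i i^{-1-3\epsilon}. \]
As $\epsilon \to 0$, $1/(1-\alpha) \to 3$ and $\sum_{i\geq 1} i^{-1-3\epsilon} = (1+o(1))/(3\epsilon)$, so this evaluates to $(1+o(1))\, N^3/(4\epsilon)$. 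The secondary sum $\sum_i \binom{n_i}{2}\cdot n_i/4 \sim (N^3/8)\sum_i i^{-2-3\epsilon} = O(N^3)$ is absorbed into the lower order terms. The main obstacle is really just spotting the half-mass trick; once that is in place the remainder is routine estimation of a geometric-type series.
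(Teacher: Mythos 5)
Your construction is exactly the paper's: schedule each vertex of $K_i$ to extent $1/n_i$ over a block of $n_i/2$ slots, giving each vertex total mass $1/2$, and defer the remainder. You compute the per-edge cost slightly more precisely ($T_i + n_i/4 + O(1)$ versus the paper's cruder bound by the cover time $T_i + n_i/2$), but since the $\binom{n_i}{2}\,n_i$ terms are $O(N^3)$ and hence lower order, both arrive at the same leading term $N^3/(4\epsilon)$; the approach and the estimation of the geometric-type sum are essentially identical.
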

\begin{proof}
Consider the fractional solution that first schedules each vertex of $K_1$ to extent $1/n_1$ in the first $n_1/2$ slots, then schedules the vertices of $K_2$ in the next $n_2/2$ slots and so on.
Then, each edge in $K_i$ is completely covered by time $(n_1+\ldots+n_i)/2$, and the cost of this solution is at most
\[ \sum_{i=1}^k {n_i \choose 2} \frac{n_1 + \ldots n_i}{2} \leq \sum_{i=1}^k \frac{N^3}{4} i^{-2\alpha} \sum_{j=1}^i j^{-\alpha} \leq \sum_{i=1}^k \frac{N^3}{4} i^{-2\alpha} \frac{ i^{1-\alpha}}{1-\alpha} = \frac{N^3}{4(1-\alpha)} \sum_{i=1}^k{ i^{1-3\alpha}}\,, \]
where we used that $\sum_{j=1}^i j^{-\alpha} \leq \int_{j=0}^i x^{-\alpha} dx  = i^{1-\alpha}/(1-\alpha)$, as $\alpha <1$.
Moreover, for $\alpha = \frac{2}{3} + \epsilon < 1$, as $\epsilon$ gets arbitrarily small
\[\sum_{i=1}^k i^{1-3\alpha} = \sum_{i=1}^k i^{-1-3\epsilon} 
\leq 1 + \int^\infty_1 x^{-1-3\epsilon} dx
= 1 + \frac{1}{3\epsilon}\,.
\]  

For $\epsilon$ small  and $k$ large enough, 
the LP cost is bounded by
$N^3(1+3\epsilon)/4\epsilon(1-3\epsilon) \approx N^3/4\epsilon$\,.
\end{proof}

\begin{lemma}\label{lem:igintcost} Any integral solution has cost at least $4N^3/(9\epsilon)$, up to lower order terms.
\end{lemma}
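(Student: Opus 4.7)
The plan is to characterize the optimum integral cost via a Smith's rule / rearrangement argument, and then evaluate it asymptotically to match the target bound.

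First, for any integer schedule, let $t_{i,j}$ denote the time of the $j$-th vertex of $K_i$ that is scheduled (sorted so that $t_{i,1}<\cdots<t_{i,n_i}$). Then the total cover-time of edges in $K_i$ equals $\sum_{j=1}^{n_i}(n_i-j)\,t_{i,j}$, because the $j$-th vertex of $K_i$ is the ``first endpoint'' of exactly $n_i-j$ edges within that clique. The multiset $\{t_{i,j}\}_{i,j}$ is a permutation of $\{1,2,\ldots,N_{\mathrm{tot}}\}$ with $N_{\mathrm{tot}}=\sum_i n_i$. A pairwise swap / rearrangement argument shows that $\sum_{i,j}(n_i-j)\,t_{i,j}$ is minimized when time slots are assigned to (clique,~index) pairs in decreasing order of the weight $w=n_i-j$; since weights within any single clique are strictly decreasing in $j$, this Smith's-rule assignment automatically respects the within-clique ordering $t_{i,1}<\cdots<t_{i,n_i}$. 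Consequently, any integer cost is at least $\sum_{t=1}^{N_{\mathrm{tot}}}W_t\cdot t$, where $W_1\ge W_2\ge\cdots$ is the sorted multiset $\{n_i-j\}_{i,j}$.

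Next I would count multiplicities: weight $w$ occurs $k_w:=|\{i:n_i\ge w+1\}|$ times, with $k_w\approx\lfloor(N/w)^{1/\alpha}\rfloor$ for $w>n_k$ (and $k_w=k$ for $w<n_k$). The weights equal to $w$ occupy positions $T(w)+1,\ldots,T(w)+k_w$ in the sorted order, where $T(w):=\sum_{w'>w}k_{w'}$, yielding the lower bound $\sum_{w\ge 0}w\,k_w\,T(w)$. Writing $\beta=1/\alpha$, in the fluid regime $w\in[n_k,N]$ we have $k_w\approx N^\beta w^{-\beta}$ and $T(w)\approx N^\beta w^{1-\beta}/(\beta-1)$, so
\[
\sum_{w\ge n_k} w\,k_w\,T(w)\;\approx\;\frac{N^{2\beta}}{\beta-1}\int_{n_k}^{N}w^{2-2\beta}\,dw.
\]
For $\alpha=\tfrac{2}{3}+\epsilon$ the exponent becomes $2-2\beta=-1+\tfrac{9\epsilon}{2}+O(\epsilon^2)$, so the integral evaluates to $(N^{9\epsilon/2}-n_k^{9\epsilon/2})/(9\epsilon/2)$. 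The critical cancellation $N^{2\beta}\cdot N^{9\epsilon/2}=N^3$ together with $\beta-1\to\tfrac12$ as $\epsilon\to 0$ yields the asymptotic lower bound $\tfrac{4N^3}{9\epsilon}\bigl(1-(n_k/N)^{9\epsilon/2}\bigr)$. Choosing $k$ large enough that $n_k=O(1)$ and then sending $N\to\infty$ makes $(n_k/N)^{9\epsilon/2}\to 0$, while the small-$w$ contribution ($w<n_k$) is crudely bounded by $k\cdot n_k^2\cdot T(n_k)=O(N^3 k^{-3\epsilon})=o(N^3/\epsilon)$ for $k$ taken suitably large in $1/\epsilon$.

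The main obstacle will be the asymptotic bookkeeping: the exponent $2-2\beta$ is critically close to $-1$, so what would be a logarithmic divergence at $\epsilon=0$ is instead truncated at the $\Theta(1/\epsilon)$ scale, and extracting the precise leading constant $4/9$ (rather than some nearby value) requires the $O(\epsilon)$ corrections in $\beta-1$, $N^{2\beta}$, and the integrand to align simultaneously. A careful choice of $k=k(\epsilon,N)$ is also needed so that the tail correction $(n_k/N)^{9\epsilon/2}$ vanishes while the auxiliary $w<n_k$ contribution remains $o(N^3/\epsilon)$.
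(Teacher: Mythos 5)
Your proof is correct and takes a genuinely different route than the paper. The paper argues that the specific alternating greedy schedule (first shrink $K_1$ to size $n_2$, then alternate $K_1,K_2$ down to $n_3$, etc.) is the integer optimum because it minimizes the number of uncovered edges at each time step (asserted as ``easily verified''), and then lower-bounds the cost of that particular schedule by decomposing time into phases $[t_i,t_{i+1}]$ indexed by the clique number $i$. You instead express the total cost of an \emph{arbitrary} integral schedule as $\sum_{i,j}(n_i-j)\,t_{i,j}$ and apply the rearrangement inequality directly, which cleanly yields a lower bound without needing to first identify the optimal schedule; you then parametrize the asymptotics by the weight $w=n_i-j$ rather than the clique index $i$. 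The two parametrizations are related by $w \approx n_i$ and give the same leading constant $4/(9\epsilon)$ (your exact cancellation $N^{2\beta}\cdot N^{3-2\beta}=N^3$ is indeed identically $N^3$, not just to leading order in $\epsilon$). What your approach buys is a self-contained justification of optimality via Smith's rule in place of the paper's informal greedy-optimality claim, at the cost of somewhat more elaborate fluid-limit bookkeeping in $w$. Two minor points: (i) your approximation $T(w)\approx N^\beta w^{1-\beta}/(\beta-1)$ drops the $-N^{1-\beta}$ boundary term, which \emph{overestimates} $T(w)$; the resulting overcount is $\frac{N^3}{(\beta-1)(2-\beta)}=O(N^3)$, which is lower order than $N^3/\epsilon$, so the lower bound stands but this should be tracked; and (ii) the small-$w$ discussion is a red herring --- since the $w<n_k$ terms are nonnegative you may simply drop them from a lower bound, so no upper bound on that contribution is needed.
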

\begin{proof}
We will measure  the cost of an integral solution as the sum over time $t$ of the number of uncovered edges at $t$.
In any clique $K$, if $x$ vertices remain then there are exactly $x(x-1)/2 \approx x^2/2$ uncovered edges.
Consider the greedy algorithm, that at any time picks a vertex that covers the most number of edges.  
For our instance, it is easily verified that this in fact  minimizes the total number of remaining uncovered edges at every time $t$, and hence is clearly the optimum solution. The algorithm thus proceeds by first  covering vertices in $K_1$, until $n_2$ vertices are left {in $K_1$}, then alternately picking vertices from $K_1$ and $K_2$ until $n_3$ vertices are left in both, then alternating between $K_1,K_2,K_3$ until $n_4$ vertices are left and so on.

For $i \in [k]$, let $t_i$ denote the  first time when $n_i$ vertices are left in each of $K_1, \ldots, K_{i}$.
Then
$t_1=0$ and  
$t_{i+1} - t_i = i(n_{i}-n_{i+1})$
as the sizes of $K_1,\ldots,K_i$ must shrink from $n_i$ to $n_{i+1}$ during this time.

During $t_i$ to $t_{i+1}$, the number of uncovered edges is at least
$ i \cdot {n_{i+1} \choose 2}  + \sum_{j=i+1}^k {n_j \choose 2}$, where the first contribution is due to all the edges between the $n_{i+1}$ unpicked vertices in each of the first $i$ cliques $K_1,\ldots,K_i$, and the latter due to all the edges in the remaining cliques $K_{i+1},\ldots,K_k$. 
So the objective is at least
\begin{align}
\label{eq:MSVC-cost}
    \sum_{i=1}^{k-1} ( t_{i+1}-t_i)\left(i \cdot {n_{i+1} \choose 2}  + \sum_{j=i+1}^k {n_j \choose 2} \right) =  \sum_{i=1}^{k-1}  i(n_i - n_{i+1}) \left(i \cdot {n_{i+1} \choose 2}  + \sum_{j=i+1}^k {n_j \choose 2} \right)\,. 
\end{align} 
Note that we are even ignoring the contribution due to edges covered after time $t_k$.

Replacing ${n_i \choose 2} \approx \frac{1}{2} n_i^2$ we have
\[ i \cdot {n_{i+1} \choose 2}  + \sum_{j=i+1}^k {n_j \choose 2}   \approx \frac{N^2}{2}  \left(i^{1-2\alpha} + \int_i^k j^{-2\alpha} dj \right) =  
\frac{N^2}{2}  \left(
i^{1-2\alpha} + \frac{1}{2\alpha-1} \left(i^{1-2\alpha} - k^{1-2\alpha} \right)\right)\]
 Substituting this in  \eqref{eq:MSVC-cost} and writing $n_i - n_{i+1} \approx  N  \alpha i^{-1-\alpha}$ gives
\[ \frac{N^3}{2} \sum_{i=1}^k  i \cdot \alpha i^{-1-\alpha}  \left( \frac{2\alpha}{2\alpha-1} i^{1-2\alpha} - \frac{1}{2\alpha-1} k^{1-2\alpha}      \right)  = \frac{N^3}{2}  \left(\sum_{i=1}^k \frac{2\alpha^2}{2\alpha-1} i^{1-3\alpha}  -    \frac{\alpha k^{1-2\alpha}}{2\alpha-1}\sum_{i=1}^k i^{-\alpha}  \right) \]
As $\alpha > 2/3$, as $k$ gets large, the first term converges to $N^3 \alpha^2/(2\alpha-1)(3\alpha-2) \approx 4N^3/(9\epsilon)$. Next, as $\alpha < 1$, the second term is about $\frac{\alpha}{2\alpha-1}
\frac{k^{1-\alpha}}{1-\alpha} k^{1-2\alpha} \cdot (N^3/2) = O(k^{2-3\alpha}) \cdot (N^3/2) = o(N^3)$ for  $k=\omega(1)$. So any integral solution has cost at least $4N^3/(9\epsilon)$, up to lower order terms.
\end{proof}

Lemmas~\ref{lem:iglpcost} and~\ref{lem:igintcost}  give an integrality gap of $(4/9)/(1/4) = 16/9$.

\section{Conditioning for MSSC and MSVC}
\label{sec:conditioning}
We now prove Lemmas~\ref{lem:msscfin} and \ref{lem:cond-msvc}, that relate the expected cover time for an edge $e$ in $\sigma$ to $c_z(e)$ for MSSC and MSVC.  The conditioning is more subtle here, and handling it needs more care, and crucially uses the randomness in the tie-breaking rule (i.e., line $7$, Algorithm \ref{meta-algorithm}).

Fix some time $t$, and an edge $e$. Let $E_t$ denote the event that $\cov_\tau(e)=t$.
Recall that, given a solution $z$, we defined $c_z(e) = \sum_t p_t(e)$, where $p_t(e)$ was an upper bound on the probability that $e$ is uncovered just before $t$ in $\tau$, i.e.~$\Pr[\cov_\tau(e)  \geq t] \leq p_t(e)$.

Let us condition on the event $E_t$.
Then the expected cover time of $e$ in $\sigma$ is $1$ plus the delay due to vertices not in $e$.
Among the vertices not in $e$,
let $A_{<t}$ be those scheduled before $t$ in $\tau$, and $A_t$ be those scheduled at $t$. If $k$ vertices of $e$ are scheduled at $t$, then as the ties at time $t$ are broken at random, the expected delay of $e$ due to vertices in $A_t$ is exactly $|A_t|/(k+1) \leq |A_t|/2$ (as $k\geq 1$ conditioned on $E_t)$.

So the expected cover time of $e$ in $\sigma$ conditioned on $E_t$ is at most $1 + \expec{}{|A_{< t}|} + \expec{}{|A_t|/2}$.
Now, any $v \notin e$ is scheduled before $t$ in $\tau$ with probability $z_{v,<t}$, and this is independent of $E_t$. So,
\[ \expec{}{|A_{< t}|}  = \sum_{v \notin e} z_{v,<t} = \beta(t-1) - \sum_{v \in e} z_{v,<t} 
 \leq \beta (t-1) - z_{e, <t }\,.\] Here we used 
$z_{e,t} \leq \sum_{v\in e} z_{v,t}$ for MSSC and MSVC. 
 Similarly, $\expec{}{|A_{t}|}  \leq \beta - z_{e,t}$. Hence,
\begin{align}
    \expec{}{\cov_\sigma(e)} & \leq \sum_t \Pr[E_t] \left(1 + \beta(t-1) - z_{e,<t} + (\beta - z_{e,t})/2 \right) \nonumber \\
    & \leq \sum_t  (p_t(e)-p_{t+1}(e) ) (1 + \beta(t-1) - z_{e,<t} + (\beta - z_{e,t})/2), \label{cond:ineq} 
\end{align}  
where we use that as $(\beta(t-1) - z_{e,<t} + (\beta - z_{e,t})/2)$ is non-decreasing in $t$ and $\Pr[\cov_\tau(e)  \geq t] \leq p_t(e)$,  replacing $\Pr[E_t]$ by $p_{t}(e)- p_{t+1}(e)$ can only increase the right hand side.

As $c_z(e) = \sum_t p_t(e) = \sum_t t(p_t(e)-p_{t+1}(e))$,
to prove 
$ \expec{}{\cov_\sigma(e)} \leq \beta c_z(e)$, it suffices to show that the expression in \eqref{cond:ineq} is at most  $\beta \sum_t t(p_t(e)-p_{t+1}(e))$. 

Using $\sum_t (p_t(e) - p_{t+1}(e))=1$ and
simplifying, this reduces to showing that
\[1 - \beta/2  \leq   \sum_t  (p_t(e)-p_{t+1}(e) ) (z_{e,<t} + z_{e,t}/2).\]
As $\beta \geq 1 $, the left side is at most $1/2$, and the right hand side simplifies to
$\sum_t z_{e,t} (p_{t}(e) + p_{t+1}(e))/2.$
So our goal is to show that
\[\sum_t z_{e,t} (p_{t}(e) + p_{t+1}(e)) \geq 1.\]
We now do this separately for MSSC and MSVC.

\noindent{\em MSSC:} In the definition of $c_z(e)$, we defined $p_t(e) = \exp(-z_{e,<t})$, so our goal is to show that
\beq \label{eq:abc} \sum_t \big(\exp(-z_{e,<t}) + \exp(-z_{e,<t+1})\big) z_{e,t} \geq 1. \eeq
Now, for any convex function $f$ and any two points $a$ and $b$ it holds that  $(f(a)+  f(b)) (b-a) \geq 2 \int_a^b f(u) du$. 
Setting $f(u)= \exp(-u)$ and $a= z_{e,<t}$ and $b=z_{e,<t+1}$ for each $t$, the left hand side in \eqref{eq:abc} least 
\[ 2 \sum_t \int_{z_{e,<t}}^{z_{e,t+1}} \exp(-w)dw = 2 \exp(-z_{e,<1}) -2 \exp(-z_{e,<\infty}) =2 > 1,\]
which proves the desired result for MSSC (Lemma~\ref{lem:msscfin}).

\noindent{ \em MSVC.} Here, in the definition of $c_z(e)$ we defined $p_t(e) = (1-z_{e,<t}/2)^2$, which is convex for $z_{e,<t} \in [0,2]$. For MSVC, $e$ is surely covered if $z_{e,<t} \geq 2$, and hence $p_t(e)=0$ for such $t$. So we only need to consider $t$, where $z_{e,<t} \leq 2$. So, following the argument above,
\[ \sum_t z_{e,t} (p_{t}(e) + p_{t+1}(e)) \geq 2 \sum_t \int_{z_{e,<t}}^{z_{e,<t+1}} (1-w/2)^2 dw = 2 \int_0^2 (1-w/2)^2 dw = 4/3 > 1.\]
This proves the desired inequality for MSVC (Lemma~\ref{lem:cond-msvc}).

\section{A Tail Bound for Sum of Bernoulli R.V.s}
\label{sec:tail}
In this section we present our main ideas for proving Theorem \ref{thm:enhanced-bound}.
Recall that we are going to upper bound $Pr[S \leq k-1]$ where $S$ is sum of $n$ independent Bernoulli random variables, having $\expec{}{S} = k \gamma$ for $\gamma \geq 1$ and integer $k \geq 1.$

First, we observe assuming our Bernoulli variables are identical and increasing their number, i.e., $n$, can only increase the left tail we are going to upper bound. This allows to reduce the problem to bounding the left tail of a Poisson distribution. Next, by an induction on $k$ we reduce the problem to the case when $\gamma = 1$.

The following lemma allows us to reduce the space of distributions under study to sum of independent and \emph{identical} Bernoulli random variables, i.e., a Binomial distribution $B(n, p)$, for which we denote the cumulative distribution function (CDF) by
$
F(k;n,p) \defeq \sum_{i = 0}^{k} {n \choose i} p^i (1-p)^{n-i}\,,
$ for $k \in Z^+$\,.

\begin{lemma}[\cite{hoeffding1956distribution}, Theorem 4]\label{lem:Hoeffding}
Among all choices of $p_i$ with $\sum_i p_i = \mu$ fixed, for $k \leq \mu$, \  $\Pr[S \leq k-1]$ is maximized when all the $p_i$'s are identical, equal to $p=\mu/n$ .
That is, 
$$
\Pr[S \leq  k-1] \leq F(k-1;n,p) = \sum_{i = 0}^{k-1} {n \choose i} p^i (1-p)^{n-i}.$$
\end{lemma}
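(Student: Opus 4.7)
The approach is a pairwise-smoothing (majorization) argument: I will show that for any two indices $i \neq j$, replacing $(p_i, p_j)$ by their common average $((p_i+p_j)/2, (p_i+p_j)/2)$ does not decrease $\Pr[S \leq k-1]$. Iterating this operation drives any configuration on the simplex $\{\sum_\ell p_\ell = \mu,\ p_\ell \in [0,1]\}$ toward the symmetric point $(\mu/n,\dots,\mu/n)$, and by continuity of the probability in the $p_\ell$'s, this yields the claimed extremal characterization.

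\textbf{Key computation.} Fix $i \neq j$ and let $T = \sum_{\ell \neq i,j} X_\ell$, which is independent of $(X_i, X_j)$ and whose distribution depends only on $\{p_\ell\}_{\ell \neq i,j}$. Writing $s = p_i + p_j$, one checks that $\Pr[X_i+X_j=0] = 1 - s + p_ip_j$, $\Pr[X_i+X_j=1] = s - 2p_ip_j$, and $\Pr[X_i+X_j=2] = p_ip_j$. Conditioning on $X_i+X_j$ and collecting terms yields
\[
\Pr[S \leq k-1] \;=\; A(s) \;+\; p_i p_j \cdot \bigl(\Pr[T=k-1] - \Pr[T=k-2]\bigr),
\]
where $A(s)$ depends only on $s$ and on the distribution of $T$, and is thus unaffected by moving to the average. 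By AM--GM, $p_ip_j \leq s^2/4$ with equality iff $p_i = p_j = s/2$. Hence the smoothing step weakly increases $\Pr[S \leq k-1]$ precisely when the coefficient $\Pr[T=k-1] - \Pr[T=k-2]$ is nonnegative, and this positivity is where all the real work lies.

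\textbf{Main obstacle.} Establishing $\Pr[T=k-1] \geq \Pr[T=k-2]$ under the hypothesis $k \leq \mu$ is the crux. My plan is to invoke (i) unimodality of Poisson--Binomial distributions (a classical fact, e.g.~Darroch), together with (ii) the result that the mode lies within a unit distance of the mean (Samuels/Jogdeo--Samuels). Since $\E[T] = \mu - s$, whenever $s \leq 1$ the mode of $T$ is at least $\lfloor \mu - 1 \rfloor \geq k-1$, and the desired inequality follows from unimodality. The delicate case is $s > 1$: my plan is to always smooth a pair with $p_i + p_j \leq 1$ whenever one exists (which is guaranteed unless \emph{every} $p_\ell > 1/2$), and otherwise to pass to the complementary Bernoullis $\tilde X_\ell = 1 - X_\ell$, under which $\Pr[S \leq k-1] = \Pr[\sum_\ell \tilde X_\ell \geq n - k + 1]$ becomes an upper-tail event that can be treated by the mirror smoothing argument. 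If this case analysis proves fragile, I would fall back on Hoeffding's original proof, which encodes the same information via Schur-convexity of the $\E[\phi(S)]$ functional on the simplex but with a more careful extremal analysis via elementary-symmetric polynomial identities.
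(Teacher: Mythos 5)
The paper does not prove this lemma; it is cited verbatim as Theorem~4 of Hoeffding (1956). Your smoothing/majorization approach is therefore a genuine alternative route, and your key computation --- reducing each pairwise smoothing step to the sign of $\Pr[T = k-1] - \Pr[T = k-2]$ --- is correct and exactly captures why the hypothesis $k \le \mu$ enters. That reduction is the right starting point for a self-contained proof.

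However, the case analysis you sketch to complete the argument has a real gap. First, the claim that a pair with $p_i + p_j \le 1$ exists ``unless every $p_\ell > 1/2$'' is false: you need at least two indices with $p_\ell \le 1/2$ (e.g.~$p = (0.4,\, 0.7,\, 0.7)$ has no such pair, yet not every entry exceeds $1/2$). More substantively, the ``mode within one of the mean'' bound is not sharp enough to close the argument in the boundary regime. For the uncomplemented step you need $\E[T] = \mu - s \ge k-1$, i.e.~$s \le \mu - k + 1$; when $\mu = k$ this forces $s \le 1$. Carrying out the same computation after complementing the Bernoullis (with $k' = n-k+1$, $\tilde T$ of mean $n - 2 - \mu + s$, and mode required to be $\le k'-2 = n-k-1$) shows that the mean-to-mode argument again needs $n - 2 - \mu + s \le n - k - 1$, i.e.~exactly the same constraint $s \le \mu - k + 1$. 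So complementing does not help precisely where you invoke it ($s > 1$ with $\mu = k$). The coefficient $\Pr[T=k-1] - \Pr[T=k-2]$ can still be nonnegative in those cases, but establishing this requires a finer unimodality argument than crude mean--mode proximity (and note that Jogdeo--Samuels concerns the Poisson \emph{median}, not the mode; the relevant mode result is Darroch 1964). Your stated fallback to Hoeffding's original Schur-convexity argument is the safer path; as written, the sketch does not close.
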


Now we apply the above Hoeffding's lemma again to reduce our problem to bounding CDF of a Poisson distribution.

\begin{lemma}\label{lem:PoissonBound}
For $\lambda = np$,
$
F(k-1;n,p) \leq e^{-\lambda} \sum_{i=0}^{k-1} \frac{\lambda^i}{i!}\,.
$
\end{lemma}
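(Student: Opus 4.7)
}
The plan is to use Hoeffding's lemma (Lemma \ref{lem:Hoeffding}) a second time, combined with the classical Poisson limit of binomial distributions. The key observation is that while $F(k-1;n,p)$ already bounds the left tail for our fixed $n$, we are free to pad our collection with additional Bernoulli variables of probability $0$ without changing the distribution of $S$, and then reapply Hoeffding's maximization principle to the larger collection.

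Concretely, fix any $n' \geq n$ and consider the augmented collection of $n'$ Bernoulli random variables with success probabilities $(p,p,\ldots,p,0,0,\ldots,0)$, where $p = \lambda/n$ appears $n$ times and $0$ appears $n'-n$ times. Their sum is distributionally identical to $S$, and the total mean is still $\lambda$. Applying Lemma \ref{lem:Hoeffding} to this collection of size $n'$ (valid since $k - 1 < k \leq \lambda$), the left tail is maximized by the uniform choice $p_i = \lambda/n'$ for all $i$, giving
\[
F(k-1;n,p) \;=\; \Pr[S \leq k-1] \;\leq\; F(k-1;n',\lambda/n').
\]

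Now I would let $n' \to \infty$. For each fixed $i \in \{0,1,\ldots,k-1\}$,
\[
\binom{n'}{i}\Bigl(\frac{\lambda}{n'}\Bigr)^i\Bigl(1-\frac{\lambda}{n'}\Bigr)^{n'-i}
\;\longrightarrow\; e^{-\lambda}\frac{\lambda^i}{i!},
\]
by the standard Poisson approximation to the binomial distribution (the ratio $\binom{n'}{i}/(n')^i \to 1/i!$ and $(1-\lambda/n')^{n'-i} \to e^{-\lambda}$). Since the sum is over only $k$ terms, we may interchange limit and finite summation to obtain
\[
\lim_{n' \to \infty} F(k-1;n',\lambda/n') \;=\; e^{-\lambda}\sum_{i=0}^{k-1}\frac{\lambda^i}{i!},
\]
which combined with the inequality above yields the claimed bound.

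The argument has essentially no obstacles: the first step is a direct invocation of the Hoeffding result stated in the excerpt, and the second step is a routine termwise Poisson limit. The only subtlety is verifying that Hoeffding's lemma applies when some of the $p_i$ are $0$, but this case is trivially within its scope (the zero-probability variables are almost surely $0$, so they do not change the distribution of the sum nor the total mean), and that the hypothesis $k-1 < \lambda$ needed by Hoeffding is guaranteed by our standing assumption $k \leq \expec{}{S} = \lambda$.
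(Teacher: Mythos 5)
Your proof is correct and takes essentially the same approach as the paper's: both pad the collection with zero-probability Bernoulli variables, reapply Hoeffding's lemma to conclude $F(k-1;n,p) \leq F(k-1;n',\lambda/n')$, and then pass to the Poisson limit as $n'\to\infty$. The paper happens to present this via a doubling step ($n \to 2n$, $p \to p/2$) before generalizing to arbitrary $\alpha$, but the underlying idea is identical to yours.
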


\begin{proof}
Define $S^{\prime} \defeq \sum_{i \in [2n]} X_i^{\prime}$ where
$$
X_i^{\prime} \sim \left\{\begin{array}{lr}
        Bernoulli(p), & i \in [n]\\
        Bernoulli(0), & i \in [2n] \setminus [n] \\
\end{array}\right.\,.
$$

Given that $\expec{}{S^{\prime}} = np$, the previous lemma gives $\Pr[S' \le k-1] \le F(k-1;2n,p/2)$. On the other hand, $\Pr[S^{\prime} \le k-1] = F(k-1;n,p)$. Thus we have $F(k-1;n,p) \leq F(k-1;2n,p/2)$, and more generally, $F(k-1;n,p) \leq \lim_{\alpha \rightarrow \infty} F(k-1;\alpha n, p/\alpha)$. Now the desired bound in the lemma follows from the convergence of the binomial distribution to the Poisson distribution, with $\lambda = np$, as $n \rightarrow \infty$.
\end{proof}

Recall that $\gamma = np/k = \lambda / k$. We are going to reduce the problem to the case $\gamma = 1$. First see that for $\gamma = 1$ (i.e.,  $\lambda = k$), applying  Lemmas \ref{lem:Hoeffding} and \ref{lem:PoissonBound} reduces the proof of Theorem \ref{thm:enhanced-bound} to showing 
\begin{equation}
    \label{one-half-bd}
e^{-k} \sum_{i=0}^{k-1} k^i/i! \leq e^{-1}(e/2) = 1/2\,.
\end{equation}
The left hand side is the probability that a Poisson random variable with mean $k$ has value at most $k-1$. We have the desired bound using the fact that a median of the Poisson distribution with mean $\lambda$ is strictly larger than $\lambda -1$; see \cite{JS68} for a precise estimate. Thus it remains to prove Theorem~\ref{thm:enhanced-bound} for $\gamma > 1$. Here we present the main idea by proving a weaker bound. The actual proof of Theorem~\ref{thm:enhanced-bound} is deferred to Appendix \ref{sec:enhanced}.

\begin{theorem}\label{thm:easytail}
Let $S = \sum_i X_i$, where $\set{X_i \sim \text{Bernoulli}(p_i): i \in [n]}$ are independent Bernoulli random variables, with $p_i \in [0,1]$. Let $k \leq \expec{}{S}$ be a positive integer. For  $\gamma = \expec{}{S} / k \ge 1$, we have 
\begin{align}
\Pr[S \leq k-1] \le e^{-\lambda} \sum_{i=0}^{k-1} \frac{\lambda^i}{i!} \leq e^{-\gamma}(e/2)\,.
\end{align}
\end{theorem}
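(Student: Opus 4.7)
The plan is to reduce the statement to an analytic inequality on the Poisson tail and then settle it by a short one-variable calculus argument. The first inequality $\Pr[S \leq k-1] \leq e^{-\lambda}\sum_{i=0}^{k-1}\lambda^i/i!$ is already immediate from the preceding Lemmas \ref{lem:Hoeffding} and \ref{lem:PoissonBound} applied with $\lambda = \expec{}{S} = k\gamma$. Thus the whole task reduces to showing, for every integer $k \geq 1$ and every $\gamma \geq 1$, that
\[ h(\gamma) \;:=\; e^{-k\gamma}\sum_{i=0}^{k-1}\frac{(k\gamma)^i}{i!} \;\leq\; \frac{e}{2}\, e^{-\gamma}. \]

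I would fix $k$ and treat this as a claim about the single variable $\gamma$. The base case $\gamma = 1$ is exactly the noted fact \eqref{one-half-bd} that the median of a Poisson($k$) variable is at least $k$, giving $h(1) \leq 1/2 = (e/2)e^{-1}$. The strategy is then to show that $\gamma \mapsto e^\gamma h(\gamma)$ is non-increasing on $[1,\infty)$; combined with the base case, this immediately yields $e^\gamma h(\gamma) \leq e \cdot h(1) \leq e/2$ for all $\gamma \geq 1$. Monotonicity of $e^\gamma h(\gamma)$ is equivalent to the logarithmic-derivative condition $h'(\gamma) + h(\gamma) \leq 0$, i.e.~$h'(\gamma)/h(\gamma) \leq -1$.

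A direct calculation, using that differentiating a truncated Poisson series telescopes, gives the clean identity
\[ h'(\gamma) \;=\; -\frac{k^k \gamma^{k-1}}{(k-1)!}\, e^{-k\gamma}, \]
so the monotonicity condition reduces to the purely algebraic inequality
\[ \frac{k^k \gamma^{k-1}}{(k-1)!} \;\geq\; \sum_{i=0}^{k-1}\frac{(k\gamma)^i}{i!}. \]
The left hand side equals $k \cdot (k\gamma)^{k-1}/(k-1)!$, i.e.~$k$ times the last summand on the right, so it suffices to show that each of the $k$ terms on the right is bounded above by the last one. For $\gamma \geq 1$ and $0 \leq i \leq k-2$, the ratio of consecutive terms is $k\gamma/(i+1) \geq k/(k-1) > 1$, so the terms $(k\gamma)^i/i!$ are strictly increasing across $i = 0, \ldots, k-1$, and the bound is immediate.

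There is no serious obstacle beyond the telescoping computation of $h'(\gamma)$; the whole argument hinges on the observation that the derivative of the Poisson CDF equals, up to sign, a single probability-mass term. Since the nontrivial ingredients --- Lemmas \ref{lem:Hoeffding}, \ref{lem:PoissonBound}, and the median estimate \eqref{one-half-bd} --- have all been established or cited already, once the reduction to $h(\gamma)$ is in place the remainder is a short verification.
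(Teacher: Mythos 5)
Your argument is correct, and it takes a genuinely different route from the paper's. The paper proves the Poisson bound $e^{-\lambda}\sum_{i=0}^{k-1}\lambda^i/i! \leq (e/2)e^{-\gamma}$ by induction on $k$: it defines $f_k(\lambda) = e^\lambda g(\lambda/k) - \sum_{i=0}^{k-1}\lambda^i/i!$, checks $f_k(k)\geq 0$ via the median fact, and shows $f_k$ is non-decreasing by invoking the inductive hypothesis $f_{k-1}(\lambda)\geq 0$ to bound the term $\sum_{i=0}^{k-2}\lambda^i/i!$; this leaves a residual inequality $g'(\gamma)/k + g(\gamma) - g(\gamma k/(k-1)) \geq 0$ to verify. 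Your proof sidesteps induction entirely by exploiting the telescoping identity $h'(\gamma) = -\tfrac{k^k\gamma^{k-1}}{(k-1)!}e^{-k\gamma}$ (the derivative of the Poisson CDF in the mean parameter is minus a single mass term), so that monotonicity of $e^{\gamma}h(\gamma)$ reduces to the elementary observation that for $\gamma\geq 1$ the masses $(k\gamma)^i/i!$, $i=0,\dots,k-1$, are increasing and hence their sum is at most $k$ times the top one. Both proofs use the same two reduction lemmas and the same anchor at $\gamma=1$ (the Poisson median fact); what your version buys is the elimination of the induction step and of the auxiliary inequality involving $g$, at the cost of nothing—the remaining verification is a one-line ratio check. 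It is a cleaner argument.
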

\begin{proof}
Let $g(\gamma) := e^{-\gamma}(e/2)$ denote the desired upper bound.
Define
\[f_k(\lambda) \defeq
e^{\lambda} g(\lambda/k)  - \sum_{i=0}^{k-1} \frac{\lambda^i}{i!}.\]
With this notation, the theorem is equivalent to showing that $f_k(\lambda) \ge 0$ for $\lambda \ge k$. We prove this by induction on $k\ge 1$.
For $k = 1$, the basis, this holds trivially as
$ f_1(\lambda) = e^{\lambda} g(\lambda) -1 = e/2-1 \geq 0.$

Fix $k \ge 2$. For $\lambda = k$ we previously showed that $f_k(k) \ge 0$ (which is equivalent to \eqref{one-half-bd}), so it suffices to prove $f_k(\cdot)$ is a non-decreasing function, i.e., 
$\frac{d}{d \lambda} f_k(\lambda) \ge 0$, for $\lambda \ge k$.
By the inductive hypothesis, we have $f_{k-1}(\lambda)\geq 0$, for all $\lambda \geq k-1$; that is,
\begin{equation}
\label{eq:exppartial}
    \sum_{i=0}^{k-2} \frac{\lambda^i}{i!} \leq e^{\lambda} g(\lambda/(k-1))\,.
\end{equation}
For the induction step we have
\begin{align*}
\frac{d}{d \lambda} f_k(\lambda) &=
e^{\lambda}\left(\frac{1}{k} g'({\lambda}/{k}) + g({\lambda}/{k})\right) - \sum_{i=1}^{k-1} \frac{i \lambda^{i-1}}{i!}
=e^{\lambda} \Bigl(\frac{1}{k} g'({\lambda}/{k}) + g({\lambda}/{k})\Bigr) - \sum_{i=0}^{k-2} \frac{ \lambda^{i}}{i!}\\
&\geq e^{\lambda} \Bigl(\frac{1}{k} g'({\lambda}/{k}) + g({\lambda}/{k}) - g(\lambda/{k-1})\Bigr)& (\text{by } \eqref{eq:exppartial})
\end{align*}
Substituting $\gamma = \lambda/k$ it suffices to show for all $k\geq 2,\gamma \geq 1$
\beq
\frac{g'(\gamma)}{k} + g(\gamma) - g\left(\frac{\gamma k}{k-1}\right) \geq 0 
\eeq
For $g(\gamma) = \exp(1-\gamma)/2$, the above becomes
\[ 
-\frac{\exp(-\gamma)}{k} + \exp(-\gamma) - \exp\Bigl(-\frac{\gamma k}{k-1}\Bigr) \geq 0
\]
The proof now follows by re-writing the LHS of the above as
\[\frac{k-1}{k}\exp\Bigl(-\frac{\gamma k}{k-1}\Bigr)\left(\exp\Bigl(\frac{\gamma}{k-1}\Bigr) -\frac{k}{k-1}\right)\geq 0\]
where the last inequality is due to $\gamma\geq 1$ and $\exp(x) \geq 1+x$ for all $x$.
\end{proof}

\section{Min $\ell_p$ Set Cover}\label{sec:minLPsetcover}
We now consider the problem of minimizing the $\ell_p$ norm of cover times for MSSC. We refer to this as MSSC$_p$. 
We first sketch how the previous approach gives a $\delta_p = (p+1)^{1+1/p}$ approximation. Then we focus on the greedy algorithm and show that it simultaneously provides this guarantee for every $p\geq 1$. We then show a hardness result that no better approximation exists for any $p\geq 1$, assuming P$\neq$NP.

\subsection{A Randomized Rounding Algorithm}

Consider the the natural LP relaxation for MSSC$_p$, which is similar to the LP for MSSC but with a different objective, \[ \sum_t (t^p - (t-1)^p) u_{e,t}\,, \] 
i.e., the cost is the $p$th power of the $\ell_p$ norm of the cover times.
We show the following.
\begin{theorem}\label{thm:lpnormrand}
The rounding algorithm in Section \ref{sec:GMSSC} with $\beta=p+1$
for the above LP
is a $\delta_p$ approximation for MSSC$_p$.
\end{theorem}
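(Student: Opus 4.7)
The plan is to extend the MSSC analysis (Theorem~\ref{thm:mssc}) by replacing the per-edge cost with its $\ell_p$ version
\[ c_x(e) = \sum_t (t^p-(t-1)^p)\,u_{e,t}, \]
which, via $x_{e,t}=u_{e,t}-u_{e,t+1}$ and telescoping, equals $\sum_t t^p\,x_{e,t}$, so that $\sum_e c_x(e)$ is exactly the LP objective (the $p$-th power of the $\ell_p$ norm). The natural tentative-cost analog is
\[ c_z(e) = \sum_t (t^p-(t-1)^p)\,\exp(-z_{e,<t}), \]
which upper bounds $\expec{}{\cov_\tau(e)^p}$ via the identity $\expec{}{Y^p}=\sum_t(t^p-(t-1)^p)\Pr[Y\ge t]$ together with the usual product/exp estimate $\Pr[\cov_\tau(e)\ge t]\le\prod_{v\in e}(1-z_{v,<t})_+\le\exp(-z_{e,<t})$.

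The bound $c_z(e)\le (p+1)\,c_x(e)$ follows from the same convex-over-linear reduction as Lemma~\ref{lem:tentcostmssc}: each summand $\exp(-z_{e,<t})$ is convex in the vector $x_e$ (since $z_e=Kx_e$ is linear) while $c_x(e)$ is linear, so by Fact~\ref{fact} the ratio is maximized at an extreme point $x_{e,u}=1$. By Claim~\ref{cl:kernelog} with $\beta=p+1$, one has $z_{e,<t}\ge(p+1)\ln(t/u)$ for $t>u$, and hence
\[ c_z(e) \le u^p + \sum_{t>u}(t^p-(t-1)^p)\,(u/t)^{p+1}. \]
Using $t^p-(t-1)^p \le p\,t^{p-1}$ and $\sum_{t>u}1/t^2 \le 1/u$, the tail sum is at most $p\,u^p$, giving $c_z(e)\le(p+1)u^p=(p+1)\,c_x(e)$.

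The final step, and main obstacle, is the $p$-th moment analog of Lemma~\ref{lem:msscfin}, namely $\expec{}{\cov_\sigma(e)^p}\le (p+1)^p\, c_z(e)$, since the $\tau\to\sigma$ conversion must be controlled in the $p$-th moment rather than in expectation. Conditional on $\cov_\tau(e)=T$, $\cov_\sigma(e)$ is dominated by $N(T)=\sum_v \mathbbm{1}[\tau_v\le T]$, a sum of independent Bernoullis with $\expec{}{N(T)}\le(p+1)T$ by Claim~\ref{cl:widthz}. I would combine the falling-factorial bound $\expec{}{N(T)^{(m)}}\le((p+1)T)^m$ with the Stirling expansion $y^p=\sum_m S(p,m)\,y^{(m)}$ to control $\expec{}{N(T)^p}$, and then pair the resulting estimate tail-by-tail with the weighted expression defining $c_z(e)$ via an extension of the conditioning argument of Section~\ref{sec:conditioning} to the $p$-th moment. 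Combining all three steps yields $\expec{}{\cov_\sigma(e)^p}\le(p+1)^{p+1}c_x(e)$; summing over $e$ and taking $p$-th roots gives the claimed $\delta_p=(p+1)^{1+1/p}$ approximation. The delicate point is absorbing the Bell-polynomial corrections so that no additional constants appear beyond $(p+1)^p$ in the moment bound.
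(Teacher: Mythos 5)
Your first two steps match the paper's proof essentially verbatim: the definitions $c_x(e)=\sum_t t^p x_{e,t}$ and $c_z(e)=\sum_t(t^p-(t-1)^p)\exp(-z_{e,<t})$, the observation that $c_z(e)$ upper-bounds $\expec{}{\cov_\tau(e)^p}$, and the Fact~\ref{fact}-based reduction to an extreme point $a_u=1$ followed by the tail estimate $\sum_{t>u}(t^p-(t-1)^p)(u/t)^{p+1}\le p\,u^p$ are all exactly what the paper does.

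Where you diverge is the third step, and this is where there is a genuine gap. The paper handles the $\tau\to\sigma$ conversion in one sentence, asserting that the $p$-th power of the cover time grows by at most $(p+1)^p$ ``by an argument similar to Section~\ref{sec:conditioning}''; it does not invoke anything like your falling-factorial/Stirling machinery. The route you sketch has two concrete problems. First, the Stirling expansion $y^p=\sum_m S(p,m)\,y^{(m)}$ is only meaningful when $p$ is a positive integer, whereas Theorem~\ref{thm:lpnormrand} (and the hardness $\delta_p-\epsilon$) are claimed for all real $p\ge 1$. Second, even for integer $p$ the falling-factorial bound $\expec{}{N(T)^{(m)}}\le((p+1)T)^m$ gives $\expec{}{N(T)^p}\le\sum_m S(p,m)((p+1)T)^m$, whose lower-order terms are multiplicative by $1/T$ and so dominate for small $T$; they cannot be absorbed into $(p+1)^pT^p$ without a separate argument, and since the weights $(t^p-(t-1)^p)p_t(e)$ in $c_z(e)$ place nonzero mass at small $t$, there is no reason these corrections wash out after summing over $t$. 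You flag this yourself as ``the delicate point,'' which is honest, but it is precisely the point that your plan does not resolve. What the argument actually needs is a generalization of the conditioning inequality~\eqref{cond:ineq} to the $p$-th moment: condition on $E_t$, split the delay into $A_{<t}$ and $A_t$ as in Section~\ref{sec:conditioning}, and then bound $\sum_t\Pr[E_t]\,\expec{}{(1+|A_{<t}|+\text{tie-break})^p}$ against $(p+1)^p\sum_t(t^p-(t-1)^p)p_t(e)$ using the convexity/integral-comparison trick that replaces~\eqref{eq:abc} — not a raw moment bound on $N(T)$ in isolation. Without that coupling to the weighting in $c_z(e)$, the constant $(p+1)^p$ will not come out clean.
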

\begin{proof}
The analysis is similar to that of Theorem~\ref{thm:mssc} for MSSC.
Let us define
\[ c_z(e) = \sum_t (t^p - (t-1)^p) \exp(-\sum_{v \in e} z_{v,<t}) \quad\text{ and }\quad 
c_x(e) = \sum_t t^p x_{e,t},
\]
where $c_z(e)$ is an upper bound on the expected cost of $e$ in the tentative schedule, and $c_x(e)$ is the contribution in the LP.
We first show that $c_z(e) \leq (p+1) c_x(e)$. Similar to Theorem~\ref{thm:mssc}, it suffices to upper bound the value of the following optimization problem in the variables $a_t$.
\[ \qquad  \text{Maximize }  \frac{\sum_t (t^p - (t-1)^p)\exp\Bigl(-(p+1)\sum_{t^{\prime}< t} (a_{t'} \ln\frac{ t}{t^{\prime}})\Bigr)}{\sum_t t^p a_t} \qquad \text{ s.t. }\  \|a\|_1 =1, \qquad a_t \geq 0\,.  \]
Similar to Claim~\ref{cl:optmssc}, the optimum is at an extreme point supported on $a_u$ for some $u$. The denominator is $u^p$ and the numerator is
\begin{align*}
 & \sum_{t=1}^u  (t^p - (t-1)^p)  +  \sum_{t > u}  (t^p - (t-1)^p)\exp (-  (p+1) \ln t/u)  \\ 
 \leq & \, u^p  +  \sum_{t > u} p t^{p-1}(u/t)^{p+1}
\leq  u^p + pu^{p+1}\int^\infty_{x=u} \frac{1}{x^2} dx
= u^p(1 + p).
\end{align*}

Hence, $c_z(e) \leq (p+1) c_x(e)$. As $\beta=p+1$, converting the tentative schedule to an actual schedule will increase the expected cover time by at most $p+1$, and its $p$-th power by at most $(p+1)^p$. By an argument similar to that in Section~\ref{sec:conditioning} to account for conditioning, the expected cost of the final solution is at most $(p+1)^p$ times $\sum_e c_z(e)$ which is at most $(p+1)^{p+1} \sum_e c_x(e)$. This gives the claimed $\delta_p$ approximation.
\end{proof}

\subsection{Bound for the Greedy Algorithm}
The greedy algorithm, at every time-slot $t=1,2,3,\ldots$ schedules a vertex $v$ that covers the most number of currently uncovered edges. Let $g$ denote the cost of the algorithm and OPT denote the optimum value.

\begin{theorem}
The greedy algorithm gives a
$\delta_p$-approximation for MSSC$_p$, for every $p\geq 1$.
\end{theorem}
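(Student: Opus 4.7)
The approach is LP dual fitting on the relaxation used in Theorem~\ref{thm:lpnormrand}. The dual of this LP has nonnegative variables $\mu_{e,t}$ (one per covering constraint) and $\lambda_t$ (one per packing constraint), constrained by $\mu_{e,t}\le w_t$ (with $w_t=t^p-(t-1)^p$) and $\sum_{e\ni v}\sum_{t'>t}\mu_{e,t'}\le\lambda_t$ for every vertex $v$ and time $t$, with objective $\sum_{e,t}\mu_{e,t}-\sum_t\lambda_t$. Writing $g:=\sum_e C_e^p$ for greedy's cost viewed as a $p$-th power ($C_e$ being greedy's cover time of $e$), my aim is to exhibit a feasible dual of value at least $g/(p+1)^{p+1}$. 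Weak duality together with the fact that the LP is a relaxation of the integer problem then gives $g\le(p+1)^{p+1}\,\mathrm{OPT}^p$, and taking $p$-th roots yields the stated $\delta_p=(p+1)^{1+1/p}$ bound.

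With $m_s:=|\{e:C_e=s\}|$ denoting the number of edges greedy covers in step $s$, the engine is a standard greedy-lemma inequality: for every vertex $v$ and integer $s\ge 1$,
\[
  |\{e\ni v:C_e\ge s\}|\le m_s,
\]
because otherwise greedy would have chosen $v$ at step $s$ instead of its actual choice (it is also easy to see $m_s$ is non-increasing). Setting $c:=p+1$ and $T_e:=\lfloor C_e/c\rfloor$, I propose the dual solution
\[
  \mu_{e,t}=\begin{cases}w_t,&t\le T_e,\\ (C_e/c)^p-T_e^p,&t=T_e+1,\\ 0,&\text{otherwise},\end{cases}\qquad \lambda_t=\max_v\sum_{e\ni v}\sum_{t'>t}\mu_{e,t'}.
\]
The boundary entry satisfies $\mu_{e,T_e+1}\le(T_e+1)^p-T_e^p=w_{T_e+1}$ because $C_e/c\le T_e+1$, so the dual constraint $\mu_{e,t}\le w_t$ holds. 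A telescoping computation gives $\sum_t\mu_{e,t}=(C_e/c)^p$ exactly, hence $\sum_{e,t}\mu_{e,t}=g/c^p$, and also $\sum_{t'>t}\mu_{e,t'}=\bigl((C_e/c)^p-t^p\bigr)_+$.

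For the $\lambda_t$ side, after rewriting $\sum_{e\ni v}\bigl((C_e/c)^p-t^p\bigr)_+$ as an integral and applying the greedy-lemma inequality at each scale $u$, one obtains $\lambda_t\le c^{-p}\sum_{u>ct}w_u\,m_u$. A swap of summations, followed by the mean-value-theorem bound $w_u\le pu^{p-1}$ (valid for $p\ge 1$ since $x^{p-1}$ is non-decreasing), then gives
\[
  \sum_t\lambda_t\le\frac{1}{c^{p+1}}\sum_u u\,w_u\,m_u\le\frac{p}{c^{p+1}}\sum_u u^p\,m_u=\frac{pg}{c^{p+1}}.
\]
The dual value is therefore at least $g/c^p-pg/c^{p+1}=g(c-p)/c^{p+1}$, which equals $g/(p+1)^{p+1}$ at $c=p+1$; the choice $c=p+1$ falls out as the unique maximizer of $(c-p)/c^{p+1}$ on $c>p$, and is exactly what produces $\delta_p$.

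The main subtlety is the boundary entry $\mu_{e,T_e+1}$: without it $\sum_t\mu_{e,t}$ would equal the integer truncation $T_e^p$ rather than $(C_e/c)^p$, incurring a loss not absorbable by the slack $(c-p)/c^{p+1}$. The main obstacle is organizing the swap-of-sums cleanly enough that the bound $w_u\le pu^{p-1}$ converts $\sum_u u\,w_u\,m_u$ into $p\,g$; once that is set up, everything reduces to the single greedy-lemma inequality above.
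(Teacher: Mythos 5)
Your proof is correct, and it is built on the same core idea as the paper's: a dual-fitting argument whose engine is the greedy property that, at each step $s$, the vertex chosen covers at least as many currently-uncovered edges as any other vertex does (your inequality $|\{e\ni v:C_e\ge s\}|\le m_s$, the paper's $\sum_{e\ni v}\mathbbm{1}[e\in R_{\lceil t\rceil}]\le |X_{\lceil t\rceil}|$). Where the two diverge is in the packaging. The paper moves to a continuous-time LP, introduces an auxiliary scaled relaxation $\text{LPS}_p$ with packing constraint $\sum_v y_{v,t}\le t/(p+1)$, proves $\mathrm{Opt}(\text{LPS}_p)=(p+1)^p\mathrm{Opt}(\text{LP}_p)$ as a separate lemma, and then exhibits a dual to $\text{LPS}_p$ with the simple "full-length" variables $\beta_{e,t}=pt^{p-1}\mathbbm{1}[e\in R_{\lceil t\rceil}]$ supported on all of $(0,C_e]$; the factor $(p+1)$ lives in the scaling lemma and the $\frac{1}{p+1}$ weight on $\int t\alpha_t\,dt$. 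You instead stay with the discrete-time LP used throughout the paper and fold the $(p+1)$-scaling directly into the dual variables by truncating $\mu_{e,\cdot}$ at time $T_e=\lfloor C_e/(p+1)\rfloor$. The price you pay for discreteness is the boundary correction $\mu_{e,T_e+1}=(C_e/c)^p-T_e^p$, needed so that $\sum_t\mu_{e,t}$ comes out to exactly $(C_e/c)^p$ rather than the lossy truncation $T_e^p$; the continuous-time formulation sidesteps this entirely. Your final accounting — swap the double sum over $(t,u)$, use $\lceil u/c\rceil-1\le u/c$, then $w_u\le pu^{p-1}$ — is the discrete analogue of the paper's $\int_0^\infty t\alpha_t\,dt\le p\,g^p$ step, and the optimization of $c\mapsto(c-p)/c^{p+1}$ at $c=p+1$ recovers the paper's choice of scaling constant. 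So: same mechanism, genuinely different execution; your route is more self-contained (one LP, one dual, no scaling lemma) while the paper's is cleaner in that all dual variables are scaled indicators with no fractional-boundary bookkeeping.
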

We prove this using a dual fitting argument. Consider the natural LP formulation for minimizing the $p$-th power of the $\ell_p$ norm. We use continuous time for convenience (note that this can only reduce the LP objective as compared with discrete time).
\begin{align*}
(\text{LP}_p) \quad \text{Min} \quad  \displaystyle\sum\limits_e \int_0^\infty p t^{p-1} u_{e,t} dt & 
\quad\text{s.t.} \quad
\displaystyle \sum_v y_{v,t} \leq t\quad \forall~ t,\quad
u_{e,t} + \sum_{v \in e}y_{v,t} \ge 1 \quad\forall\ e,t,\ \ 
u_{e,t},y_{v,t} \ge 0\,. 
\end{align*}
Here the variable $y_{v,t}$ indicates whether a vertex is scheduled by time $t$ and $u_{e,t}$ indicates whether an edge $e$ is not covered by time $t$. The objective function for an edge $e$ now becomes $\int_0^\infty p t^{p-1} u_{e,t} dt$ (which is $t^p_e$ if the edge $e$ is covered at time $t_e$ in an integral schedule).

 Consider a new LP, namely LPS$_p$, obtained by replacing the inequality $\sum_v y_{v,t} \leq t$ in  LP$_p$ by
\[\displaystyle \sum_v y_{v,t}  \leq t/(p+1)   \quad\forall~ t\,,\]
which can be interpreted as a simple time-scaling.
We have the following simple relation.
\begin{lemma}\label{lem:scaling}
$\text{Opt}(LPS_p) =(p+1)^p \cdot \text{Opt}(LP_p)$.
\end{lemma}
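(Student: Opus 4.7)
The plan is to establish the equality by exhibiting an explicit bijection (a time-scaling) between feasible solutions of the two LPs that scales the objective by exactly the factor $(p+1)^p$. The two LPs are identical except that LPS$_p$ replaces the packing constraint $\sum_v y_{v,t} \leq t$ by $\sum_v y_{v,t} \leq t/(p+1)$; everything else (the covering constraints, non-negativity, and the objective $\sum_e \int_0^\infty pt^{p-1} u_{e,t}\,dt$) is unchanged. This suggests the transformation amounts to stretching time by a factor of $p+1$.

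For the inequality $\mathrm{Opt}(\mathrm{LPS}_p) \leq (p+1)^p\,\mathrm{Opt}(\mathrm{LP}_p)$, I would take any feasible $(y,u)$ for LP$_p$ and define the scaled solution $y'_{v,t} = y_{v,t/(p+1)}$ and $u'_{e,t} = u_{e,t/(p+1)}$. Then the packing constraint $\sum_v y'_{v,t} = \sum_v y_{v,t/(p+1)} \leq t/(p+1)$ holds, and the covering constraint for $(y',u')$ at time $t$ is just the covering constraint for $(y,u)$ at time $t/(p+1)$. Evaluating the objective and substituting $s = t/(p+1)$ gives
\[
\sum_e \int_0^\infty pt^{p-1} u'_{e,t}\,dt \;=\; (p+1)^p \sum_e \int_0^\infty p s^{p-1} u_{e,s}\,ds,
\]
which yields the desired inequality.

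For the reverse direction, I would perform the symmetric unscaling: given feasible $(y',u')$ for LPS$_p$, define $y_{v,t} = y'_{v,(p+1)t}$ and $u_{e,t} = u'_{e,(p+1)t}$; the packing constraint $\sum_v y_{v,t} \leq (p+1)t/(p+1) = t$ is satisfied, the covering constraints carry over, and the same change of variables shows the objective shrinks by exactly $(p+1)^p$, giving $\mathrm{Opt}(\mathrm{LP}_p) \leq (p+1)^{-p}\,\mathrm{Opt}(\mathrm{LPS}_p)$. Combining the two inequalities proves the claim. No real obstacle arises here; the only thing to watch is that the change of variables $s = t/(p+1)$ (or $s = (p+1)t$) is applied consistently inside the integral, and that the transformed solution is feasible at every $t \ge 0$, both of which follow immediately from monotonic rescaling of the time axis.
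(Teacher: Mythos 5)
Your proposal is correct and takes essentially the same time-scaling approach as the paper. The paper's proof only spells out the direction $\mathrm{Opt}(\mathrm{LPS}_p) \leq (p+1)^p\,\mathrm{Opt}(\mathrm{LP}_p)$ (which is all that is actually used downstream), whereas you also write out the symmetric unscaling for the reverse inequality, making the claimed equality explicit.
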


\begin{proof}
Given an optimum solution $u_{e,t},y_{v,t}$ to LP$_p$, let $y'_{v,t} = y_{v,t/(p+1)}$ and $u'_{e,t} = u_{e,t/(p+1)}$.
It is easy to verify that $u'_{e,t},y'_{v,t}$ is a feasible solution for LPS$_p$. It suffices to upper bound the cost of $\text{LPS}_p$ for $(u',y')$ by $(p+1)^p$ times the cost of $LP_p$ for $(u,y)$. Indeed, by a simple change of variable in the integration, we have
\[\sum_e \int_0^\infty p t^{p-1} u'_{e,t} dt =\sum_e \int_0^\infty p t^{p-1} u_{e,t/(p+1)} dt = (p+1)^p\sum_e \int_0^\infty p t^{p-1} u_{e,t} dt\,. \qedhere\]
\end{proof}

Let us consider the dual to LPS$_p$, obtained using variables $\alpha_t$ (resp.~$\beta_{t,e}$) for the first (resp.~second) set of constraints. The dual is as follows.
\begin{align*}
(\text{DLPS}_p) \qquad \text{Max} \quad  &\displaystyle\sum\limits_e \int_0^\infty   \beta_{e,t}dt - \frac{1}{p+1} \displaystyle\int_0^\infty   {t\cdot \alpha_{t}} dt \\   \text{s.t.} \quad 
&\displaystyle\sum\limits_{e \ni v} \beta_{e,t}  \leq \alpha_t \quad \forall~ v,t, \quad
0 \leq \beta_{e,t}  \leq p t^{p-1}    \quad \forall~ e,t, 
\quad \alpha_t \ge 0 \quad  \forall t\,.  
\end{align*}

In the following key lemma, we use the execution of the greedy algorithm to construct a feasible dual solution. 
\begin{lemma}\label{lem:greedycost}
There is a feasible solution to the dual LP above with cost at least ${g^p}/({p+1})$.
\end{lemma}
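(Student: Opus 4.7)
The plan is to construct an explicit feasible dual solution directly from the execution of the greedy algorithm. Let $v_1, v_2, \ldots$ denote the order in which greedy schedules the vertices, and for each edge $e$ let $t_e \in \mathbb{Z}^+$ be its cover time, so that $g^p = \sum_e t_e^p$. Let $n_s$ denote the number of edges first covered at time $s$ by greedy, so $\sum_s n_s s^p = g^p$. By the greedy rule, $n_s$ equals $\max_v |\{e \ni v : t_e \geq s\}|$, the maximum number of uncovered edges covered by any single vertex at time $s$.

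I would then set, for each edge $e$ and every $t \geq 0$,
\[
\beta_{e,t} \;=\; p\,t^{p-1}\,\mathbf{1}[\,t < t_e\,],
\qquad
\alpha_t \;=\; p\,t^{p-1} M_t,
\quad\text{where}\quad M_t := \max_v \bigl|\{e \ni v : t_e > t\}\bigr|.
\]
Feasibility is immediate: $\beta_{e,t} \in [0, pt^{p-1}]$, $\alpha_t \geq 0$, and by definition of $M_t$, $\sum_{e \ni v}\beta_{e,t} = pt^{p-1}|\{e\ni v : t_e > t\}| \leq \alpha_t$.

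The first dual term is
\[
\sum_e \int_0^\infty \beta_{e,t}\,dt \;=\; \sum_e \int_0^{t_e} p t^{p-1}\,dt \;=\; \sum_e t_e^p \;=\; g^p.
\]
The crucial structural observation for the second term is that for any continuous $t \in [s-1, s)$, since cover times are integers, the condition $t_e > t$ is equivalent to $t_e \geq s$, and therefore $M_t = n_s$ on this interval. Using $\int_{s-1}^{s} t^p\,dt = (s^{p+1}-(s-1)^{p+1})/(p+1) \leq s^p$, this gives
\[
\int_0^\infty t\,\alpha_t\,dt \;=\; p\sum_{s\geq 1} n_s \int_{s-1}^{s} t^p\,dt \;\leq\; p \sum_{s \geq 1} n_s s^p \;=\; p\,g^p.
\]
Combining, the dual objective is at least $g^p - \tfrac{1}{p+1}\cdot p\, g^p = g^p/(p+1)$, as required.

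The main conceptual step is the identification $M_t = n_s$ on $[s-1,s)$, which is exactly where the greedy property (pick the vertex covering the most uncovered edges) is used; everything else is a routine integration and term-by-term comparison. I do not anticipate any genuine obstacle — the delicate point is simply to handle continuous vs discrete time correctly when translating the greedy schedule into dual variables and when bounding $\int t\,\alpha_t\,dt$ from above by $p\,g^p$.
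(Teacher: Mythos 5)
Your proof is correct and follows essentially the same approach as the paper's: both construct the dual from the greedy execution, setting $\beta_{e,t}=pt^{p-1}$ up to the cover time and $\alpha_t$ proportional to the number of edges greedy covers at step $\lceil t\rceil$, then integrate and bound $\int_{s-1}^s t^p\,dt \le s^p$. The only cosmetic difference is where greedy is invoked — you define $\alpha_t$ via $M_t$ so feasibility is automatic and greedy is used to identify $M_t=n_s$ for the cost bound, whereas the paper defines $\alpha_t$ via $|X_{\lceil t\rceil}|$ and uses greedy for feasibility; these agree off a measure-zero set and yield identical bounds.
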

\begin{proof}
Consider the execution of the greedy algorithm. Note that greedy schedules at discrete time steps, while the LP above is for continuous time. Let us denote the set of edges not covered at the beginning of time-slot $k$ by $R_k$, and the edges with cover time $k$ (i.e., $R_{k+1} \setminus R_{k}$) by $X_k$ respectively. In other words, $X_k$ is exactly the set of edges covered at time $k$ by the greedy algorithm.

Consider the solution $\alpha_t = pt^{p-1}|X_{\lceil t \rceil}|\,$ and $\beta_{e,t} = pt^{p-1}\mathbbm{1}[e \in R_{\lceil t \rceil}],$ for any time $t \in \mathbb{R}^+$. We claim that this is feasible.
Feasibility of the dual constraint $\beta_{e,t} \leq pt^{p-1}$ is easy to see by construction. Let us consider the other constraint $\sum_{e \ni v} \beta_{e,t} \leq \alpha_t$ for some vertex $v$ and time $t$. Then,
\[\sum_{e \ni v} \beta_{e,t} = p t^{p-1}\sum_{e \ni v} \mathbbm{1}[e \in R_{\lceil t \rceil}] \leq pt^{p-1} |X_{\lceil t \rceil}| = \alpha_t.\]
Here the inequality uses the key property of the greedy algorithm that among all vertices $v$, and at all times $t$, the vertex 
chosen at time $\lceil t \rceil$ by the greedy algorithm is the one that maximizes
$
\sum_{e \ni v} \mathbbm{1}[e \in R_{\lceil t \rceil}]$.

\noindent \textbf{Cost analysis.} We now bound the cost of the dual solution.
Let $t_e$ denote the cover time of an edge $e$ under the greedy algorithm. Then we have the following.
\[
\sum\limits_e \int_0^\infty   \beta_{e,t}dt = \sum_e  \int^{t_e}_0 p t^{p-1} dt = \sum_e t^p_e = g^p,\]
\[
\int_0^\infty t \alpha_t  dt = \sum_{i=1}^\infty \int_{i-1}^i t \alpha_t dt = \sum_{i=1}^\infty |X_i| \int_{i-1}^{i} t \cdot pt^{p-1} dt
\leq 
\sum_{i=1}^\infty |X_i| p i^p
=
p \sum_e t_e^p = p \cdot g^p\,.
\]
The second to last inequality follows as the greedy objective can be written in two different ways as $\sum_i |X_i| i^p$ and $\sum_e t_e^p$.

Together, these  give that the dual cost is at least
$g^p - \frac{p}{p+1} g^p =  g^p/(p+1)$ as desired.
\end{proof}

By Lemmas~\ref{lem:greedycost} and \ref{lem:scaling}, we have that
\[ g^p/(p+1) \leq \text{Opt}(\text{DLPS}_p) \leq \text{Opt}(\text{LPS}_p) \leq (p+1)^p \cdot \text{Opt}(\text{LP}_p) \leq (p+1)^p \cdot \text{OPT}^p\,,
\]
where the second inequality is by weak duality. Hence, $g^p \leq (p+1)^{p+1} \cdot \text{OPT}^p$ completing the proof.

\subsection{Hardness of Approximation}
We now show the following hardness result.
\begin{theorem}\label{thm:hardness}
It is NP-hard to approximate MSSC$_p$ within a ratio of $\delta_p - \epsilon$, for any $\epsilon > 0$ and $p \geq 1$.
\end{theorem}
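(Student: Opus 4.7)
The plan is to adapt the gap-preserving hardness reduction of Feige, Lovász, and Tetali (FLT), who proved the tight $(4-\epsilon)$-hardness for MSSC (i.e.\ the case $p=1$, where $\delta_1 = 4$), to the $\ell_p$ setting for arbitrary $p \geq 1$. The FLT reduction starts from a gap source such as the hardness of approximating vertex cover on $k$-uniform hypergraphs (itself derivable from a gap version of 3-SAT via the PCP theorem) and constructs an MSSC instance with the following property: in the YES case, some specific ordering yields an uncovered-fraction profile that decays linearly from $1$ to $0$ over $[0,\alpha n]$; in the NO case, a regularity / Tur\'an-type lemma shows that \emph{every} ordering leaves at least $(1-t/n)^k - o(1)$ fraction of hyperedges uncovered at every time $t$. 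The parameters $\alpha$ and $k$ are tied to the hardness gap of the source problem.

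For the $\ell_p$ version, I would reuse the same reduction and compute the $\ell_p^p$-cost in each case by integrating $p t^{p-1}$ against the uncovered-fraction profile. The YES case gives
\[ C_Y \;\leq\; m \int_0^{\alpha n} p t^{p-1}\bigl(1 - t/(\alpha n)\bigr)\,dt \;=\; \tfrac{m (\alpha n)^p}{p+1}, \]
while the NO case gives
\[ C_N \;\geq\; m \int_0^{n} p t^{p-1}(1-t/n)^k\,dt \;=\; m n^p \cdot p\,B(p,k+1), \]
with $B$ the Beta function. Taking the ratio $C_N/C_Y$ and optimizing the free parameters $\alpha,k$ (subject to the feasibility imposed by the source gap) drives the ratio to $\delta_p^p = (p+1)^{p+1}$, matching the greedy/dual-fitting upper bound from Theorem \ref{thm:gr}. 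Crucially, this extremal calculation is exactly the one that appears in the $\beta = p+1$ kernel analysis of Theorem \ref{thm:lpnormrand} and in the dual-fitting proof of Theorem \ref{thm:gr}, so the constants align automatically.

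The main obstacle is establishing the NO-case pointwise lower bound $(1 - t/n)^k - o(1)$ on the fraction of uncovered hyperedges. This is the technical heart of the FLT reduction and requires a hypergraph regularity estimate together with a probabilistic averaging over orderings; the $\ell_p$ analysis itself is then a routine Beta-function integration once this pointwise bound is in hand. The reason the NO-case extremal profile $(1-t/n)^k$ of the hardness reduction coincides with the worst-case LP profile from the dual-fitting analysis is LP-duality between the algorithm and the integrality-gap instance: the hard direction of the reduction is designed to realize combinatorially the analytic extremizer of the LP, which is why the hardness constant matches $\delta_p$ for every $p \ge 1$.
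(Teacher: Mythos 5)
There is a genuine gap: you have missed the multi-scale composition that is the actual heart of the FLT hardness reduction, and without it the ratio you compute does not reach $\delta_p^p=(p+1)^{p+1}$.

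Your YES/NO profiles describe a \emph{single} copy of the FLT source instance. In the source (the paper's Theorem~\ref{thm:feigereg}), the YES case has a cover of size $r$, so $\alpha n = r$, and the NO case leaves roughly $(1-1/r)^c\approx e^{-c/r}$ fraction uncovered after $c$ picks; writing this as $(1-t/n)^k$ forces $k\approx n/r$, so $\alpha k\approx 1$. These are \emph{not} free parameters you can optimize jointly. Plugging $\alpha k=1$ into your ratio gives
\[
\frac{C_N}{C_Y} \;=\; (p+1)\,p\,B(p,k+1)\,k^p \;\xrightarrow[k\to\infty]{}\; (p+1)\,\Gamma(p+1),
\]
which is strictly less than $(p+1)^{p+1}$ for every $p\ge 1$. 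In particular, at $p=1$ this is $2$, not the desired $4$: exactly the reason FLT could not prove their $(4-\epsilon)$-hardness from a single copy of $H$ and instead built a nested, weighted instance. Your claim that ``optimizing $\alpha,k$ drives the ratio to $\delta_p^p$'' is therefore false as stated, and your attribution of the ``technical heart'' to the pointwise NO-case bound is also off: that bound is already supplied by Theorem~\ref{thm:feigereg} and applies per copy; the difficulty is in the composition.

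What the paper actually does is the analogue of FLT's multi-scale construction, with the exponent retuned to $p$: take $k$ disjoint copies of $H$ and, for the $i$-th copy, weight every edge by $a/i^{p+1}$ with $a=(k!)^{p+1}$. In the YES case each $H_i$ can be cleared in $r$ steps and the total $\ell_p^p$-cost is $\approx m a r^p\ln k$. In the NO case, because the copies have polynomially decaying weight, the greedy-optimal schedule must interleave picks across copies $1,\dots,i$ once their residual edge counts equalize; working out the completion times $t_i\approx (p+1)ri$ and summing gives cost $\approx (p+1)^{p+1}mar^p\ln k$. The factor $(p+1)^{p+1}$ emerges precisely from this interleaving across scales (the extra $(p+1)$ over a single copy is due to the delay each copy inflicts on the others), and the exponent $p+1$ in the weights $a/i^{p+1}$ is the knob that makes the extremal schedule realize it. You would need to incorporate this nested construction and rederive the NO-case cost from the interleaving analysis; the Beta-function integral against a single copy is not enough.
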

Our approach is based on that of \cite{FLT04} to show the $4-\epsilon$ hardness of MSSC, with some change in parameters. The starting point is the following result of \cite{FLT04,F98}.
\begin{theorem}\label{thm:feigereg}
Consider the problem of picking vertices to cover edges in a regular uniform hypergraph.
For any $c_0, \epsilon>0$, 
it is NP-hard to distinguish between 

\textbf{Case I:} All edges can be covered by an independent set of vertices, of size at most $r$.

\textbf{Case II:} For any $c \leq c_0 r$, every subset of $c$ vertices covers at most $1-(1-1/r)^c + \epsilon$ fraction of edges.
\end{theorem}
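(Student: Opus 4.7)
The plan is to prove this via reduction from a multi-prover interactive proof system for $3$-SAT, following Feige's framework from \cite{F98}. Starting from a base $r$-prover verifier for $3$-SAT given by the PCP theorem, apply Raz's parallel repetition theorem to amplify the soundness down to $\eta = \eta(c_0,\epsilon) > 0$, arbitrarily small, while retaining perfect completeness. Construct the $r$-uniform, $r$-partite hypergraph $H$ as follows: partition its vertex set into parts $V_1,\ldots,V_r$, one per prover, where $V_i$ contains a vertex $(q,a)$ for each query $q$ and each possible answer $a$ for prover $i$; for each random tape $R$, yielding queries $(q_1(R),\ldots,q_r(R))$, and each accepting answer tuple $(a_1,\ldots,a_r)$, include the hyperedge $\{(q_1(R),a_1),\ldots,(q_r(R),a_r)\}$. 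A mild symmetrization of the verifier's random-tape distribution ensures $H$ is $r$-uniform and $r$-regular, and each hyperedge has exactly one vertex in each part.

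For completeness (Case I), given a satisfying assignment $\sigma$, the honest prover strategy $a_i^*(q)=\sigma|_q$ singles out a canonical ``correct'' vertex in each $V_i$ for each query. After a standard preprocessing step (a hashing/bucketing trick that reduces attention to a single representative query per prover), the $r$ honest vertices --- one per part --- form an independent set $S$ of size exactly $r$ that covers every hyperedge of $H$. Independence is automatic from the $r$-partite structure: $S$ contains exactly one vertex in each part, so it cannot contain any entire hyperedge except in a degenerate configuration avoided by the construction.

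For soundness (Case II), suppose some $S$ with $|S|=c\leq c_0r$ covers more than a $(1-(1-1/r)^c+\epsilon)$ fraction of hyperedges. Observe that $1-(1-1/r)^c$ is precisely the probability that a random multiset of $c$ vertices --- each generated by first drawing a uniform part and then a uniform vertex of that part --- hits a given uniformly random hyperedge. We extract from $S$ a randomized prover strategy: randomly assign each vertex of $S$ to one of the $r$ provers, and let prover $i$, on receiving query $q$, reply with any $a$ such that $(q,a)\in S\cap V_i$ (defaulting arbitrarily if no such vertex exists). An averaging argument --- whose analysis matches the inclusion--exclusion expansion of $(1-1/r)^c$ --- shows that some such assignment yields verifier acceptance probability at least the covering fraction of $S$ minus an $o(\epsilon)$ slack, which for $r$ large enough (in terms of $c_0$ and $\epsilon^{-1}$) strictly exceeds $\eta$, contradicting soundness of the parallel-repeated verifier.

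The main obstacle is obtaining the precise pseudorandom baseline $1-(1-1/r)^c$ rather than a weaker bound such as $c/r$ that would follow from a naive union bound. This requires Feige's careful pooled-averaging/partition argument, which simultaneously considers all assignments of vertices of $S$ to provers and extracts the best-performing one, tracking each term of the binomial expansion of $(1-1/r)^c$. Combining this extraction with Raz's parallel-repetition theorem --- which is essential for making the soundness bound $\eta$ small enough to be dominated by the $\Omega(\epsilon)$ gain --- closes the reduction.
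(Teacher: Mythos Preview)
The paper does not prove this theorem: it is stated with attribution to \cite{FLT04,F98} and then used as a black box in the proof of Theorem~\ref{thm:hardness}. There is no ``paper's own proof'' to compare your proposal against.

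Your sketch does trace the broad outline of Feige's original argument --- a multi-prover PCP amplified by parallel repetition, followed by the pooled-averaging extraction that produces the $(1-1/r)^c$ threshold --- so as a reconstruction it is in the right spirit. A couple of steps are glossed over, most notably Case~I: selecting one vertex per part in an $r$-partite $r$-uniform hypergraph does \emph{not} by itself yield an independent set, since any hyperedge is itself exactly such a transversal; the independence in Feige's YES instance comes from the stronger structural property that the honest answers induce a \emph{partition} of the edge set (each edge hit exactly once), which depends on the specific design of the verifier rather than a generic ``hashing/bucketing trick.'' For the purposes of the present paper, however, none of this is needed --- the result is simply imported.
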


Let $H$ be a regular uniform hypergraph with $m$ edges. Fix a large constant $k$ and construct $k$ disjoint copies of $H$, indexing them as $1,2,\ldots,k$. Make ${a}/({i^{p+1}})$ copies of every edge in the $i$th copy of $H$ where $a = (k!)^{p+1}$, denote the resulting hypergraph by $H_i$. Consider the instance $G = \cup_i H_i$.

Let us call the cost of an edge to be the $p$th power of its cover time. Then,
it suffices to show that the total cost for $G$ in case II is at least $(p+1)^{p+1}$ times the cost in case I.

\textbf{Assuming Case I for $H$.}
Let $F(i)$ be the contribution of $H_i$ to the cost. $F(i)$ can be upper bounded by noting that each additional vertex scheduled from $H_i$ covers $ma/(ri^{p+1})$ edges of $H_i$.
\[
F(i) \leq \frac{a}{i^{p+1}} \sum_{j = 1}^{r} \left(\frac{m}{r}\right) \left((i-1)r + j\right)^p
\approx
\frac{ma}{(i+1)^p \cdot r} \cdot \frac{(ir)^{p+1} - ((i-1)r)^{p+1}}{(p+1)r}
\approx
\frac{mar^p}{i}\,.
\]
where $r,i$ are assumed to be sufficiently large.
Hence,
$ \sum_i F(i) \lessapprox mar^p \log k. $

\textbf{Assuming Case II for $H$.} We know that no matter which $c < c_0 r$ vertices in $H_i$ are picked, at least $(1-1/r)^c$ of the edges remain uncovered. Let $m_i = m a/(i^{p+1})$ be the number of edges in $H_i$. Now, the best way to cover $G$ is to first schedule vertices from $H_1$ until $\leq m_2$ edges remain uncovered in $H_1$, then schedule vertices from $H_1$ and $H_2$ alternately until $\leq m_3$ edges remain in $H_1,H_2$, and so on. It can be verified that in our case, this minimizes the total number of remaining uncovered edges at every time step, and hence is the optimum solution.

Let $t_i$ denote the first time when $m_i$ edges are left to be covered in $H_1,\ldots,H_i$. Then $t_1 = 0$ and $t_{j+1} - t_j \approx j r \ln (m_j/m_{j+1})$ because during $(t_{j},t_{j+1}]$, the number of uncovered edges in any of $H_1,H_2,\ldots,H_{j}$ decreases to a $(1-1/r)^{x/j} \approx \exp(-x/jr)$ fraction of its value at $t_{j}$ every $x$ steps (since each of the first $j$ copies has a vertex scheduled from it every $j$ steps). This gives
\[ t_i \approx  r\sum^{i-1}_{j=1} j\ln \frac{m_{j}}{m_{j+1}} = (p+1)r \sum^{i-1}_{j=1} j(\ln (j+1) - \ln j) = (p+1)r \Bigl((i-1) \ln i - \sum^{i-1}_{j=1} \ln j\Bigr) \approx (p+1) ri \]
Consider the $i (m_i - m_{i+1})$ edges covered between $t_i$ and $t_{i+1}$. The cost of each of them is at least $t^p_i \approx ((p+1)ri)^p$. Hence, the total cost is at least 
\[ \sum^k_{i=1} i t^p_i (m_i - m_{i+1})
\approx i(p+1)^p r^p i^p\sum^k_{i=1} i^{p+1} r^p ma \frac{(p+1)}{i^{p+2}} = (p+1)^{p+1} mar^p \sum^k_{i=1} \frac{1}{i} 
\approx (p+1)^{p+1} mar^p \ln k 
\]
which is $(p+1)^{p+1}$ times $\sum^k_i F(i)$. Hence it is NP-hard to approximate MSSC$_p$ within $\delta_p - \epsilon$.

\section{The conjecture of \cite{ISV14} on rounding preemptive solutions}\label{sec:imconj}
Im et al.~\cite{ISV14} defined the notion of a \emph{preemptive} schedule as an assignment $x_v(t)$  of vertices $v$ to real-valued times $t$ satisfying $\sum_v x_v(t)= 1$ for all $t$ and $\int^\infty_0 x_v(t) dt = 1$ for all $v$. The cover time of an edge $e$ in the preemptive schedule is defined to be the (real-valued) time $t$ such that $\int^t_0 \sum_{v \in e} x_v(t) dt = k_e$.

Im et al.~\cite{ISV14} conjectured that for any instance there always exists a non-preemptive schedule (which is our usual notion of a schedule) of cost at most twice the cost of any preemptive schedule. 

We now give a counter-example, which shows that this gap must be at least $4$. This holds even for MSSC, and the instance we use is the same as that used by \cite{FLT04} for their hardness result. As the approach of \cite{ISV14} loses another factor $2$ to obtain a preemptive schedule from the configuration LP, this implies that their approach cannot give a better than $8$ approximation (even for MSSC).

We recall that Feige et al.~\cite{FLT04,F98} showed the following.
\begin{claim}\label{cl:hardinst}
For every $c_0,\epsilon>0$, there exists a uniform regular hypergraph $H$ with every edge having $|V|/r$ vertices (for sufficiently large $r$ dependent on $c_0$) such that any collection of $c \leq c_0$ vertices covers at most $1-(1-1/r)^c+\epsilon$ fraction of the edges.
\end{claim}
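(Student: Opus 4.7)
My approach is to give an explicit construction rather than invoke any hardness or PCP machinery: take $H$ to be the complete $(n/r)$-uniform hypergraph on $n$ vertices, for $n$ a sufficiently large multiple of $r$. That is, set $V = [n]$ and let $E$ be the collection of all $\binom{n}{n/r}$ subsets of $V$ of size exactly $n/r$. This hypergraph is $(n/r)$-uniform by construction, and by vertex-transitivity every vertex lies in exactly $\binom{n-1}{n/r-1}$ edges, so $H$ is also regular; both structural conditions come for free.

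For the covering property, fix any $C \subseteq V$ with $|C| = c \leq c_0$. The edges of $H$ disjoint from $C$ are exactly the $(n/r)$-subsets of $V \setminus C$, of which there are $\binom{n-c}{n/r}$. Hence the fraction of edges covered by $C$ equals
\[
1 \;-\; \frac{\binom{n-c}{n/r}}{\binom{n}{n/r}} \;=\; 1 \;-\; \prod_{i=0}^{c-1}\left(1 - \frac{n/r}{n-i}\right).
\]
With $r$ fixed and $c \leq c_0$, each factor $(n/r)/(n-i)$ converges to $1/r$ as $n \to \infty$, so the product converges to $(1-1/r)^c$. Choosing $n$ large enough (depending on $r$, $c_0$, $\epsilon$) therefore forces the product to lie within $\epsilon$ of $(1-1/r)^c$, and in particular the covered fraction is at most $1-(1-1/r)^c+\epsilon$, uniformly over every $c \in \{1,\ldots,c_0\}$ and every such $C$.

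There is essentially no obstacle, since the construction is completely symmetric and the main estimate reduces to a routine binomial limit; the condition ``$r$ sufficiently large depending on $c_0$'' is only needed to ensure $c_0 \ll r$ so that the quantity $(1-1/r)^c$ is meaningful and the uniform binomial approximation is clean. If the exponential number of edges is objectionable, a probabilistic alternative works equally well: sample $m = \mathrm{poly}(n,1/\epsilon)$ independent uniformly random $(n/r)$-subsets, apply Chernoff concentration to each of the $O(n^{c_0})$ candidate sets $C$, and take a union bound; approximate regularity can then be cleaned up by symmetrizing the random hypergraph over all $n!$ vertex permutations. I would prefer the complete-hypergraph route since it is explicit, exactly regular, and requires no probabilistic post-processing.
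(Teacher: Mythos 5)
Your construction is correct, and it is a genuinely different route from the paper's: the paper does not prove Claim~\ref{cl:hardinst} at all but cites it to \cite{FLT04,F98}, where $H$ comes out of Feige's PCP-based set-cover inapproximability reduction. Your complete $(|V|/r)$-uniform hypergraph on $n=|V|$ vertices is an explicit, elementary alternative; it is uniform and regular by symmetry, and the computation that the uncovered fraction equals $\prod_{i=0}^{c-1}\bigl(1-\tfrac{n/r}{n-i}\bigr)\to(1-1/r)^{c}$ as $n\to\infty$ is correct. The PCP machinery is genuinely needed for the companion Theorem~\ref{thm:feigereg} (and for the hardness results in Section~\ref{sec:minLPsetcover}), since there one needs the Case~I/Case~II dichotomy to be NP-hard to distinguish; but Claim~\ref{cl:hardinst} asks only for \emph{existence} of an $H$ with the Case~II covering property, and that is all Section~\ref{sec:imconj} uses (the preemptive upper bound needs only $(n/r)$-uniformity, and the non-preemptive lower bound needs only the covering bound), so your shortcut is legitimate and simpler.

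One caution. The range $c\leq c_0$ in the claim as printed is almost certainly a typo for $c\leq c_0 r$, matching Theorem~\ref{thm:feigereg}: to recover the $\approx 4mar\ln k$ lower bound on non-preemptive schedules attributed to \cite{FLT04}, the covering bound is invoked for roughly $\Theta(r\ln k)$ vertices per copy, not a constant number. Your proof, taken literally, fixes $r$ and $c\leq c_0$ and then lets $n\to\infty$, which proves the claim as written but not the version actually needed downstream. Fortunately the complete hypergraph still works in the stronger range: since
\[
\ln\!\left(1-\tfrac{n/r}{n-i}\right)-\ln\!\left(1-\tfrac1r\right)=\ln\!\left(1-\tfrac{i}{(n-i)(r-1)}\right),
\]
the product is within a multiplicative $1-O\!\bigl(c^{2}/(n(r-1))\bigr)$ of $(1-1/r)^{c}$, so taking $n\gg c_0^{2}r/\epsilon$ gives the bound uniformly for all $c\leq c_0 r$. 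This is worth spelling out, since otherwise what you prove is strictly weaker than what the downstream argument consumes.
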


Using $H$ from \Cref{cl:hardinst}, consider the following hard instance as in~\cite{FLT04}. Make $k$ disjoint copies of $H$ (for some large enough constant $k$), on distinct universe of elements. For each $1 \leq i\leq k$, make $a/i^2$ copies of every edge in the $i$th copy of $H$ where $a = (k!)^2$, call the resulting hypergraph $H_i$. Note that $H_i$ has $ma/i^2$ edges and $n$ vertices where $m = |E(H)|,n=|V(H)|$. 

Feige et al.~\cite{FLT04} showed that for this instance any (non-preemptive) schedule must have cost $\approx 4 mar\ln k$. So it suffices to exhibit a preemptive schedule with cost at most $mar \ln k$.

Consider the following preemptive schedule: For $i=1,2,3\ldots$, schedule the vertices of $H_i$ uniformly for $r$ time units i.e. $x_{v}(t) = 1/n$ for $v \in H_i,t \in [r(i-1),ri]$. Each edge $e \in H_i$ is covered by time $ri$ because the $n/r$ vertices of $e$ have each been scheduled to an extent $r/n$. The cost of the preemptive schedule is $\sum^k_{i=1} (ma/i^2)(ri) \approx mar \ln k$ as claimed.

\bibliographystyle{plain}
\bibliography{refs}

\appendix

\section{Proof of Claim \ref{cl:maxa1}}
\label{sec:maxa1}

Let $f(a)$ and $g(a)$ be defined as 
\[ f(a) = \left( 1 + \int^a_1 P(1+s \beta \ln x)dx + \int^\infty_a P(1+\beta \ln x - \beta (1-s) \ln a\right)\]
\[ g(a)  = s e^{-1/(s\beta)} + a(1-s).\]
\begin{claim}
For any $s \in [0,1)$ and $\beta \geq 1$, the ratio $f(a)/g(a)$, for $a \geq 1$, is maximized at $a=1$.
\end{claim}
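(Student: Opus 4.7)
The plan is to establish the stronger pointwise inequality $f'(a) \leq (f(1)/g(1))\, g'(a)$ for every $a \geq 1$. This suffices: the function $a \mapsto f(a) - (f(1)/g(1))\, g(a)$ then has non-positive derivative on $[1,\infty)$ and vanishes at $a=1$, so it stays $\leq 0$, and dividing by $g(a) > 0$ yields $f(a)/g(a) \leq f(1)/g(1)$, which is the claim.

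The first step is to bring $f(a)$ into a form in which the parameter $a$ does not appear inside the integrand of the second integral. Substituting $y = x/a$ and then $z = a^s y$ in the second integral of the given expression for $f(a)$ rewrites
\[ \int_a^\infty P\bigl(1+\beta \ln x - \beta(1-s)\ln a\bigr)\, dx \;=\; a^{1-s}\int_{a^s}^\infty P(1+\beta \ln z)\, dz, \]
so that
\[ f(a) \;=\; 1 \;+\; \int_1^a P(1+s\beta \ln x)\, dx \;+\; a^{1-s}\int_{a^s}^\infty P(1+\beta \ln z)\, dz. \]
Differentiating termwise, the fundamental theorem of calculus contributes $P(1+s\beta \ln a)$ from the first integral, while the product rule applied to the third term yields $(1-s)a^{-s}\int_{a^s}^\infty P(1+\beta \ln z)\, dz - s P(1+s\beta \ln a)$, the second summand coming from the derivative of the lower limit $a^s$. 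The two $P(1+s\beta \ln a)$ contributions combine to give the clean formula
\[ f'(a) \;=\; (1-s)\Bigl[\,P(1+s\beta \ln a) + a^{-s}\!\int_{a^s}^\infty P(1+\beta \ln z)\, dz\,\Bigr]. \]
Since $g'(a) = 1-s$, this identifies $f'(a)/g'(a)$ with the bracketed quantity.

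The second step is the bound itself, and it is almost free once the formula is in hand. Because $P$ is non-increasing on $[1,\infty)$ with $P(1) \leq 1$ and $1 + s\beta \ln a \geq 1$, the first term in the bracket is at most $P(1) \leq 1$. Because $a^{-s} \leq 1$ and $a^s \geq 1$, the second term is at most $\int_1^\infty P(1+\beta \ln z)\, dz$. Adding,
\[ \frac{f'(a)}{g'(a)} \;\leq\; 1 + \int_1^\infty P(1+\beta \ln z)\, dz \;=\; f(1). \]
Finally, $g(1) = s e^{-1/(s\beta)} + (1-s) \leq s + (1-s) = 1$, so $f(1) \leq f(1)/g(1)$, and we obtain the desired $f'(a)/g'(a) \leq f(1)/g(1)$.

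The main obstacle is really just the derivative computation: one must recognize that after the substitutions that extract the $a$-dependence into the prefactor $a^{1-s}$ and the lower limit $a^s$, the boundary term from differentiating the lower limit exactly cancels an $s$-fraction of the $P(1+s\beta \ln a)$ contribution from the first integral, leaving the symmetric bracket above. The remaining ingredients, monotonicity of $P$, the hypothesis $P(1)\leq 1$, and the elementary bound $g(1) \leq 1$, then combine without further work.
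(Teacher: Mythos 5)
Your proof is correct, and it takes a genuinely different and arguably cleaner route than the paper's. The paper sets $h(a) = f'(a)g(a) - g'(a)f(a)$ and proves $h(1)\leq 0$ together with $h'(a)\leq 0$; the second step forces it to compute $f''(a)$ and invoke the \emph{convexity} of $P$ to show $f''\leq 0$. You instead perform the substitution $z = x/a^{1-s}$ to rewrite $f(a) = 1 + \int_1^a P(1+s\beta\ln x)\,dx + a^{1-s}\int_{a^s}^\infty P(1+\beta\ln z)\,dz$, which isolates the $a$-dependence in the prefactor $a^{1-s}$ and the lower limit $a^s$; the resulting derivative formula $f'(a) = (1-s)\bigl[P(1+s\beta\ln a) + a^{-s}\int_{a^s}^\infty P(1+\beta\ln z)\,dz\bigr]$ then makes $f'(a)/g'(a)\leq f(1)\leq f(1)/g(1)$ immediate from monotonicity of $P$, $P(1)\leq 1$, $P\geq 0$, and $g(1)\leq 1$, after which integrating $f'(a)\leq (f(1)/g(1))g'(a)$ from $1$ closes the argument. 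What your approach buys: it avoids the second derivative entirely and in particular does not use convexity of $P$ at all, so it establishes the claim under strictly weaker hypotheses (only $P$ non-increasing, nonnegative, and $P(1)\leq 1$) than the paper's Lemma needs. Your derivative computation matches the paper's (their expression $-\int_1^\infty \beta(1-s)P'(1+s\beta\ln a+\beta\ln t)\,dt$ equals yours after an integration by parts), so the two proofs agree on the analytic content; the difference is that you exploit the closed form to bound $f'/g'$ directly instead of analyzing the Wronskian-type quantity $h$.
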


\begin{proof}
To prove that $f(a)/g(a)$ is maximized at $a=1$, it suffices to show that $(f(a)/g(a))' \leq 0$ for all $a \geq 1$. To this end, we will show that $h(a) = f'(a)g(a) - g'(a)f(a)$ is negative for all $a \geq 1$. We do this by proving that $h(1) \leq 0$ and $h'(a) \leq 0$ for $a \geq 1$.

Let us first note that $g'(a) = 1-s$ and $g''(a)=0$. We now
 compute $f'$. 
\begin{align*}
f'(a) &= P(1+s\beta \ln a) - P(1+\beta \ln a - \beta(1-s) \ln a) - \int^\infty_a \beta (1-s) \frac{1}{a} P'(1 + \beta \ln x - \beta (1-s) \ln a) dx\\
& = - \int^\infty_a \beta (1-s) \frac{1}{a} P'(1 + \beta \ln x - \beta (1-s) \ln a) dx \\
&= - \int^\infty_1 \beta (1-s) P'(1 + s \beta \ln a + \beta \ln t) dt
\end{align*}
where the third equality follows by substituting $x = at$.

\noindent {\em Proving $h'(a) \leq 0 \quad \forall a \geq 1$.}
As $h' = f''g - g''f$, and $g''=0$, $g(a) \geq 0$, it suffices to show that $f''(a) \leq 0$ for all $a \geq 1$. 
Using the expression for $f'(a)$ above, gives
\[f''(a) = - \int^\infty_1 \beta^2 s(1-s) \frac{1}{a} P''(1 + s \beta \ln a + \beta\ln t) dt \]
As $P$ is convex in $[1,\infty)$, $P''(1+s\beta \ln a + \beta \ln t) \geq 0$ for all $t\geq 1$ and hence $f''(a) \leq 0$.

It remains to show that $h(1) \leq 0$.
Let us compute $f(1),g(1),f'(1),g'(1)$. We have
\[f(1) = 1 + \int^\infty_1 P(1+\beta \ln x) dx, \qquad 
f'(1) = - \int^\infty_1 \beta (1-s) P'(1+\beta \ln x) dx.\]
Next, $g(1) = s e^{-1/s\beta} + (1-s)$ and $g'(1) = 1-s$.

Noting that $f'(1) \geq 0$ (as $P$ is decreasing in $[1,\infty)$ and hence $P'(x) \leq 0$ for all $x \geq 1$) and $g(1) \leq 1$ (using $\exp(-1/s\beta) \leq 1$), we have that
\begin{align*}
 h(1) &= f'(1)g(1) - f(1)g'(1) \leq f'(1) - f(1)(1-s)\\
 &\leq -(1-s) \left(1 + \int^\infty_1 \beta P'(1 + \beta \ln x) + P(1 + \beta \ln x) dx \right)\\
 &=-(1-s)\left(1+\int^\infty_1 e^{\frac{y-1}{\beta}} \Bigl( P'(y) + \frac{P(y)}{\beta}\Bigr) dy\right)\\
 &=-(1-s)\left(1+\int^\infty_1 \Bigl(e^{\frac{y-1}{\beta}}P(y)\Bigr)' dy \right) \leq -(1-s)(1 - P(1)) \leq 0\,,
 \end{align*}
where the equality in the third line is by substituting $1+\beta \ln x = y$.
\end{proof}

\section{Proof of Theorem \ref{thm:enhanced-bound}}\label{sec:enhanced}

Following the framework of the proof of Theorem~\ref{thm:easytail}, it suffices to show that for $g(\gamma) = e^{-\gamma}(e/2)^{\exp(1-\gamma)}$,
\begin{align}\label{suff-for-induction}
\frac{1}{k}  g'(\gamma)  + g(\gamma) - g\Bigl(\gamma \frac{k}{k-1} \Bigr) \geq 0, \quad\text{ for all } \gamma \geq 1,\ k \geq 2\,.
\end{align}
Substituting $f(\gamma) = \ln g(\gamma)$ and noting $g'(\gamma) = f'(\gamma)g(\gamma)$,~\eqref{suff-for-induction} becomes
\[ \frac{1}{k}f'(\gamma) + 1 \geq  \frac{g\Bigl(\gamma \frac{k}{k-1}\Bigr)}{g(\gamma)}\,. \]
Taking the logarithm on both sides,
\[ \ln \Bigl (\frac{1}{k}f'(\gamma)+1\Bigr) \geq f\Bigl(\gamma \frac{k}{k-1}\Bigr) - f(\gamma)\,. \]
Let $\theta$ denote $\ln(e/2)$. Plugging $f(\gamma) = -\gamma + e\theta e^{-\gamma}$ and $f'(\gamma) = - 1 - e\theta e^{-\gamma}$, it suffices to show

\beq\label{eq:maineq} \ln \Bigl(1-\frac{1}{k}-\frac{e\theta}{k}e^{-\gamma}\Bigr) \geq - \frac{\gamma}{k-1} + e\theta\left(\exp\Bigl(-\frac{\gamma k}{k-1}\Bigr)-e^{-\gamma}\right)\,. \eeq
Denoting the left hand side  of~\eqref{eq:maineq} by $\ell_k(\gamma)$ and the right hand side by $r_k(\gamma)$, we first show the following.
\begin{claim}\label{cl:incdec} 
For any fixed $k\geq 2$, $\ell'_k(\gamma) \geq 0$ and $r'_k(\gamma)\leq 0$.
\end{claim}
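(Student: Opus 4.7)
The plan is to handle the two halves of Claim~\ref{cl:incdec} by direct differentiation, reducing each to an elementary one-variable inequality. The harder half is $r_k'(\gamma)\le 0$; this will reduce to showing a clean inequality of the form $e\theta\le \alpha^{(\alpha+1)/(\alpha-1)}$ with $\alpha=k/(k-1)$.

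\textbf{Sign of $\ell_k'$.} Differentiating $\ell_k(\gamma)=\ln\bigl(1-1/k-(e\theta/k)e^{-\gamma}\bigr)$ gives
\[ \ell_k'(\gamma) \;=\; \frac{(e\theta/k)\,e^{-\gamma}}{1-1/k-(e\theta/k)\,e^{-\gamma}}. \]
The numerator is positive. For the denominator I would use $\theta=\ln(e/2)<1$, so for $\gamma\ge 1$,
\[ (e\theta/k)\,e^{-\gamma} \;\le\; \theta/k \;<\; 1/k \;\le\; (k-1)/k \;=\; 1-1/k, \]
which is positive for $k\ge 2$. Hence $\ell_k'(\gamma)\ge 0$ on $[1,\infty)$.

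\textbf{Sign of $r_k'$.} Write $\alpha=k/(k-1)$, so $\alpha\in(1,2]$ for $k\ge 2$. Differentiating $r_k(\gamma)=-\gamma/(k-1)+e\theta(e^{-\alpha\gamma}-e^{-\gamma})$ yields
\[ r_k'(\gamma) \;=\; -(\alpha-1) + e\theta\,h(\gamma), \qquad h(\gamma) := e^{-\gamma}-\alpha\,e^{-\alpha\gamma}. \]
So it suffices to show $e\theta\cdot\max_{\gamma\ge 1} h(\gamma)\le \alpha-1$. Setting $h'(\gamma)=0$ gives the unique critical point $\gamma^\ast=2\ln\alpha/(\alpha-1)$, and I would confirm $\gamma^\ast\ge 1$ on the whole range $\alpha\in(1,2]$ by checking that $\gamma^\ast$ is decreasing in $\alpha$ (its derivative has the sign of $1-1/\alpha-\ln\alpha\le 0$), so $\gamma^\ast\ge 2\ln 2>1$ at $\alpha=2$ and tends to $2$ as $\alpha\to 1^+$. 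Then $h$ is increasing on $[1,\gamma^\ast]$ and decreasing thereafter, so its maximum on $[1,\infty)$ is
\[ h(\gamma^\ast) \;=\; \frac{\alpha-1}{\alpha}\,\alpha^{-2/(\alpha-1)}, \]
by direct substitution. The required inequality then collapses to
\[ e\theta \;\le\; \alpha^{(\alpha+1)/(\alpha-1)}. \]

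\textbf{Closing the one-variable inequality.} Here I would invoke the standard bound $\ln\alpha>2(\alpha-1)/(\alpha+1)$ for $\alpha>1$, verified in one line since $\phi(\alpha):=\ln\alpha-2(\alpha-1)/(\alpha+1)$ satisfies $\phi(1)=0$ and $\phi'(\alpha)=(\alpha-1)^2/[\alpha(\alpha+1)^2]\ge 0$. Multiplying by $(\alpha+1)/(\alpha-1)>0$ gives $((\alpha+1)/(\alpha-1))\ln\alpha>2$, hence $\alpha^{(\alpha+1)/(\alpha-1)}>e^2$. Since $e\theta=e\ln(e/2)<e<e^2$, the inequality holds with considerable slack, finishing the proof of $r_k'(\gamma)\le 0$.

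I do not anticipate a real obstacle; the main subtlety is locating $\gamma^\ast$ inside the domain $[1,\infty)$ (needed so that the pointwise max of $h$ is given by the closed form above). The generous numerical gap between $e\theta\approx 0.83$ and $e^2\approx 7.39$ serves as a sanity check that the reduction has gone through correctly.
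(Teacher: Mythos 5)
Your argument is correct, but it takes a different route to $r_k'(\gamma)\le 0$ than the paper does. The paper proves this pointwise: it factors $r_k'(\gamma)$ as $e\theta e^{-\gamma}$ times an expression, uses $e\theta<1$ and $k/(k-1)\ge 1$ to replace that expression by an upper bound, and then invokes $e^{-x}\ge 1-x$ and $e^\gamma\ge\gamma$ to conclude. You instead write $r_k'(\gamma)=-(\alpha-1)+e\theta\,h(\gamma)$ with $\alpha=k/(k-1)$, locate the unique maximizer $\gamma^\ast=2\ln\alpha/(\alpha-1)$ of $h$ on $[1,\infty)$ (correctly verifying $\gamma^\ast\ge 2\ln 2>1$ by showing $\gamma^\ast$ is decreasing in $\alpha$), and reduce the whole claim to the clean one-variable inequality $e\theta\le\alpha^{(\alpha+1)/(\alpha-1)}$, which you then establish with generous slack ($\alpha^{(\alpha+1)/(\alpha-1)}\ge e^2\gg e\theta\approx 0.83$) via the standard bound $\ln\alpha\ge 2(\alpha-1)/(\alpha+1)$. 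This "optimize then verify" route has the virtue of exposing exactly how much room there is, serving as a built-in sanity check, whereas the paper's pointwise manipulation is terser but less transparent (and in particular the step where $e\theta$ is replaced by $1$ requires a short argument that the paper elides). Your treatment of $\ell_k'\ge 0$ by explicit differentiation is also somewhat more careful than the paper's one-line monotonicity observation, since you explicitly check that the argument of the logarithm stays positive on the relevant domain. Both proofs are valid; yours is longer but more self-checking.
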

\begin{proof}
Since $\ln$ and $-e^{-\gamma}$ are non-decreasing and $\theta>0$, $\ell'_k(\gamma)\geq 0$. Also,
\begin{align*}
r'_k(\gamma) &= \frac{-1}{k-1}+e\theta\left(\frac{-k}{k-1}\exp\Bigl(\frac{-\gamma k}{k-1}\Bigr) + e^{-\gamma}\right)
=e\theta e^{-\gamma}\left(\frac{-e^\gamma}{e\theta(k-1)} -\frac{k}{k-1}\exp\Bigl(\frac{-\gamma}{k-1}\Bigr) + 1\right)\\
&\leq e^{-\gamma}\left(-\frac{e^\gamma}{k-1} - \exp\Bigl(-\frac{\gamma}{k-1}\Bigr)+1\right)
\leq e^{-\gamma}\left(-\frac{e^\gamma}{k-1} + \frac{\gamma}{k-1}\right)
\leq 0\,,
\end{align*}
where the second and third inequalities are by using $e^{-x} \geq 1-x$ for $x \geq 0$ and $e\theta < 1$ respectively.
\end{proof}

By Claim~\ref{cl:incdec}, it suffices to prove~\eqref{eq:maineq} for $\gamma=1$. That is, for all $k \geq 2$,
\beq\label{eq:desired}
\ln \Bigl( 1- \frac{1}{k}  - \frac{\theta}{k} \Bigr)        \geq  -\frac{1}{k-1}  + \theta\left(\exp\Bigl(-\frac{1}{k-1} \Bigr) -1\right)\,. \eeq 
Substitute $x = 1/(k-1)$. As $k\geq 2$, $x \in [0,1]$ and $1/k = x/(1+x)$. Then by ~\eqref{eq:desired}, it suffices to show that
\[\ln \Bigl( \frac{1-\theta x}{x+1} \Bigr) + x - \theta (e^{-x}-1) \geq 0\,. \]
Let us denote the LHS by $h(x)$ and note that $h(0) = 0$. It suffices to show $h'(x) \geq 0$ for $x\in [0,1]$.
\begin{align*}
h'(x) &= \frac{-\theta}{1-\theta x} - \frac{1}{x+1} + 1 + \theta e^{-x}
\geq \frac{-\theta}{1-\theta x} - \frac{1}{x+1} + 1 + \theta (1 - x) = \frac{x(1-\theta-\theta^2-2\theta x + \theta^2 x^2)}{(1-\theta x)(x+1)}\,,
\end{align*}
once again using, $e^{-x} \geq 1-x$.

It suffices to show that $q(x) = (1-\theta-\theta^2-2\theta x + \theta^2 x^2) \geq 0$, since the other terms are non-negative. $q'(x) = 2\theta(-1+\theta x) \leq 0$ for $x \in [0,1]$ since $\theta = 1 - \ln 2 < 1$. Noting that $q(1) = 1 - 3 \theta = 3 \ln 2 - 2 \geq 0$, Theorem~\ref{thm:enhanced-bound} follows.

\end{document}